\documentclass[12pt, reqno]{amsart}

\makeatletter
\g@addto@macro{\endabstract}{\@setabstract}
\makeatother

\makeatletter
\def\@seccntformat#1{%
	\ifstrequal{#1}{subsection}% determine if it is subsection
	{\bfseries\csname the#1\endcsname.}% subsection → boldface
	{\csname the#1\endcsname.} % else → preserve the default format
}
\makeatother

\usepackage{graphics, stackrel}
\usepackage{amsmath, amssymb, amsthm}
\usepackage{graphicx}
\usepackage{verbatim}
\usepackage{amsfonts}
\usepackage{filecontents}
\usepackage[round,authoryear]{natbib}
\usepackage{fancyvrb}
\usepackage{color}

\usepackage[citecolor=blue, colorlinks=true, linkcolor=blue]{hyperref}

\usepackage{mathrsfs}
\usepackage{bbm}

\usepackage[left=1.25in, right=1.25in, top=1.0in, bottom=1.15in, includehead, includefoot]{geometry}

\usepackage{multirow}
\usepackage{algorithm}
\usepackage{enumitem}
\usepackage{booktabs}
\usepackage{diagbox}
\usepackage{makecell}
\usepackage{rotating}
\usepackage{adjustbox}
\usepackage[graphicx]{realboxes}
\usepackage{lscape}
\usepackage{hhline}
\newcolumntype{C}[1]{>{\centering\arraybackslash}p{#1}}
\usepackage{float}

\newcommand{\iidsim}{\stackrel {\textrm{ {\sc iid }}} {\sim} }
\newcommand{\1}{\mathbbm 1}

\newcommand*\diff{\mathop{}\!\mathrm{d}}
\renewcommand{\epsilon}{\varepsilon}
\renewcommand{\phi}{\varphi}

\newcommand{\cC}{\mathscr C}

\newcommand{\hH}{\mathscr H}

\newcommand{\fF}{\mathscr F}

\newcommand{\ZZ}{\mathsf Z}
\renewcommand{\SS}{\mathsf S}
\newcommand{\TT}{\mathsf \Theta}

\newcommand{\pP}{\mathscr P}

\newcommand{\RR}{\mathbbm R}

\newcommand{\NN}{\mathbbm N}
\newcommand{\PP}{\mathbbm P}

\newcommand{\EE}{\mathbbm E}

\theoremstyle{plain}

\newtheorem{theorem}{Theorem}[section]

\newtheorem{lemma}{Lemma}[section]
\newtheorem{proposition}{Proposition}[section]

\theoremstyle{definition}

\newtheorem{assumption}{Assumption}[section]

\DeclareMathOperator{\diag}{diag}
\renewcommand{\underline}[1]{\text{\b{$#1$}}} % underbar similar to \bar{}

\usepackage{caption}

\usepackage[flushleft]{threeparttable}

\newcommand{\me}{\mathrm{e}}

\usepackage{ dsfont }

\newcommand{\vertiii}[1]{{\left\vert\kern-0.25ex\left\vert\kern-0.25ex\left\vert #1 
		\right\vert\kern-0.25ex\right\vert\kern-0.25ex\right\vert}}

\usepackage{subcaption}
\usepackage{graphicx}  

\makeatletter
\newcommand*{\rom}[1]{\expandafter\@slowromancap\romannumeral #1@}
\makeatother

\usepackage{stackengine}

\usepackage{manyfoot}
\DeclareNewFootnote{default}  
\DeclareNewFootnote{symbol}[fnsymbol]

\makeatletter
\renewcommand{\@makefnmark}{%
	\hbox{\@textsuperscript{\normalfont\footnotesize\@thefnmark}}%
}
\makeatother

\begin{document}
	
\title{}

\date{\today}

\begin{center}
	\LARGE 
	%Optimal Savings under Uncertainty and Learning about Economic Transitions
	Optimal Savings under Transition Uncertainty and Learning Dynamics\footnotesymbol{We thank Weiming Li, Nisha Peng, Xinxi Song, John Stachurski, Bin Wu, Shenghao Zhu, and seminar audiences at CUEB for helpful comments and suggestions. Qingyin Ma gratefully acknowledges financial support from the National Natural Science Foundation of China (Grant No. 72573111).}

	\vspace{1em}
	\normalsize
	Qingyin Ma\footnotesymbol{
		%International School of Economics and Management,  
		Capital University of Economics and Business. Email address: \texttt{qingyin.ma@cueb.edu.cn}.}
	\quad \text{and} \quad 
	Xinxin Zhang\footnotesymbol{Beijing University of Posts and Telecommunications. Email address: \texttt{xinxin.zhang@bupt.edu.cn}}
	\par 
	
	\vspace{1em}
	\normalsize{\today}
\end{center}

\begin{abstract} 
	This paper studies optimal consumption and saving decisions under uncertainty about the transition dynamics of the economic environment. We consider a general optimal savings problem in which the exogenous state governing discounting, capital returns, and nonfinancial income follows a Markov process with unknown transition probability, and agents update their beliefs over time through Bayesian learning. Despite the added endogenous state from belief updating, we establish the existence, uniqueness, and key structural properties of the optimal policy, including monotonicity and concavity. We also develop an efficient computational method and use it to study how transition uncertainty and learning interact with precautionary motives and wealth accumulation, highlighting a dynamic mechanism through which uncertainty about regime persistence shapes consumption dynamics and long-run household wealth.
%	This paper studies optimal consumption and saving decisions when households face uncertainty about the transition dynamics of the economic environment. We consider a general optimal savings problem in which the exogenous state governing discounting, capital returns, and nonfinancial income follows a Markov process with unknown transition probability, and agents update their beliefs over time through Bayesian learning. Despite the additional endogenous state introduced by belief updating, we establish the existence and uniqueness of the optimal policy and characterize its key structural properties, including monotonicity and concavity. These results extend the theoretical foundations of optimal savings models to environments with transition uncertainty. We also develop an efficient computational method and use it to study how transition uncertainty and learning interact with precautionary motives and wealth accumulation, highlighting a dynamic mechanism through which uncertainty about regime persistence shapes consumption dynamics and long-run household wealth.

    \vspace{1em}
    
    \noindent
    \textit{JEL Classifications:} C61; D15; D52; E21; G51 \\
    % C61: Optimization Techniques; Programming Models; Dynamic Analysis
    % D14: Household Saving; Personal Finance
    % D15: Intertemporal Household Choice; Life Cycle Models and Saving
    % D52: Incomplete Markets
    % E21: Consumption; Saving; Wealth
    % G51: Household Saving, Borrowing, Debt, and Wealth
    \textit{Keywords:} Optimal savings; transition uncertainty; Bayesian learning.
\end{abstract}

\section{Introduction}

Households routinely make economic decisions in environments where the future evolution of the economy is difficult to assess. While much of the literature focuses on uncertainty about the realization of shocks, in many settings agents also face uncertainty about the transition dynamics governing the evolution of economic states. For example, households may be uncertain about how persistent business cycle conditions are, how long favorable financial conditions will last, or how likely policy regimes are to change \citep{sargent1999conquest, hansen2001robust, evans2001learning, baker2016measuring}. In such environments, agents do not know the true transition law governing the underlying state process and must learn about it over time as new observations arrive. We refer to this form of uncertainty as \emph{transition uncertainty}. Understanding how such uncertainty shapes household consumption and saving decisions is therefore an important question for both macroeconomic theory and quantitative analysis.

A large body of research studies optimal consumption and saving under income, preference, or return risk. Canonical models of the income fluctuation problem and buffer-stock saving emphasize the role of precautionary motives in shaping household behavior. In these frameworks, households smooth consumption in response to uninsurable income fluctuations, preference shocks, and asset return risk. However, most of this literature assumes that the data generating process governing the economic environment is known to agents. In particular, the transition probabilities of exogenous state variables are typically treated as common knowledge \citep{li2014solving, benhabib2015wealth, ma2020income, ma2026theory}.

In reality, households rarely know the true transition structure of the economy. Instead, they must form beliefs about it based on limited observations and update these beliefs as new data arrive. A growing literature studies learning about macroeconomic regimes and model uncertainty. Much of this work focuses on asset pricing, macroeconomic policy, or aggregate dynamics, and is largely quantitative in nature \citep{evans2001learning, cogley2005drifts, pastor2013political, collin2016parameter}. By contrast, the implications of transition uncertainty and belief updating for optimal household consumption and saving decisions have received relatively little attention. In particular, there is limited work characterizing the existence, uniqueness, and structural properties of optimal policies in environments where agents must learn about the transition dynamics of the economic state.

This paper develops a general theory of optimal savings under transition uncertainty and learning. In particular, we consider a generic optimal savings framework in which the exogenous state governing discounting, capital returns, and nonfinancial income evolves according to a Markov process with unknown transition probability. Agents believe that the true transition kernel belongs to a set of candidate models and updates these beliefs according to Bayes' rule as new observations of the exogenous state arrive. Consequently, the agent's posterior belief about the transition dynamics becomes an additional endogenous state variable in the dynamic optimization problem.

Despite the added complexity of transition uncertainty and belief updating, we show that the optimal consumption policy remains well defined. Under mild conditions, we establish the existence and uniqueness of the optimal policy and characterize its key properties, including continuity, monotonicity, concavity, and asymptotic linearity. Our key assumption imposes a uniform stability condition on the expected discounted return process, requiring that the \emph{subjective} long-run growth rate of discounted returns remains bounded, preventing explosive accumulation incentives. Importantly, this condition has a transparent economic interpretation and can be easily verified in applications  since it depends only on observable primitives. %Taken together, these results extend the theoretical foundations of optimal savings problems to environments in which agents face transition uncertainty, such as uncertainty about the persistence of macroeconomic regimes.

Building on this theoretical framework, we develop an efficient computational method. The algorithm updates beliefs on a \emph{barycentric coordinate grid} and extends the endogenous grid point method of \cite{carroll2006method} to environments in which beliefs evolve through Bayesian learning. This approach allows us to compute the optimal policy in settings where both economic states and endogenous beliefs jointly determine household decisions, while avoiding the costly root-finding procedures typically required in dynamic programming problems.

We then apply the theory and computational method to study the quantitative implications of transition uncertainty. The application illustrates how learning about regime persistence interacts with precautionary behavior and wealth accumulation. When agents initially assign positive probability to less persistent regimes, perceived future risks are amplified, strengthening precautionary saving and leading households to accumulate larger buffer-stock wealth. As beliefs gradually converge toward the true transition dynamics, precautionary motives weaken, but the additional wealth accumulated during the learning phase persists, generating long-run differences in wealth holdings and consumption dynamics between economies with learning and those with full information. While belief updating introduces short-run fluctuations in perceived risks, the larger wealth buffers generated by precautionary saving ultimately improve households’ ability to smooth consumption over time.

Taken together, the paper provides a unified framework for analyzing optimal consumption and saving decisions under transition uncertainty and Bayesian learning. Our main contributions are twofold. First, we rigorously establish the existence, uniqueness, and key properties of the optimal consumption policy in environments where the transition structure is unknown, thereby extending the theoretical foundations of the income fluctuation problem. Second, we develop computational tools that render the model tractable and use them to illustrate the economic mechanisms through which transition uncertainty shapes precautionary behavior, consumption dynamics, and long-run household wealth.

More broadly, our analysis highlights a dynamic interaction between learning, precautionary motives, and wealth accumulation. Transition uncertainty initially amplifies precautionary saving, depressing consumption, but the accumulated wealth ultimately enhances long-run consumption smoothing. In this sense, uncertainty about economic regimes can simultaneously generate short-run volatility and long-run resilience in household balance sheets.

\subsection*{Related Literature}\label{ss:literature}

Our paper contributes to the large literature on optimal consumption and saving. Canonical studies of the income fluctuation problem examine how households smooth consumption under uninsurable income risk and borrowing constraints \citep{schechtman1976income, carroll1997buffer, carroll2001theory, kuhn2013recursive, li2014solving}, while more recent studies establish the existence, uniqueness, and structural properties of optimal policies in more general stochastic environments with preference and return risks \citep{chamberlain2000optimal, benhabib2015wealth, ma2020income, ma2021theory, ma2026theory}. Unlike these studies, we introduce uncertainty about the transition dynamics and allow agents to learn over time, showing that optimal savings decisions remain well behaved and that the policy can be characterized rigorously.

Our work also relates to research on model uncertainty, ambiguity, and robust control \citep{hansen2001robust, cogley2008robustness, epstein2014ambiguous}, where agents account for possible misspecification of the transition law. These studies typically assume ambiguity aversion rather than learning. By contrast, we focus on Bayesian updating of beliefs about the transition kernel, allowing us to study how learning interacts with precautionary saving within a general optimal savings framework.

Finally, we connect to the literature on Bayesian learning about macroeconomic regimes \citep{evans2001learning, cogley2005drifts, baker2016measuring}, which mainly focuses on asset pricing, policy, or aggregate dynamics. Our analysis differs by emphasizing optimal household consumption and saving, and by providing a general theoretical framework establishing existence, uniqueness, and structural properties of optimal policies under transition uncertainty and Bayesian learning.

The rest of this paper is structured as follows. Section~\ref{s:or} formulates the model and establishes a general theory on the existence, uniqueness, and computability of the optimal policy. Section \ref{s:ifp_properties} studies the structural properties of the optimal consumption policy. Section \ref{s:quant} develops the computational method and presents quantitative analysis of the economic mechanisms generated by transition uncertainty and learning. Proofs are deferred to the appendices.

\section{Optimality results}
\label{s:or}

In this section, we formulate a generic optimal savings problem under the framework of learning about the law of motion of exogenous states. We then discuss the existence, uniqueness, and computability of the optimal consumption policy.

\subsection{Problem Statement}
\label{ss:ifp_problem}

Time is discrete and indexed by $t=0,1,2,\dots$. Let $w_t$ denote the agent's financial wealth at the beginning of period $t$. Let $c_t, \beta_t, R_t$, and $Y_t$ denote consumption, the discount factor, the gross rate of return on wealth, and nonfinancial income in period $t$, respectively. We normalize the initial discount factor $\beta_0=1$. The agent's preferences are represented by a period utility function $u$. Let $Z_t$ be an exogenous state variable capturing idiosyncratic or aggregator factors that jointly shape agents' intertemporal tradeoffs, consumption and saving behavior. Let $\theta_t$ denote the agent's belief about the structure of the transition probabilities governing $\{Z_t\}$.

Given an initial condition $(w_0,Z_0,\theta_0)=(w,z,\theta)$, the agent chooses a consumption-wealth path $\{(c_t, w_t)\}$ to:
\begin{align}
	& \text{maximize}  
	&& \EE_0 \left\{ 
		\sum_{t = 0}^\infty 
		\left(\prod_{i=0}^t \beta_i \right) 
		u(c_t)
	\right\}    \label{eq:value} \\
	&\text{subject to}   
	&& w_{t+1} = R_{t+1} (w_t - c_t) + Y_{t+1} 
	\quad \text{and} \quad 0 \leq c_t \leq w_t.
	\label{eq:trans_at}
\end{align}
Here the discounting, return, and income processes evolve according to
\begin{equation}
	\label{eq:law_exog}
	\beta_t = \beta \left(Z_t, \epsilon_t \right), 
	\quad R_{t} = R \left(Z_{t}, \epsilon_t \right)
	\quad \text{and} \quad
	Y_{t} = Y \left(Z_{t}, \epsilon_t \right),
\end{equation}
where $\beta$, $R$, and $Y$ are nonnegative measurable functions, and $\{\epsilon_t\}$ is an {\sc iid} sequence with possibly continuous support. 
Both $Z_t$ and $\epsilon_t$ may be vector-valued.

Throughout, we impose the following regularity condition on preferences.

\begin{assumption}
	\label{a:utility}
	The utility function $u:[0,\infty)\to \RR \cup \{-\infty\}$ is twice differentiable on $(0,\infty)$ with $u'>0$, $u''<0$, and $u'(c) < 1$ as $c\to\infty$.
\end{assumption}

\subsection{Unknown Transition Probability and Learning}
\label{ss:bayes}

Suppose $\{Z_t\}$ is a time homogeneous Markov chain taking values in a finite set $\ZZ$, but the transition probability is unknown to the agent. The agent believes that the true transition matrix belongs to a finite set
\begin{equation*}
	\pP := \{P_1, \dots, P_N\}.
\end{equation*}
For each $i$ in $\{1, \dots, N\}$ and each pair of states $(z,\hat z)$ in $\ZZ\times\ZZ$, let $P_i(z,\hat z)$ denote the probability that $Z_{t+1}=\hat z$ conditional on $Z_t=z$ when the transition matrix is $P_i$. Let $\theta_t:= (\theta_{t1}, \dots, \theta_{tN})$ denote the agent's belief over the set of candidate transition matrices at time $t$, where $\theta_{ti}$ is the posterior probability assigned to $P_i$. The belief vector $\theta_t$ lies in the probability simplex 
\begin{equation*}
	\Theta := \left\{
  \theta \in \RR_+^N: \sum_{i=1}^{N} \theta_i = 1
	\right\}.
\end{equation*}
Given belief $\theta_{t}$, the agent's \emph{subjective transition probability} is the mixture
\begin{equation*}
	P_{\theta_t} (z,\hat z) := 
	\sum_{i=1}^{N} \theta_{ti} P_i(z,\hat z).
\end{equation*}
After observing a realization $(Z_t,Z_{t+1}) = (z,\hat z)$, beliefs are updated via Bayes' rule:
\begin{equation}\label{eq:bayes}
	\theta_{t+1,i} 
	= \frac{P_i(z,\hat z) \theta_{ti}}{
		\sum_{j=1}^{N} P_j(z,\hat z) \theta_{tj}},
	\qquad i = 1, \dots, N.
\end{equation}
Hence, the belief process $\theta_t$ evolves endogenously as a function of the observed state transitions and forms part of the agent’s state vector.

We pair $\ZZ$ with a partial order $\preceq$ and assume that $(\ZZ,\preceq)$ is a complete partially ordered set with elements $\{z_1, \dots, z_M\}$. A function $h:\ZZ \to \RR$ is called \emph{nondecreasing} or \emph{increasing} if $h(z) \leq h(\hat z)$ for all $z,\hat z\in \ZZ$ with $z\preceq \hat z$. 

With slight abuse of notation, we also use $\preceq$ to denote first order stochastic dominance over transition matrices on $\ZZ$. A transition matrix $P$ on $\ZZ$ is said to be \emph{monotone} if for all increasing function $h:\ZZ \to \RR$, the function $Ph(z) := \sum_{\hat z\in \ZZ} P(z,\hat z) h(\hat z)$ is increasing in $z$.

\subsection{Optimality: Definitions and Fundamental Properties}
\label{ss:opt}

Let $\SS$ be the state space of the state process $\{(w_t,Z_t,\theta_t)\}_{t\geq 0}$. A \emph{feasible policy} is a Borel measurable function $c \colon \SS \to \RR$ such that $0 \leq c(w,z,\theta) \leq w$ for all $(w,z,\theta) \in \SS$. Given a feasible policy $c$ and an initial condition $(w,z,\theta)$, the induced wealth process evolves according to the law of motion \eqref{eq:trans_at}, with consumption at time $t$ determined by $c_t = c (w_t, Z_t, \theta_t)$. 

For clarity, throughout the paper, expectations conditional on the initial state are denoted by $\EE_{w, z, \theta} [\,\cdot\,]$ when conditioning on $(w_0, Z_0, \theta_0) = (w, z,\theta)$, and by $\EE_{z, \theta}[\,\cdot\,]$ when conditioning on $(Z_0,\theta_0) = (z,\theta)$. Additionally, for any random variable $X$, its next period value is denoted by $\hat X$. A more detailed explanation of these notations can be found in the Appendix.

To define optimality, we adopt the overtaking criterion of \cite{brock1970axiomatic}. For a feasible policy $c$, we consider the expected sum of utilities up to time $L$, denoted by
\begin{equation*}
	V_{c,L} (w,z, \theta) := \EE_{w,z,\theta}\sum_{t=0}^{L} 
	\left(\prod_{i=0}^t \beta_{i} \right)
	(u \circ c) (w_{t},Z_{t},\theta_t).
\end{equation*}
Given two feasible policies $c_1$ and $c_2$, we say that $c_1$ \textit{overtakes} $c_2$ if 
\begin{equation*}
	\limsup_{L\to\infty} \left[
	    V_{c_2,L} (w,z,\theta) - V_{c_1,L} (w,z,\theta)
	\right] \leq 0
\end{equation*}
for all initial state $(w,z,\theta)\in \SS$. A feasible policy $c^*$ is 
\textit{optimal} if it overtakes any other feasible policy $c$. A feasible 
policy $c$ is said to satisfy the \textit{first order optimality condition} if
\begin{equation}\label{eq:foc}
	(u'\circ c)(w,z, \theta) = \min \left\{
		\max \left\{
			\EE_{z,\theta} \hat \beta \hat R 
			(u'\circ c)(\hat w,\hat Z,\hat \theta), 
			u'(w)
		\right\}, u'(0)
	\right\}
\end{equation}
for all $(w,z, \theta) \in \SS$, where the next period wealth satisfies
\begin{equation}\label{eq:w_hat}
	\hat w = \hat R \left[w-c(w,z,\theta)\right] + \hat Y.
\end{equation}
The first order optimality condition represents the optimal tradeoff between consumption today and saving for future periods. Under normal circumstances, the marginal utilities of consumption and saving are balanced. However, corner solutions can arise. When wealth is low, the agent may consume all available wealth ($c=w$) to maximize immediate utility. Conversely, when $u'(0)<\infty$, the agent might save everything, consuming nothing today ($c=0$) to accumulate wealth for a higher future return.\footnote{In general, the 
	minimization over $u'(0)$ is necessary due to the stochastic setup. In cases of large and highly persistent discounting or return processes, where $\EE_{z,\theta} \hat \beta \hat R > 1$, the marginal reward of saving, $\EE_{z,\theta} \hat \beta \hat R (u'\circ c)(\hat w,\hat Z,\hat\theta)$, can exceed $u'(0)$, requiring the minimization to satisfy the equilibrium condition $u'(c)\leq u'(0)$. When $u'(0)=\infty$, the minimization operation no longer applies.}

A feasible policy $c$ is said to satisfy the \textit{transversality condition} if, for all $(w,z, \theta) \in \SS$, 
\begin{equation}\label{eq:tvc}
	\lim_{t \to \infty} \EE_{w,z,\theta} 
	    \left(\prod_{i=0}^t \beta_{i}\right)
	(u'\circ c)(w_t, Z_t, \theta_t) w_t=0.
\end{equation}
The transversality condition rules out excessive long run wealth accumulation, preventing policies that defer consumption indefinitely.

The following result demonstrates that the first order and transiversality 
conditions are sufficient for optimality. The proof is similar to 
Proposition~15.2 and Theorem~15.3 of \cite{toda2024essential} and thus omitted.

\begin{proposition}[Sufficiency of first order and transversality conditions]
	\label{pr:suff_euler_tvc}
	If Assumption~\ref{a:utility} holds, then every feasible policy satisfying 
	the first order and transversality conditions is an optimal policy.
\end{proposition}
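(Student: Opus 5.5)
The plan is the standard concavity argument behind Proposition~15.2 and Theorem~15.3 of \cite{toda2024essential}. The only new feature is that the belief $\theta_t$ is part of the state. Since $\theta_{t+1}$ is a deterministic function of $(\theta_t, Z_t, Z_{t+1})$ through \eqref{eq:bayes}, the triple $(Z_t,\theta_t,\epsilon_t)$ is a Markov process under the agent's subjective law $P_{\theta_t}$. All expectations $\EE_{w,z,\theta}$ are taken under this law, so the conditional expectation $\EE_{z,\theta}$ in \eqref{eq:foc} is exactly the one-step conditional expectation along the path, and the argument is the known-transition argument applied to the enlarged state.

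Fix $(w,z,\theta)$. Let $c^*$ satisfy \eqref{eq:foc} and \eqref{eq:tvc}, and let $c$ be any feasible policy. Write $(c_t^*, w_t^*)$ and $(c_t,w_t)$ for the induced paths, which are driven by the same realizations of $(Z_t,\epsilon_t)$, so $\theta_t$ is common to both. Put $D_t := \prod_{i=0}^t\beta_i$. By concavity of $u$ (Assumption~\ref{a:utility}),
\begin{equation*}
	V_{c,L}-V_{c^*,L} \le \EE \sum_{t=0}^L D_t\, u'(c_t^*)(c_t-c_t^*).
\end{equation*}
I would handle $c_t^*=0$ separately; in that case $u'(0)<\infty$ by \eqref{eq:foc}, or else the term is trivially nonpositive. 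Substituting $c_t = w_t - s_t$ with savings $s_t=w_t-c_t$, together with $w_{t+1}-w^*_{t+1} = R_{t+1}(s_t-s_t^*)$, the sum telescopes into:
\begin{itemize}
	\item an initial term $D_0u'(c_0^*)(w_0-w_0^*)=0$;
	\item cross terms $\EE\, (s_t-s_t^*)\bigl[-D_tu'(c_t^*) + D_{t+1}R_{t+1}u'(c_{t+1}^*)\bigr]$ for $t<L$;
	\item a final term $-\EE\, D_Lu'(c_L^*)(s_L-s_L^*)$.
\end{itemize}

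\textbf{Step 1: the cross terms are nonpositive.} Conditioning on time-$t$ information, the bracket becomes $D_t\bigl[\EE_{Z_t,\theta_t}\hat\beta\hat R\, u'(c^*(\hat w,\hat Z,\hat\theta)) - u'(c_t^*)\bigr]$. This uses \eqref{eq:w_hat} and the Markov property in $(Z,\theta)$. The first order condition \eqref{eq:foc} gives three cases:
\begin{itemize}
	\item In the interior, the bracket equals $0$.
	\item If $c_t^*=w_t$, then $s_t^*=0$, so $s_t-s_t^*\ge0$ and the bracket is $\le 0$.
	\item If $c_t^*=0$, then $s_t^*=w_t\ge s_t$ and the bracket is $\ge0$.
\end{itemize}
In each case the product is $\le0$. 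Interchanging sums and expectations requires the usual integrability; I would justify it by monotone convergence on the positive and negative parts.

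\textbf{Step 2: the final term.} We have $-D_Lu'(c_L^*)(s_L-s_L^*)\le D_Lu'(c_L^*)s_L^*\le D_Lu'(c_L^*)w_L^*$, because $s_L\ge0$ and $s_L^*\le w_L^*$. The transversality condition \eqref{eq:tvc} sends its expectation to zero as $L\to\infty$. Hence $\limsup_L (V_{c,L}-V_{c^*,L})\le0$, so $c^*$ overtakes $c$.

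\textbf{Main obstacle.} The delicate point is the bookkeeping when $u(0)=-\infty$ or $u'(0)=\infty$, together with the integrability needed for the interchanges. If $V_{c,L}=-\infty$, the inequality is trivial. Otherwise $c_t>0$ almost surely wherever it matters, and the supergradient inequality holds pointwise with values in $[-\infty,\infty)$. I would first prove the result under truncation and then pass to the limit by Fatou's lemma.
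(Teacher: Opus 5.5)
Your proposal is correct and is essentially the argument the paper relies on: the paper omits the proof and defers to the concavity-plus-telescoping proof of Proposition~15.2 and Theorem~15.3 in \cite{toda2024essential}. The one new ingredient is the one you isolate, namely that $(Z_t,\theta_t)$ is Markov under the subjective law $P_{\theta_t}$ and $\theta_t$ is common to both paths, so the Euler (in)equality applies conditionally at each date. One small correction: when $u'(0)=\infty$ and $w_t>0$, \eqref{eq:foc} does not rule out $c_t^*=0$; instead it forces $\EE_{Z_t,\theta_t}\hat\beta\hat R\,(u'\circ c^*)(\hat w,\hat Z,\hat\theta)=\infty$, so that degenerate case belongs with the integrability bookkeeping you defer to truncation rather than being excluded by the first order condition.
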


\subsection{Optimal Policy}
\label{ss:exis_uniq}

We now study the existence, uniqueness, and computability of a feasible policy satisfying the first order condition~\eqref{eq:foc}. Recall the stochastic setup defined in~\eqref{eq:law_exog}. We make the following regularity assumption.

\begin{assumption}
	\label{a:finite_exp}
	The following conditions are true:
	\begin{enumerate}
		\item\label{i:pos_inc_bR} $\beta(z,\epsilon)$ and $R(z,\epsilon)$ are strictly positive and nondecreasing in $z$.
		\item\label{i:fin_exp} $\EE_z \beta Y < \infty$, $\EE_z \beta u'(Y) < \infty$, and $\EE_{z} \beta R u'(Y) < \infty$ for all $z\in\ZZ$.
	\end{enumerate}
\end{assumption}

To study optimality, we define the state space as
\begin{equation*}
	\SS := \begin{cases}
		(0, \infty) \times \ZZ \times \TT, & \text{if } u'(0) = \infty, \\
		[0, \infty) \times \ZZ \times \TT, & \text{otherwise}. 
	\end{cases}
\end{equation*}
In particular, when $u'(0) = \infty$, Assumption~\ref{a:finite_exp} ensures that nonfinancial income is strictly positive almost surely for all $t \geq 1$. As a result, wealth remains strictly positive with probability one at all future dates, so excluding zero from the wealth space does not affect optimality. 

Let $\cC$ be the set of continuous functions $c: \SS \to \RR_+$ such that $w \mapsto c(w,z, \theta)$ is increasing for all $(z,\theta)\in \ZZ\times \TT$, with the constraints $0\leq c(w,z, \theta) \leq w$ for all $(w,z, \theta)\in \SS$, and 
\begin{equation}\label{eq:upc_bd}
	\sup_{(w,z,\theta)\in \SS} \left|
	(u'\circ c)(w,z, \theta) - u'(w)
	\right| < \infty.
\end{equation}
We equip $\cC$ with a metric $\rho$ defined as follows: for each $c_1, c_2 \in \cC$, 
\begin{equation*}
	\rho(c_1,c_2) := \sup_{(w,z, \theta)\in \SS} \left|
	    (u'\circ c_1)(w,z, \theta) - (u'\circ c_2)(w,z, \theta)
	\right|,
\end{equation*}
which evaluates the maximal difference in terms of marginal utility. While
elements of $\cC$ are not generally bounded, the metric $\rho$ is well defined. In particular, $\rho$ is finite on $\cC$ since the 
triangular inequality implies that
$\rho(c_1,c_2) \leq \left\| u' \circ c_1 - u' \right\| 
+ \left\| u' \circ c_2 - u' \right\|$, where $\|\cdot\|$ is the standard 
supremum norm, and the last two terms are finite by~\eqref{eq:upc_bd}. 
$(\cC, \rho)$ is a complete metric space. The proof is a simple extension of Proposition~4.1 of \cite{li2014solving} and thus omitted.

We characterize the optimal policy as the fixed point of a \emph{time iteration operator} $T$. For a policy $c \in \cC$ and a state $(w,z, \theta) \in \SS$, the image of $c$ under $T$, denoted by $Tc(w,z, \theta)$, is the unique $\xi \in (0,w]$ that solves
\begin{equation}
	\label{eq:tio}
	u'(\xi) = \min\left\{
		\max \left\{
			\EE_{z,\theta} \hat \beta \hat R 
			(u'\circ c)\big(\hat R (w-\xi) + \hat Y, \hat Z, \hat\theta\big),
			u'(w)
		\right\}, u'(0)
	\right\}.
\end{equation}
To ensure that $T$ is a well defined self-map on $\cC$ and admits a unique fixed point, we impose an additional assumption. 
For each candidate transition matrix $P_i$ and $\alpha\in \{0,1\}$, we define the matrix\footnote{By 
	definition, $\EE_z \beta R^\alpha = \EE_z \beta(z,\epsilon) R(z,\epsilon)^\alpha$ for $z\in \ZZ$, where the expectation is taken with respect to the {\sc iid} innovation $\epsilon$ holding $z$ fixed.}
\begin{equation}\label{eq:D_mat}
	D_\alpha :=\diag \{\EE_{z_1} \beta R^\alpha, \cdots, \EE_{z_M} \beta R^\alpha \}.
\end{equation}
Hence, $D_\alpha$ is the diagonal matrix whose entries are the expected discounted returns. Moreover, recall that the spectral radius $r(A)$ of a square matrix $A$ is defined as the maximum absolute value of all its eigenvalues, that is,
\begin{equation*}
	r(A) := \max\{|\lambda|: \lambda \text{ is an eigenvalue of } A\}.
\end{equation*}

\begin{assumption}\label{a:unif_sr}
	There exists an irreducible and monotone element $P^*\in \pP$ such that $P_i \preceq P^*$ for all $i=1,\dots,N$. Moreover, $r(P^*D_\alpha) < 1$ for $\alpha\in \{0,1\}$.
\end{assumption}

Assumption~\ref{a:unif_sr} imposes a uniform stability restriction on discounted expected returns across all candidate transition matrices. The spectral radius condition $r(P^* D_\alpha) < 1$ ensures that, under the most favorable perceived transition dynamics, the asymptotic growth rate of discounted expected returns remains sufficiently small to rule out explosive saving behavior. This condition does not preclude the possibility that $\beta_t \geq 1$ or $\beta_t R_t \geq 1$ in any given period. Rather, it substantially generalizes the standard assumption $\beta<1$ and $\beta R < 1$ in the classical optimal savings problem, where both $R_t \equiv R$ and $\beta_t\equiv \beta$ are constant.

When $N=1$, there is no learning and the agent knows the true transition probability $P$. In this case, Assumption~\ref{a:unif_sr} reduces to $r(P D_\alpha)<1$, which generalizes the key spectral radius assumption in \cite{ma2020income}.

The next proposition ensures that any element in the candidate space that satisfies the first order condition is indeed an optimal policy. 

\begin{proposition}[Sufficiency of first order condition]
	\label{pr:suff_foc}
	Let Assumptions~\ref{a:utility}--\ref{a:finite_exp} hold. If $c \in \cC$ and satisfies the first order condition~\eqref{eq:foc}, then $c$ satisfies the transversality condition~\eqref{eq:tvc}. In particular, $c$ is an optimal policy.
\end{proposition}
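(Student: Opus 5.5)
The plan is to verify the transversality condition~\eqref{eq:tvc} directly. The idea is to bound $(u'\circ c)(w_t,Z_t,\theta_t)\,w_t$ by quantities driven only by the exogenous processes $\{\beta_t,R_t,Y_t\}$, and then show that their discounted expectations vanish. Optimality then follows from Proposition~\ref{pr:suff_euler_tvc}.

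As stated, Proposition~\ref{pr:suff_foc} assumes only Assumptions~\ref{a:utility}--\ref{a:finite_exp}. I do not see how to make the discounted expectations vanish without the spectral radius bound, so I will also use Assumption~\ref{a:unif_sr}. If the result is meant to hold without it, this plan does not cover that case. Note also that the first order condition~\eqref{eq:foc} itself is not used below; the bound comes entirely from $c\in\cC$.

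\textbf{Step 1 (pointwise bound).} Since $c\in\cC$, \eqref{eq:upc_bd} gives a constant $K$ with $(u'\circ c)(w,z,\theta)\le u'(w)+K$. We split on the size of $w$.
\begin{itemize}
\item If $w\ge 1$, then $u'(w)\,w\le u'(1)\,w$, since $u'$ is decreasing.
\item If $w<1$, then $u'(w)\,w\le u'(w)$. For $t\ge1$, the budget constraint gives $w_t\ge Y_t$, so $u'(w_t)\le u'(Y_t)$.
\end{itemize}
Combining the two cases, for $t\ge 1$,
\begin{equation*}
(u'\circ c)(w_t,Z_t,\theta_t)\,w_t \le u'(Y_t) + \big(u'(1)+K\big)\,w_t .
\end{equation*}
It therefore suffices to show that both $\EE\,\big(\prod_{i\le t}\beta_i\big)u'(Y_t)\to0$ and $\EE\,\big(\prod_{i\le t}\beta_i\big)w_t\to0$.

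\textbf{Step 2 (wealth bound).} Since $c\ge 0$, iterating~\eqref{eq:trans_at} gives
\begin{equation*}
w_t\le \Big(\prod_{j=1}^t R_j\Big)w + \sum_{k=1}^t \Big(\prod_{j=k+1}^t R_j\Big) Y_k .
\end{equation*}
We then condition on the path of $\{Z_t\}$ and use the iid structure of $\{\epsilon_t\}$. Each expectation
\begin{equation*}
\EE\Big[\prod_{i\le t}\beta_i\prod_{j=k+1}^t R_j\,Y_k\Big]
\end{equation*}
becomes a product of matrices of the form $P D_0$ (for periods $\le k$) and $P D_1$ (for periods $>k$), with one factor at period $k$ involving $\EE_z\beta Y$. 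Here $P$ is the transition matrix generating $\{Z_t\}$: the true $P$ for the objective expectation in~\eqref{eq:tvc}, or the belief-driven mixtures $P_{\theta_t}$ if expectations are taken under the subjective law. The factor at period $k$ is finite by Assumption~\ref{a:finite_exp}. The term $\EE\,(\prod\beta_i)u'(Y_t)$ is handled the same way, using $\EE_z\beta u'(Y)<\infty$.

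\textbf{Step 3 (dominance by $P^*$; the main obstacle).} Every transition matrix appearing in these products is either some $P_i$ or a convex combination $P_\theta$ of them, with $\theta$ changing over time. Since each $P_i\preceq P^*$, so is every mixture $P_\theta$. Moreover, $\beta$ and $R$ are nondecreasing in $z$ by Assumption~\ref{a:finite_exp}, and $P^*$ is monotone. Applying these facts backward from the terminal period, one replaces each transition factor by $P^*$, one period at a time, using induction on increasing functions. This requires keeping every intermediate function increasing in $z$.

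The delicate point is the terminal factor $\EE_z\beta Y$ or $\EE_z\beta u'(Y)$, which need not be monotone in $z$. I will try to dominate it by the constant $\max_z(\cdot)$ times the vector $\mathbf 1$, which is increasing. After this replacement the expectations are bounded by $C\,\|(P^*D_0)^k\|\,\|(P^*D_1)^{t-k}\|$.

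\textbf{Step 4 (conclusion).} Gelfand's formula together with $r(P^*D_\alpha)<1$ gives constants $C'$ and $\lambda<1$ with $\|(P^*D_\alpha)^n\|\le C'\lambda^n$ for $\alpha\in\{0,1\}$. Summing the bounds from Step 3 over $k$ yields a quantity of order $t\lambda^t$, which tends to $0$. Therefore~\eqref{eq:tvc} holds, and Proposition~\ref{pr:suff_euler_tvc} shows that $c$ is optimal.
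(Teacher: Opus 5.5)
Your proof is correct. It follows the paper in Steps~1--2 but uses a different key lemma for the spectral part.

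\textbf{Where you agree with the paper.} The paper also bounds $u'(w_t)w_t\le A\,u'(Y_t)+u'(A)\tilde w_t$; you take $A=1$. It likewise dominates $w_t$ by the maximal path $\tilde w_t$. As you noticed, the paper uses only $c\in\cC$ (not \eqref{eq:foc}) for the transversality part. It also relies on Assumption~\ref{a:unif_sr}, which is imposed throughout the appendix, through Proposition~\ref{pr:ev_contr} and Lemma~\ref{lm:max_path}.

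\textbf{Where you differ.} To control the discounted expectations, the paper shows that the Perron vector $x_\alpha$ of $P^*D_\alpha$ is increasing. It is therefore a common upper eigenvector, with $P_\theta D_\alpha x_\alpha\le\eta x_\alpha$ for all $\theta\in\TT$. A weighted sup norm then gives $f_t^\alpha(z,\theta)\le\lambda^t$ uniformly. Because that bound holds for every nonnegative integrand, the non-monotone factor $\EE_z\beta Y$ is simply bounded by a constant. You instead propagate monotonicity backward, replacing each subjective transition by $P^*$, and arrive at $C\,\me_z^\top(P^*D_0)^{k-1}P^*(P^*D_1)^{t-k}\1$. This is why you must replace $\EE_z\beta Y$ by its maximum times $\1$: it keeps the intermediate functions increasing. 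Gelfand's formula then finishes the argument.

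\textbf{What each route buys.} Yours is more elementary and slightly more general. It avoids Perron--Frobenius altogether, so it needs neither irreducibility of $P^*$ nor monotonicity of its Perron vector. The paper's route packages everything into one uniform bound on $f_t^\alpha$, which it reuses in the contraction argument for Theorem~\ref{t:opt}. Your induction gives an equally usable bound, $f_t^\alpha(z,\theta)\le\me_z^\top(P^*D_\alpha)^t\1$.

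\textbf{A point to make explicit when writing it up.} Run the induction on functions of $(z,\theta)$ with bounds independent of $\theta$. Alternatively, use the fact that the subjective law of $(Z_1,\dots,Z_t)$ is the $\theta$-mixture of the $P_i$-path laws. This matters because $\theta_t$ is path dependent, so the subjective expectation is not literally a product of fixed matrices $P_{\theta_s}D_\alpha$.
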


The following theorem establishes the existence and uniqueness of a candidate 
policy that satisfies the first order optimality condition.

\begin{theorem}[Existence, uniqueness, and computability of optimal policies]
	\label{t:opt}
	If Assumptions~\ref{a:utility}--\ref{a:finite_exp} hold, then 
	the following statements are true:
	\begin{enumerate}
		\item $T: \cC \to \cC$ is well defined and has a unique fixed point $c^{*} \in \cC$.
		\item The fixed point $c^{*}$ is the unique optimality policy in $\cC$.
		\item\label{tr:conv} For all $c \in \cC$, we have 
		$\rho(T^{k}c,c^{*}) \to 0$ as $k \to \infty$.
	\end{enumerate}
\end{theorem}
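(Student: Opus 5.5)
The plan follows the approach of \cite{ma2020income} and \cite{li2014solving}, adapted to the belief state. Note that the theorem presumably relies on Assumption~\ref{a:unif_sr} as well, since the contraction argument below needs it. The proof has three steps: show that $T$ is a well defined self-map on $\cC$, show that some iterate $T^k$ is a contraction on the complete space $(\cC,\rho)$, and then read off parts (2) and (3).

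\textbf{Step 1: $T$ is a well defined self-map.} Fix $c\in\cC$ and $(w,z,\theta)$. The map
\[
\xi \mapsto \EE_{z,\theta}\hat\beta\hat R(u'\circ c)(\hat R(w-\xi)+\hat Y,\hat Z,\hat\theta)
\]
is continuous and nondecreasing in $\xi$, because $c$ is increasing in $w$ and $u'$ is decreasing. Its value is finite by Assumption~\ref{a:finite_exp}, condition \eqref{eq:upc_bd}, and the bound $u'(c(w',\cdot))\le u'(w')+K$ with $w'\ge \hat Y$. The left side $u'(\xi)$ is strictly decreasing, and the truncations at $u'(w)$ and $u'(0)$ confine the solution to $(0,w]$. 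An intermediate value argument therefore gives a unique $\xi=Tc(w,z,\theta)$.

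Here the belief enters only through $\hat\theta$, which is the continuous Bayes map \eqref{eq:bayes}, and through $P_\theta$, which is linear in $\theta$. Conditional on $\hat Z=\hat z$, the posterior $\hat\theta$ is deterministic, and the expectation is a finite sum over $\hat z$ weighted by $P_\theta(z,\hat z)$. On the event $P_\theta(z,\hat z)=0$ the Bayes update is undefined, but that term carries weight zero.

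Continuity of $Tc$ in $(w,\theta)$ then follows from dominated convergence together with a standard argument that the solution of a monotone equation depends continuously on parameters. Monotonicity of $Tc$ in $w$ follows because raising $w$ with $\xi$ fixed lowers the right side. For the bound \eqref{eq:upc_bd}, I will show that
\[
u'(Tc(w,z,\theta))-u'(w)\le \EE_{z,\theta}\hat\beta\hat R\,[u'(\hat Y)+K],
\]
and that this is bounded uniformly over $z$ and $\theta$ by Assumption~\ref{a:finite_exp}.

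\textbf{Step 2: contraction.} This is the main obstacle. The natural one-step bound is
\[
\rho(Tc_1,Tc_2)\le \sup_{z,\theta}\EE_{z,\theta}\hat\beta\hat R\,\rho(c_1,c_2),
\]
which is not below one in general. Iterating instead gives
\[
|u'\circ T^kc_1-u'\circ T^kc_2|(w,z,\theta)\le \EE_{z,\theta}\prod_{t=1}^k\beta_tR_t\;\rho(c_1,c_2).
\]
To obtain this I would use the pointwise argument of \cite{ma2020income}: on interior points the Euler equation transmits differences with factor $\hat\beta\hat R$, and at corners the truncations are nonexpansive.

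The key issue is that beliefs evolve, so the subjective $k$-step expectation is
\[
\EE_{z,\theta}\prod\beta_tR_t=\sum_i\theta_i\,\EE^{P_i}_z\prod\beta_tR_t.
\]
This holds because Bayesian updating makes the subjective law of $\{Z_t\}$ exactly the $\theta$-mixture of the Markov laws under the $P_i$. This is the crucial simplification. Each term equals $(P_iD_1)^k\mathbf 1(z)$ up to the iid structure. Since $P_i\preceq P^*$, $P^*$ is monotone, and $z\mapsto\EE_z\beta R$ is increasing by Assumption~\ref{a:finite_exp}, an induction shows that $(P_iD_1)^k\mathbf 1\le (P^*D_1)^k\mathbf 1$. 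The induction uses that $D_1(P^*D_1)^{j}\mathbf 1$ is increasing in $z$, which follows from monotonicity of $P^*$ and the fact that a product of nonnegative increasing functions is increasing.

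Then Gelfand's formula with $r(P^*D_1)<1$ yields some $k$ with $\|(P^*D_1)^k\mathbf 1\|_\infty<1$, uniformly in $\theta$. So $T^k$ is a contraction on the complete space $(\cC,\rho)$. The condition for $\alpha=0$ is used similarly in Step 1, to keep bounds of the form $\EE\prod\beta_t Y$ finite uniformly over time when needed.

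\textbf{Step 3: conclusions.} Banach's theorem applied to $T^k$, combined with the standard argument for eventually contractive maps, gives a unique fixed point $c^*$ of $T$ and the convergence $\rho(T^mc,c^*)\to0$ for every $c\in\cC$, which is part (3). A fixed point of $T$ is exactly an element of $\cC$ satisfying \eqref{eq:foc}, so Proposition~\ref{pr:suff_foc} shows that $c^*$ is optimal. Any other optimal policy in $\cC$ must satisfy \eqref{eq:foc} by standard necessity of the Euler equation, and is therefore a fixed point equal to $c^*$. This gives part (2).
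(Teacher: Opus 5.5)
Your argument is correct, but it reaches the key $k$-step estimate by a different route from the paper.

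\emph{Contraction step.} The paper passes to the conjugate operator $\tilde T$ on $\hH$. It proves order preservation and the discounting inequality $\tilde T^k(h+\gamma)\le\tilde T^kh+\gamma f_k^1$, then applies Blackwell-type sufficient conditions. You instead compare $u'\circ T^kc_1$ and $u'\circ T^kc_2$ directly through the monotone Euler-difference argument. This gives the same bound $f_k^1(z,\theta)\rho(c_1,c_2)$, provided you carry the state-dependent quantity $\delta_j(z,\theta):=\sup_w|u'\circ T^jc_1-u'\circ T^jc_2|(w,z,\theta)$ from step to step rather than $\rho$.

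\emph{Decay of $f_k^1$.} The more substantive difference is how you show $\sup_{z,\theta}f_k^1<1$.
The paper takes the Perron vector $x_\alpha$ of $P^*D_\alpha$, which is where irreducibility of $P^*$ enters. It shows $x_\alpha$ is increasing, deduces $P_iD_\alpha x_\alpha\le r(P^*D_\alpha)x_\alpha$ for every $i$, and works in the weighted norm $\max_z|v(z)|/x_\alpha(z)$, in which each $K_\theta^\alpha$ contracts uniformly.
You use the exact mixture identity $f_k^1(z,\theta)=\sum_i\theta_i\,\me_z^\top(P_iD_1)^k\1$, together with the monotone comparison $(P_iD_1)^k\1\le(P^*D_1)^k\1$ and Gelfand's formula.

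\emph{What each route buys.} Yours is more elementary: it needs no irreducibility, no Perron--Frobenius, and no monotone eigenvector. It also avoids the paper's representation of $f_t^\alpha$ as a matrix product $K_\theta^\alpha K_{\theta_1}^\alpha\cdots K_{\theta_{t-1}}^\alpha$, which treats the path-dependent posteriors $\theta_1,\theta_2,\dots$ as if they were fixed. The paper's weighted norm, in turn, gives a contraction factor valid along arbitrary belief sequences, for example the grid projection in Algorithm~\ref{alg:egm}, where the mixture identity fails. Your comparison extends to that setting as well, by proving $\sup_\theta f_k^1(\cdot,\theta)\le(P^*D_1)^k\1$ inductively from $P_\theta\preceq P^*$.

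\emph{Part (2).} Your appeal to necessity of the Euler equation goes beyond what the paper proves. The paper simply reads (2) off part (1) and Proposition~\ref{pr:suff_foc}.
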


This convergence result demonstrates that the time iteration algorithm converges globally within the candidate space $\cC$.

\section{Properties of Consumption and Saving}
\label{s:ifp_properties}

We next examine the key properties of the optimal consumption function
characterized in Theorem~\ref{t:opt}, maintaining Assumptions~\ref{a:utility}--\ref{a:unif_sr} throughout. We begin with monotonicity results.

\begin{proposition}[Monotonicity with respect to wealth]
	\label{pr:monotonea}
	The optimal consumption and savings functions $c^*(w,z,\theta)$ and 
	$i^*(w,z,\theta) := w - c^*(w,z,\theta)$ are increasing in $w$.
\end{proposition}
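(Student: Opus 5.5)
Monotonicity of $c^*$ in $w$ is essentially free: by Theorem~\ref{t:opt}, $c^*\in\cC$, and every element of $\cC$ is increasing in $w$ by definition. The substantive claim is that savings $i^*(w,z,\theta)=w-c^*(w,z,\theta)$ is also increasing in $w$. I will show that $T$ maps the closed subset
\begin{equation*}
	\cC_1 := \{c\in\cC : w\mapsto w-c(w,z,\theta) \text{ is increasing for all } (z,\theta)\in \ZZ\times\TT\}
\end{equation*}
into itself. First, $\cC_1$ is nonempty: take $c(w,z,\theta)=w$, or something like $c = w/2$ if \eqref{eq:upc_bd} fails for the former (for $c=w$ it holds trivially). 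Second, $\cC_1$ is closed under $\rho$: $\rho$-convergence gives pointwise convergence of $u'\circ c_n$, and since $u'$ is a strictly decreasing continuous bijection onto its range, $c_n\to c$ pointwise, which preserves monotonicity of $w-c_n$. Then, since $T^k c\to c^*$ by Theorem~\ref{t:opt}\eqref{tr:conv}, starting from $c\in\cC_1$ gives $c^*\in\cC_1$.

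The core step is invariance. Fix $c\in\cC_1$, fix $(z,\theta)$, and take $w_1<w_2$. I argue by contradiction. Suppose $i_2:=w_2-Tc(w_2,z,\theta)<i_1:=w_1-Tc(w_1,z,\theta)$. Write $\xi_j=Tc(w_j,z,\theta)$. Then $\xi_2 = w_2 - i_2 > w_1 - i_1 = \xi_1$, so $u'(\xi_2)<u'(\xi_1)$. On the other hand, $c$ is increasing in its first argument, so $w\mapsto (u'\circ c)(w,\hat z,\hat\theta)$ is decreasing. Since $\hat R>0$ and the next-period $(\hat Z,\hat\theta)$ depend only on $(z,\theta)$ and the realized transition, not on $w$, $i_2<i_1$ gives
\begin{equation*}
	\EE_{z,\theta}\hat\beta\hat R(u'\circ c)(\hat R i_2+\hat Y,\hat Z,\hat\theta)\ \ge\ \EE_{z,\theta}\hat\beta\hat R(u'\circ c)(\hat R i_1+\hat Y,\hat Z,\hat\theta).
\end{equation*}
Applying the increasing maps $x\mapsto\max\{x,u'(w)\}$ and $\min\{\cdot,u'(0)\}$ in \eqref{eq:tio} then needs care, because the floor $u'(w_2)\le u'(w_1)$ moves the wrong way.

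I handle the corners by cases. If $i_1>0$, then $\xi_1<w_1$, so $u'(\xi_1)>u'(w_1)$ and the max in \eqref{eq:tio} is attained by the expectation term. If the min is not binding, $u'(\xi_1)$ equals that expectation. The right side of \eqref{eq:tio} at $w_2$ is at least $\min\{\text{expectation at } i_2, u'(0)\}$, which is at least $\min\{\text{expectation at } i_1,u'(0)\}=u'(\xi_1)$. This gives $u'(\xi_2)\ge u'(\xi_1)$, a contradiction. If instead $u'(\xi_1)=u'(0)$, then $\xi_1=0$, so $\xi_2>0=\xi_1$ forces $u'(\xi_2)<u'(0)$ with the min nonbinding at $w_2$; but the expectation at $i_2$ is at least the expectation at $i_1$, which is at least $u'(0)$, a contradiction. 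If $i_1=0$, then $i_2<0$ is impossible. Hence $i_2\ge i_1$.

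The remaining requirements (continuity and \eqref{eq:upc_bd}) already hold because $T$ maps $\cC$ into $\cC$ by Theorem~\ref{t:opt}. The main obstacle is the bookkeeping of the $\max/\min$ corners in \eqref{eq:tio}, together with verifying that the expectation is genuinely monotone in savings. The latter needs $\hat\beta\hat R\ge0$, $w$-independence of the belief update \eqref{eq:bayes}, and finiteness of the expectations, which is supplied by Assumption~\ref{a:finite_exp} and the bound \eqref{eq:upc_bd}.
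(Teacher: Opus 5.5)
Your proposal is correct and follows essentially the same route as the paper, which also shows that the set of $c\in\cC$ with $w\mapsto w-c(w,z,\theta)$ increasing is $\rho$-closed and $T$-invariant. The paper proves invariance by the same contradiction: the max in \eqref{eq:tio} is nonbinding at $w_1$, and the comparison $\min\{\EE_{z,\theta}\hat\beta\hat R(u'\circ c)(\hat R i+\hat Y,\hat Z,\hat\theta),u'(0)\}$ is then made at the two savings levels. Your separate case for a binding $u'(0)$ cap is harmless but redundant, since that single $\min$-inequality already covers it.
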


\begin{proposition}[Monotonicity with respect to income]
	\label{pr:monotoneY} 
	If $\{ Y_{1t} \}$ and $\{ Y_{2t} \}$ are two income processes satisfying 
	$Y_{1t}\leq Y_{2t}$ for all $t$ and $c_1^*$ and $c_2^*$ are the
	corresponding optimal consumption functions, then $c_1^* \leq c_2^*$
	pointwise on $\SS$.
\end{proposition}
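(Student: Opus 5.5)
We will work with the time iteration operators $T_1$ and $T_2$ built from \eqref{eq:tio} with incomes $Y_1$ and $Y_2$. Both satisfy the hypotheses of Theorem~\ref{t:opt}, so each has a unique fixed point $c_j^*$ in $\cC$, and $T_j^k c \to c_j^*$ in $\rho$ for every $c\in\cC$. Convergence in $\rho$ means uniform convergence of $u'\circ T_j^k c$ to $u'\circ c_j^*$. Because $u'$ is strictly decreasing, pointwise inequalities between policies are preserved in the limit. We read the hypothesis $Y_{1t}\le Y_{2t}$ as $Y_1(z,\epsilon)\le Y_2(z,\epsilon)$ for all $(z,\epsilon)$, so that both processes are driven by the same shocks. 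The strategy is to establish two order properties of the operators and then pass to the limit.

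First, we show that each $T_j$ is order preserving on $\cC$: if $c\le d$ pointwise, then $T_jc\le T_jd$. Fix $(w,z,\theta)$ and set $\xi=T_jc(w,z,\theta)$. Since $c\le d$ and $u'$ is decreasing, $(u'\circ d)(x,\hat z,\hat\theta)\le (u'\circ c)(x,\hat z,\hat\theta)$ at every argument. Now consider the right-hand side of \eqref{eq:tio} as a function of the candidate $\xi'$, with $d$ in place of $c$. It is nondecreasing in $\xi'$, because increasing $\xi'$ lowers next-period wealth $\hat R(w-\xi')+\hat Y$, and every element of $\cC$ is increasing in $w$. The left-hand side $u'(\xi')$ is strictly decreasing in $\xi'$. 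At $\xi'=\xi$, the right-hand side under $d$ is at most the right-hand side under $c$, which equals $u'(\xi)$. The unique crossing point $T_jd(w,z,\theta)$ must therefore satisfy $T_jd(w,z,\theta)\ge \xi$. This is the standard single-crossing argument, and the clamping by $\max\{\cdot,u'(w)\}$ and $\min\{\cdot,u'(0)\}$ preserves these monotonicity properties.

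Second, we compare the two operators: $T_1c\le T_2c$ for every $c\in\cC$. The same single-crossing argument applies. For fixed $\xi'$, next-period wealth under $Y_2$ is pointwise at least as large as under $Y_1$. Since $c$ is increasing in wealth, $u'\circ c$ evaluated at the $Y_2$ wealth is pointwise no larger. Hence the right-hand side of \eqref{eq:tio} under $Y_2$ is below that under $Y_1$ for every $\xi'$, and the crossing point moves weakly to the right.

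Finally, we iterate, starting from any $c\in\cC$ (for instance $c(w,z,\theta)=w$). By induction, $T_1^kc\le T_2^kc$ for all $k$. The base step is the second property. For the inductive step, $T_1^{k+1}c=T_1T_1^kc\le T_1T_2^kc\le T_2T_2^kc$, using first that $T_1$ is order preserving and then the operator comparison. Since $u'$ is decreasing, this gives $u'\circ T_1^kc\ge u'\circ T_2^kc$. Letting $k\to\infty$ and applying Theorem~\ref{t:opt}(\ref{tr:conv}) to each operator yields $u'\circ c_1^*\ge u'\circ c_2^*$ pointwise. Strict monotonicity of $u'$ then gives $c_1^*\le c_2^*$.

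The main obstacle we expect is the care needed in the single-crossing step at the boundaries. When $u'(0)<\infty$ the solution may be $\xi=0$, and at the corner $\xi=w$ the $\max$ with $u'(w)$ binds. We must check that in each of these cases the ordering of the clamped right-hand sides still forces the ordering of the solutions. We also need to confirm that the measurability and integrability required for Theorem~\ref{t:opt} hold under both income processes; we simply assume both satisfy Assumptions~\ref{a:utility}--\ref{a:unif_sr}.
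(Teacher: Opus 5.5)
Your proof is correct and follows essentially the same route as the paper. Both arguments establish $T_1c\le T_2c$ by the contradiction/single-crossing argument, using $Y_1\le Y_2$ and monotonicity of $c$ in wealth, combine this with order preservation of $T_1$ (which the paper takes from the conjugate lemma on $\tilde T$ and you verify directly), and then pass to the limit via Theorem~\ref{t:opt}. The boundary cases you flag as the main obstacle need no separate treatment: $x\mapsto\min\{\max\{x,u'(w)\},u'(0)\}$ is nondecreasing, and the contradiction argument uses only the monotonicity of both sides and uniqueness of the solution.
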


These results formalize the intuitive idea that households consume and save more when they are wealthier or receive higher income, reflecting the monotone response of optimal decisions to favorable financial conditions.

Next, we characterize the threshold below which the borrowing constraint binds.

\begin{proposition}[Threshold for saving decision]
	\label{pr:binding}
	For all $c \in \cC$, there exists a threshold $\bar{w}_c(z,\theta)$ such that $Tc(w,z,\theta) = w$ if and only if $w \leq \bar{w}_c (z,\theta)$. In particular, letting 
	\begin{equation*}
		\bar w(z,\theta) := (u')^{-1} \left[ \min\left\{
			    \EE_{z,\theta} \hat \beta \hat R 
			    (u'\circ c^*)(\hat Y,\hat Z,\hat \theta),
			    u'(0)
		    \right\}
		\right],
	\end{equation*}
	we have $c^*(w,z,\theta) = w$ if and only if $w \leq \bar{w}(z,\theta)$.
\end{proposition}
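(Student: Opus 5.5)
The argument works directly with the defining equation~\eqref{eq:tio} of $T$. Fix $c\in\cC$ and $(z,\theta)$, and define
\begin{equation*}
	G(w) := \EE_{z,\theta}\, \hat\beta \hat R\, (u'\circ c)(\hat Y, \hat Z, \hat\theta),
\end{equation*}
which is the expected discounted marginal value of saving when all wealth is consumed today. This quantity does not depend on $w$; we write $G_c(z,\theta)$. It is finite by Assumption~\ref{a:finite_exp} together with~\eqref{eq:upc_bd}: the bound gives $(u'\circ c)(\hat Y,\cdot)\le u'(\hat Y)+K$ for a constant $K$, and $\EE_z\beta R u'(Y)<\infty$ and $\EE_z \beta R<\infty$ then make each term finite after mixing over the finitely many $\hat z$ with weights $P_\theta(z,\hat z)$.

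We define the threshold by
\begin{equation*}
	\bar w_c(z,\theta) := (u')^{-1}\left[\min\{G_c(z,\theta),u'(0)\}\right],
\end{equation*}
with the conventions $\bar w_c=0$ if the bracket equals $u'(0)$, and $\bar w_c=\infty$ if it is at most $\lim_{c\to\infty}u'(c)$.

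\emph{Sufficiency.} Suppose $w\le\bar w_c$, so that $u'(w)\ge\min\{G_c,u'(0)\}$. Plugging $\xi=w$ into the right-hand side of~\eqref{eq:tio} gives $\min\{\max\{G_c,u'(w)\},u'(0)\}$. If $G_c\le u'(w)$, this equals $u'(w)$, since $u'(w)\le u'(0)$. If $G_c>u'(w)$, then necessarily $u'(w)\ge u'(0)$, so $u'(w)=u'(0)$ and the expression again equals $u'(w)$. In both cases $\xi=w$ solves the equation, and by uniqueness of the solution in the definition of $T$, $Tc(w,z,\theta)=w$.

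\emph{Necessity.} Suppose $Tc(w)=w$. Then $u'(w)=\min\{\max\{G_c,u'(w)\},u'(0)\}$. If $u'(w)<u'(0)$, this forces $\max\{G_c,u'(w)\}=u'(w)$, so $G_c\le u'(w)$ and hence $w\le\bar w_c$. If $u'(w)=u'(0)$, then $w=0$, which trivially satisfies $w\le\bar w_c$. Strict concavity of $u$ (so that $u'$ is strictly decreasing and $(u')^{-1}$ is well defined and decreasing) is what converts the inequalities between marginal utilities into inequalities between wealth levels. Some care is needed at the boundary cases $u'(0)=\infty$ and $\bar w_c=\infty$, and with the state space excluding $w=0$ when $u'(0)=\infty$, but these reduce to the same comparisons.

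For the second claim, Theorem~\ref{t:opt} gives $c^*=Tc^*$. Applying the first part with $c=c^*$ yields $c^*(w,z,\theta)=w$ if and only if $w\le\bar w_{c^*}(z,\theta)$, and $\bar w_{c^*}$ is exactly the $\bar w(z,\theta)$ in the statement.

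The main obstacle is the bookkeeping in the degenerate cases: verifying that $G_c$ is finite, that $(u')^{-1}$ is well defined on $\min\{G_c,u'(0)\}$ (which may lie outside the range of $u'$ on $(0,\infty)$, whence the conventions $0$ and $\infty$ above), and that the uniqueness of $\xi\in(0,w]$ in~\eqref{eq:tio} is legitimately invoked. I would handle these by recording the conventions explicitly before running the case analysis.
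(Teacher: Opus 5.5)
Your proof is correct and follows the paper's argument: the same threshold $\bar w_c(z,\theta)$, with necessity read off from \eqref{eq:tio} at $\xi=w$. The paper does that step in one line via $\max\{G_c,u'(w)\}\ge G_c$, and your case split is equivalent. In sufficiency you instead verify that $\xi=w$ solves \eqref{eq:tio} and invoke the uniqueness from Lemma~\ref{lm:welldef_T}, whereas the paper supposes $\xi<w$ and derives $w>\bar w_c(z,\theta)$ from $(u'\circ c)(\hat R(w-\xi)+\hat Y,\hat Z,\hat\theta)\le(u'\circ c)(\hat Y,\hat Z,\hat\theta)$. Both routes rest on the monotonicity of $c$ in wealth, and your explicit conventions $\bar w_c\in\{0,\infty\}$ tidy up boundary cases the paper leaves implicit.
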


This result formalizes the intuitive concept of a wealth threshold. Households are constrained and consume all available resources if and only if wealth falls below $\bar w(z,\theta)$.

We next establish a uniform lower bound on consumption.

\begin{proposition}[Lower bound on consumption]
	\label{pr:lb}
	If there exists a constant $\bar s \in (0,1)$ such that 
	\begin{equation}\label{eq:sbar}
		\EE_{z,\theta} \hat \beta \hat R u'(\bar s \hat R w) \leq u'(w) 
		\quad \text{for all } w>0 \text{ and } (z,\theta)\in \ZZ\times \TT,
	\end{equation}
	then $c^*(w,z,\theta) \geq (1-\bar s)w$ for all $(w,z,\theta)\in \SS$.
\end{proposition}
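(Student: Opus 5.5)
The plan is to show that the closed subset
\begin{equation*}
	\cC_0 := \left\{ c \in \cC : c(w,z,\theta) \geq (1-\bar s) w \text{ for all } (w,z,\theta)\in \SS \right\}
\end{equation*}
is invariant under the time iteration operator $T$. Then I would invoke Theorem~\ref{t:opt}\eqref{tr:conv}. Starting from any $c_0 \in \cC_0$, for instance $c_0(w,z,\theta) = w$, the iterates $T^k c_0$ lie in $\cC_0$ and converge to $c^*$ in $\rho$. Since $u'$ is continuous and strictly decreasing, $\rho$-convergence of marginal utilities implies pointwise convergence of the policies. The inequality $c \geq (1-\bar s)w$ is preserved under pointwise limits, so $c^* \in \cC_0$.

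Closedness of $\cC_0$ is immediate for this reason. Note that $c_0(w,z,\theta) = w$ lies in $\cC$, because $u'\circ c_0 - u' \equiv 0$ makes \eqref{eq:upc_bd} trivial.

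For invariance, fix $c \in \cC_0$ and a state $(w,z,\theta)$, and let $\xi = Tc(w,z,\theta)$. If $\xi = w$, the claim holds trivially. Otherwise I would argue by contradiction and suppose $\xi < (1-\bar s)w$, so that $w - \xi > \bar s w$. Since $c$ is increasing in wealth, $c \geq (1-\bar s)(\cdot)$, and $\hat Y \geq 0$, I obtain
\begin{equation*}
	c(\hat R(w-\xi)+\hat Y, \hat Z, \hat \theta) \geq (1-\bar s)\hat R (w-\xi) \geq (1-\bar s)\bar s \hat R w.
\end{equation*}
This bound is off by the factor $(1-\bar s)$, so a sharper route is needed. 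The better choice is to use the fact that the savings policy $i = w - c$ is also increasing. Any $c$ reached by iterating from $c_0$ has this property, and the property is preserved by $T$ by the argument behind Proposition~\ref{pr:monotonea}. Then $c(\hat w) \geq c(\hat R (w-\xi))$. Since $\hat R(w-\xi) \geq \bar s \hat R w$, I would aim directly for $c(\hat R(w-\xi)) \geq \bar s \hat R w$.

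A cleaner alternative is to use the first order condition itself. In the interior or $c=0$ cases, \eqref{eq:tio} gives
\begin{equation*}
	u'(\xi) \leq \EE_{z,\theta}\hat\beta\hat R\, (u'\circ c)(\hat w,\hat Z,\hat\theta) \leq \EE_{z,\theta}\hat\beta\hat R\, u'\big(c(\hat w,\hat Z,\hat\theta)\big).
\end{equation*}
If I can show $c(\hat w,\cdot) \geq \bar s \hat R w$, then \eqref{eq:sbar} yields $u'(\xi) \leq u'(w)$. That gives $\xi \geq w$, which contradicts $\xi < (1-\bar s) w$.

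Concretely, I would run the induction on the bound $c(w,z,\theta) \geq (1-\bar s)w$ itself, applying \eqref{eq:sbar} at the scaled argument $(1-\bar s)\hat R (w - \xi)$. This requires the homogeneity-type comparison: \eqref{eq:sbar} with $w$ replaced by $w' = (w-\xi)(1-\bar s)/\bar s$ gives
\begin{equation*}
	\EE_{z,\theta} \hat\beta\hat R\, u'\big((1-\bar s)\hat R(w-\xi)\big) \leq u'\big((1-\bar s)(w-\xi)/\bar s\big).
\end{equation*}
Combining this with $u'(\xi) \leq$ the left-hand side gives $\xi \geq (1-\bar s)(w-\xi)/\bar s$. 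Rearranging yields $\bar s \xi \geq (1-\bar s) w - (1-\bar s)\xi$, that is, $\xi \geq (1-\bar s) w$. This closes the induction without needing monotone savings, and it is the route I would commit to.

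The main obstacle is handling the cases in \eqref{eq:tio}: the $\max$ with $u'(w)$, the $\min$ with $u'(0)$, and the state $w=0$ when $u'(0) < \infty$. These must be checked separately but are routine. If the $\min$ binds then $\xi = 0$; this forces $u'(0) \leq \EE \hat\beta \hat R\, u'(c(\hat w))$, and the same chain of inequalities with $\xi = 0$ gives $0 \geq (1-\bar s) w$. So the case $\xi = 0$ occurs only when $w = 0$, and the conclusion is trivial there.
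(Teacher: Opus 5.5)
Your committed route is correct and is essentially the paper's proof. Both use the same closed, $T$-invariant set $\{c\in\cC : c(w,z,\theta)\geq (1-\bar s)w\}$, the bound $(u'\circ c)(\hat w,\hat Z,\hat\theta)\leq u'\big((1-\bar s)\hat R(w-\xi)\big)$ obtained by dropping $\hat Y\geq 0$, a single application of \eqref{eq:sbar} at a rescaled wealth level, and passage to $c^*$ via Theorem~\ref{t:opt}. The only difference is that you evaluate \eqref{eq:sbar} at $(1-\bar s)(w-\xi)/\bar s$ and solve for $\xi$ directly, whereas the paper argues by contradiction and evaluates it at $(1-\bar s)w$; this also shows your first bound $c(\hat w,\hat Z,\hat\theta)\geq (1-\bar s)\bar s\hat R w$ was not ``off by a factor,'' since under the hypothesis $\xi<(1-\bar s)w$ you only need to beat $u'((1-\bar s)w)$, not $u'(w)$.
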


Intuitively, this condition ensures that households always consume a positive fraction of wealth, preventing extreme precautionary saving that would drive consumption arbitrarily close to zero.

Finally, we turn to concavity and asymptotic properties under constant relative risk aversion (CRRA) preferences:
\begin{equation}\label{eq:crra}
	u(c) = \begin{cases}
		\frac{c^{1-\gamma}}{1-\gamma}, 
		& \text{if } \gamma>0 \text{ and } \gamma\neq 1, \\
		\log c, & \text{if } \gamma = 1.
	\end{cases}
\end{equation}

\begin{proposition}[Concavity of the consumption function]
	\label{pr:concavity}
	If the utility function is CRRA as defined in \eqref{eq:crra}, then the consumption function $c^*(w,z,\theta)$ is concave in $w$, and, for all $(z,\theta)\in \ZZ \times \Theta$, there exists $\alpha(z,\theta) \in [0,1]$  such that
	\begin{equation*}
		\lim_{w\to\infty} \frac{c^*(w,z,\theta)}{w} = \alpha(z,\theta).
	\end{equation*}
\end{proposition}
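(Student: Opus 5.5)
The plan is to show that, under CRRA utility, the time iteration operator $T$ maps a closed subset of $\cC$ into itself. This subset consists of policies that are concave in $w$. By Theorem~\ref{t:opt}(\ref{tr:conv}), $T^k c \to c^*$ in $\rho$. Convergence in $\rho$ implies pointwise convergence of $u'\circ T^k c$, and hence of $T^k c$ itself, since $u'$ is a homeomorphism on $(0,\infty)$. Pointwise limits of concave functions are concave. So starting the iteration from a concave element such as $c_0(w,z,\theta)=w$ yields concavity of $c^*$. Closedness of the subset therefore only requires pointwise convergence, and I will use it in exactly that form.

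\textbf{Step 1: $T$ preserves concavity.} Fix $c\in\cC$ with $w\mapsto c(w,z,\theta)$ concave for every $(z,\theta)$, and write $\xi=Tc(w,z,\theta)$. Since $u'(x)=x^{-\gamma}$, the unconstrained part of \eqref{eq:tio} reads
\begin{equation*}
	\xi = \Big[\EE_{z,\theta}\hat\beta\hat R\, c\big(\hat R(w-\xi)+\hat Y,\hat Z,\hat\theta\big)^{-\gamma}\Big]^{-1/\gamma},
\end{equation*}
and $u'(0)=\infty$ removes the outer minimum. By Proposition~\ref{pr:binding}, $Tc(w)=w$ on $(0,\bar w_c]$, and $Tc$ solves this equation beyond $\bar w_c$.

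On the unconstrained region I will argue as in Carroll and Kimball. Suppose concavity failed: there are $w_1<w_2$ and $\lambda$ with $\xi_\lambda:=Tc(w_\lambda)<\lambda\xi_1+(1-\lambda)\xi_2$, where $w_\lambda=\lambda w_1+(1-\lambda)w_2$. Then savings at $w_\lambda$ exceed the convex combination of savings at $w_1$ and $w_2$. Next-period wealth is affine in savings for each realization, and $c$ is increasing and concave. Hence next-period consumption at $w_\lambda$ is at least the convex combination of next-period consumptions at $w_1$ and $w_2$. The map $x\mapsto x^{-\gamma}$ is decreasing and convex, and $g(m):=(\EE\, \hat\beta\hat R\, m^{-\gamma})^{-1/\gamma}$ is a weighted power mean of exponent $-\gamma<0$. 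Such a mean is increasing and concave in the vector $m$ (the Minkowski-type inequality for power means with negative exponent). Applying $g$ gives $\xi_\lambda\ge\lambda\xi_1+(1-\lambda)\xi_2$, which is the desired contradiction.

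Stated directly rather than by contradiction, the argument needs care because savings $w-\xi$ depend on $\xi$ itself. I will handle this through the endogenous-grid representation. Let $s=w-\xi$ denote savings and set $\xi(s):=g\big(c(\hat R s+\hat Y,\hat Z,\hat\theta)\big)$. Composing the concave, increasing map $g$ with the concave, increasing map $s\mapsto c(\hat R s+\hat Y)$ makes $s\mapsto\xi(s)$ concave and increasing. The pair $(w,\xi)=(s+\xi(s),\xi(s))$ then traces the graph of $Tc$. Because $\xi'(s)\ge0$, the map $s\mapsto w(s)=s+\xi(s)$ is increasing. If $\xi$ is concave in $s$, then $\xi$ is concave as a function of $w$, since $d\xi/dw=\xi'/(1+\xi')$ is nonincreasing whenever $\xi'$ is. This rigorous form uses Dini derivatives or a chord argument, because $c$ need not be differentiable. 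Finally I glue the two regions. The piece $\xi=w$ on $(0,\bar w_c]$ has slope $1$, and $\xi'/(1+\xi')\le1$ on the unconstrained piece. The two pieces are continuous at $\bar w_c$, so the slopes decrease across the kink and $Tc$ is concave on all of $(0,\infty)$.

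\textbf{Step 2: the asymptotic slope.} Since $c^*(\cdot,z,\theta)$ is concave, nonnegative, and satisfies $c^*(w)\le w$, the ratio $c^*(w)/w$ is nonincreasing in $w$. This holds because, for a concave $f$ with $f(0^+)\ge 0$, the ratio $f(w)/w$ is nonincreasing, and $c^*(0^+)\ge0$. The ratio is also bounded in $[0,1]$, so it converges to some $\alpha(z,\theta)\in[0,1]$.

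I expect the main obstacle to be Step 1. The points to be checked carefully are concavity of the negative-exponent power mean over the stochastic kernel, which mixes $\epsilon$ with the belief-weighted transitions $P_\theta$ and the update $\hat\theta$; the nondifferentiable chord version of the reparametrization argument; and the behavior of concavity near $w\to0$ when $\SS$ excludes zero.
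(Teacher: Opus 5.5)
Your proposal is correct and follows essentially the paper's route. Both show that $T$ preserves a pointwise-closed class of concave policies, argue by contradiction on the unconstrained region using concavity of the negative-exponent power mean (the paper's generalized Minkowski step), glue with the segment $Tc(w)=w$ on $(0,\bar w_c]$, and use monotonicity of $c^*(w)/w$ for the asymptotic slope. Your endogenous-grid reparametrization $s\mapsto\bigl(s+\xi(s),\xi(s)\bigr)$ is a valid addition: it makes explicit the slope bound $\xi'/(1+\xi')<1$ needed for the gluing at $\bar w_c$, which the paper obtains only implicitly by working inside the savings-monotone class $\cC_0$.
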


This result confirms that under CRRA preferences, households exhibit diminishing marginal propensity to consume as wealth increases, while consumption grows proportionally with wealth in the limit. The asymptotic linearity parameter $\alpha(z,\theta)$ captures the long-run consumption share of wealth and depends on the prevailing economic state and the household's beliefs.

\section{Quantitative Analysis}\label{s:quant}

This section applies the theoretical framework to design numerical algorithms, and then apply the theory and solution algorithm to a calibrated model of household consumption and savings under transition uncertainty. 

\subsection{Numerical Solution Method}\label{ss:alg}

We first outline the computational strategy. To that end, we explain how to efficiently construct grid points on the belief space, and then present the algorithms used to compute the optimal policy.

\subsubsection{Constructing Grid Points on the Belief Space}\label{sss:belief_grid}

To discretize the belief space efficiently, we construct a \emph{barycentric coordinate grid} over the $(N-1)$-dimensional probability simplex $\TT$. This approach generates evenly distributed grid points that satisfy the constraint $\sum_{i=1}^N \theta_i = 1$. 

For a given resolution parameter $H\in \NN$, we generate all nonnegative integer vectors $(h_1, \cdots, h_N)$ such that $\sum_{i=1}^{N} h_i = H$. Each such vector corresponds to a grid point
\begin{equation*}
	\theta = \left(
	    \frac{h_1}{H}, \dots, \frac{h_N}{H}
	\right) \in \TT.
\end{equation*}
The total number of grid points on $\TT$ is given by the binomial coefficient
\begin{equation*}
	L = \left(\begin{matrix}
		H+N-1 \\
		N-1
	\end{matrix}\right) 
	= \frac{\,(H+N-1)!\,}{(N-1)! H!},
\end{equation*}
which grows polinomially in $H$ for fixed $N$. In applications, we choose $H$ to balance approximation accuracy and computational cost. For example, when $N=3$ and $H=20$, we obtain $L=231$ grid points, providing a sufficiently fine discretization for smooth candidate functions.

\subsubsection{Computing the Optimal Policy}\label{ss:comp_opt_pol}

The preceding theory allows us to extend the endogenous grid point method of \citet{carroll2006method} to efficiently compute the optimal policy. In our framework, which features transition uncertainty and Bayesian learning, the method naturally generalizes. By fixing a grid for savings rather than wealth, it eliminates the need for root-finding during the time iteration process.

Let $\mathcal S_\theta = \{\theta_\ell\}_{\ell=1}^L$ denote the belief grid constructed above. We also specify a finite savings grid $\mathcal{S}_{s}:=\{s_{g}\}_{g=1}^{G}$, where $0=s_{1}<\cdots<s_{G}$. At each iteration, we update a candidate consumption function defined on $\mathcal{S}_{s} \times \ZZ\times \mathcal{S}_\theta$, and construct the corresponding wealth grid endogenously.

For each iteration $t \in \NN$, let $\mathcal{S}_w^{t}:=\left\{w^{t}_{g,\ell}(z)\right\}$ denote the endogenous wealth grid, where $w^{t}_{g,\ell}(z)$ is the wealth associated with $(s_{g},z,\theta_\ell)$. We initialize with $w^{0}_{g,\ell}(z)\equiv s_{g}$. Let $\cC(\mathcal{S}_w^t)$ denote the class of continuous piecewise linear functions $c:\SS \to \RR$ such that for each $z\in \ZZ$ and $\theta_\ell\in \mathcal S_\theta$: (1) $0\leq c(w,z,\theta_\ell)\leq w$ for all $w$; (2) $c(w,z,\theta_\ell)=w$ for $w \leq w^{t}_{1,\ell}(z)$; (3) $c(w,z,\theta_\ell)$ is linear in $w$ on each interval $[w^{t}_{g,\ell}(z),w^{t}_{g+1,\ell}(z)]$ for $g=1,...,G-1$; and (4) $c(w,z,\theta_\ell)$ is linearly extrapolated for $w>w^{t}_{G,\ell}(z)$. 

The endogenous grid algorithm proceeds as follows. 

\begin{algorithm}%\label{alg:egm}
	\caption{\, The endogenous grid algorithm}\label{alg:egm}
	\small 
	\begin{enumerate}[label=\textbf{Step \arabic*.}, leftmargin=*, ref=\arabic*, itemsep=1mm]
		\item(Initialization). Choose a convergence criterion $\varpi> 0$, a saving grid $\{s_{g}\}$, a belief grid $\{\theta_\ell\}$, and an initial policy  $c_{0}(w,z,\theta)$.
		
		\item\label{item:ega_update} (Policy Update). For each $t \in \NN$ and given $c_{t-1} \in \cC(\mathcal{S}_w^{t-1})$, update it as follows:
		\begin{enumerate}
			\item For each $(s_g,z,\theta_\ell) \in \mathcal S_s \times \ZZ\times \mathcal S_\theta$:
			\begin{enumerate}
				\item Compute future wealth $\hat w:=\hat R s_{g}+\hat Y$.
				
				\item Update beliefs using Bayes' rule and then project the posterior onto the nearest grid point $\hat \theta$ in $\mathcal S_\theta$.
				
				\item Compute the updated consumption
				\begin{equation*}
					\qquad \qquad \qquad \qquad
					\tilde c_{t}(s_g,z,\theta_\ell):=(u')^{-1}\left[ \min\left\{
					\EE_{z,\theta_\ell} \hat \beta \hat R (u'\circ c_{t-1}) 
					\big(\hat w, \hat Z,\hat \theta\big)
					\right\}, u'(0)
					\right].
				\end{equation*}
			\end{enumerate}
			
			\item Construct the endogenous wealth grid and associated consumption values:
			\begin{equation*}
				\qquad \qquad \qquad
				w^{t}_{g,\ell}(z):=s_{g}+\tilde c_{t}(s_{g},z,\theta_\ell) 
				\quad \text{and} \quad 
				c_{t}(w^{t}_{g,\ell}(z),z,\theta_\ell):=\tilde c_{t}(s_{g},z,\theta_\ell).
			\end{equation*}
			
			\item Extend $c_t$ to $c_t \in \cC(\mathcal{S}_w^t)$ by linear interpolation and extrapolation, and impose $c_{t}(w,z,\theta_\ell)=w$ for $w\leq w^{t}_{1,\ell}(z)$.
		\end{enumerate}
		
		\item (Convergence). Repeat Step~\ref{item:ega_update} until the sequence $\left\{c_{t}\big(w^{t}_{g,\ell}(z),z, \theta_\ell \big)\right\}$ converges within tolerance $\varpi$.
	\end{enumerate}
\end{algorithm}

\subsection{Quantitative Results and Mechanism Analysis}\label{ss:app}

We now present a quantitative application of the theory. The model features CRRA utility with coefficient of relative risk aversion $\gamma$ and a constant discount factor $\beta$. The exogenous state $Z_t \in \{0,1\}$ follows a two-state Markov chain, representing expansion ($Z_t=0$) and recession ($Z_t=1$). The labor income process is specified as
\begin{equation*}
	Y_t = Y(Z_t, \varepsilon_t^Y) = y(Z_t) \varepsilon_t^Y, 
	\quad \{\varepsilon_t^Y\} \iidsim LN(0, \sigma_Y^2).
\end{equation*}
Here, $y(Z_t)$ is a state-dependent persistent component, while $\epsilon_t^Y$ is a transitory multiplicative shock. The agent allocates a constant fraction $\alpha$ of savings to a risky asset and the remainder $1-\alpha$ to a risk-free asset. The risky log return depends on the state $Z_t$. The resulting gross return on the portfolio is
\begin{equation*}
	R_t = R (Z_t, \varepsilon_t^R) = \alpha R_f \text{e}^{\mu(Z_t) + \sigma(Z_t) \varepsilon_t^R} + (1-\alpha) R_f,
	\quad \{\varepsilon_t^R\} \iidsim N(0,1).
\end{equation*}
Here, $\mu(Z_t)$ and $\sigma(Z_t)$ are the state-contingent log risk premium and
volatility, and $R_f$ is the gross risk-free rate.

A central feature of the application is that the agent does not know the true transition probability for $\{Z_t\}$. The agent believes it is one of two candidates, $\pP = \{P_1,P_2\}$. The matrix $P_2$ is estimated using the monthly NBER recession indicator from January 1947 to January 2026, which we treat as the true data generating process in our simulations. To capture model uncertainty regarding regime persistence, we construct $P_1$ as a less persistent transition matrix relative to the highly persistent benchmark $P_2$. Under $P_1$, both expansions and recessions are shorter-lived, implying more frequent regime switches and greater macroeconomic instability. The resulting matrices are 
\begin{equation*}
	P_1 = \begin{bmatrix}
		0.8 & 0.2 \\
		0.3 & 0.7
	\end{bmatrix}
	\quad \text{and} \quad 
	P_2 = \begin{bmatrix}
		0.9855 & 0.0145 \\
		0.0968 & 0.9032
	\end{bmatrix}.
\end{equation*}
We set the other parameters as follows. We suppose that one period corresponds to one month and adopt standard values for the preference parameters: $\beta = \me^{-0.05/12}$ (5\% annual discounting) and $\gamma=2$. Regarding asset return, we set $\alpha=0.4$ for portfolio share, which is broadly in line with the moderate equity shares commonly used in life cycle portfolio choice models (see, e.g., \cite{cocco2005consumption}). Using the extended dataset of \citet{welch2008comprehensive} (monthly data from January 1947 to December 2024), we estimate the log risk-free rate as $\log R_f = 3.084 \times 10^{-4}$ (an annual rate 0.37\%), the state-dependent risk premia as $(\mu(0),\mu(1)) = (7.139,-1.735)\times 10^{-3}$, and the return volatility as $(\sigma(0), \sigma(1)) = (0.0391, 0.0577)$. 

The income process is calibrated to match the estimates of \citet{klein2013measuring}. Their study decomposes income into a persistent and a transitory component, and estimates a monthly frequency model using annual longitudinal household-level US income data from 1968--1997. In our implementation, we map their persistent AR(1) component into a two‑state Markov chain, preserving the essential dynamics while simplifying the computational structure. We set the variance of the transitory shock to their estimated value $\sigma_Y^2 = 0.5395$. We calibrate the persistent component so that the stationary mean and variance of our two-state chain match the corresponding moments of the persistent component reported in \citet{klein2013measuring}, yielding $(Y(0), Y(1)) = (1.8539, 0.0165)$.

All assumptions stated in Section~\ref{s:or} are satisfied in the present quantitative setting. In particular, the matrix
\begin{equation*}
	P^* = \begin{bmatrix}
		0.9855 & 0.0145 \\
		0.3 & 0.7
	\end{bmatrix}
\end{equation*}
is irreducible and monotone, and satisfies $P_1, P_2 \preceq P^*$ and $r(P^*D_\alpha)<1$ for $\alpha\in \{0,1\}$ under the calibrated parameters.\footnote{In 
	our calibration, state $0$ corresponds to expansion and state $1$ to recession in order to be consistent with the dataset, so $z_0$ is economically \emph{better} than $z_1$. Hence, the partial order $\preceq$ on $\ZZ$ is defined by $z_1 \preceq z_0$, and the monotonicity of $P_2$ is understood with respect to this ordering.} 
Hence, the uniform stability condition Assumption~\ref{a:unif_sr} holds. Theorem~\ref{t:opt} therefore guarantees the existence of a unique optimal consumption policy that can be computed via Algorithm~\ref{alg:egm}. Moreover, the monotonicity and concavity conditions underlying Propositions~\ref{pr:monotonea}--\ref{pr:concavity} are also satisfied, implying that the optimal policy is increasing and concave in wealth and continuous. These theoretical properties provide foundation for quantitative analysis.

The model is solved using Algorithm~\ref{alg:egm}. The endogenous grid for savings consists of $2000$ points spaced exponentially between $0$ and $1000$, with a median point at $150$ to ensure sufficient density in the empirically relevant wealth region. For the belief state $\theta$, which lies in a one‑dimensional probability simplex with $N=2$ candidate transition matrices, we employ a uniform grid of $100$ points. Moreover, the innovations $\{\epsilon_t^R\}$ and $\{\epsilon_t^Y\}$ are discretized using a 7-point Gauss–Hermite quadrature rule for numerical integration. The time iteration algorithm continues until the maximum absolute change in consumption across the grid falls below the tolerance $\varpi=10^{-4}$. 

\begin{figure}%[htb!]
	\includegraphics[width=\linewidth]{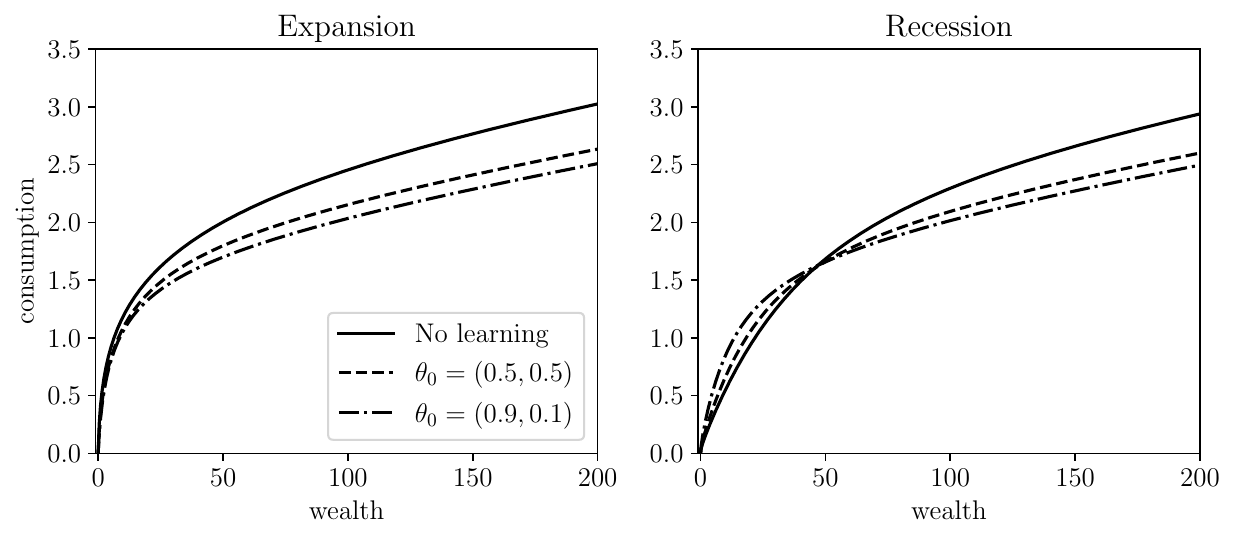}
	\caption{The optimal consumption policy}\label{fig:opt_pol}
\end{figure}

The quantitative results reveal a rich and fully dynamic interaction between transition uncertainty, learning, and household consumption and saving behavior. Figure~\ref{fig:opt_pol} shows how uncertainty about regime persistence reshapes the optimal consumption rule. In expansions, consumption under learning is uniformly lower across all wealth levels relative to the full-information benchmark. This outcome arises because agents assign positive probability to the less persistent transition matrix $P_1$, even though the true process is governed by $P_2$. As a result, households perceive a higher probability that the current expansion may end sooner than suggested by the true process. The elevated downside risk strengthens precautionary saving motives, leading households to reduce current consumption and accumulate larger buffer-stock savings.

In recessions, by contrast, the impact of learning becomes wealth-dependent. At low wealth levels, the possibility that the economy follows the less persistent matrix $P_1$ increases the perceived probability of a relatively rapid recovery. Because borrowing constraints bind most tightly for these households, the higher perceived likelihood of improvement partially relaxes precautionary pressure, resulting in slightly higher consumption relative to the full-information benchmark. At higher wealth levels, borrowing constraints are slack and intertemporal smoothing considerations dominate. In this case, the increased regime volatility implied by $P_1$ raises long-run income and return risk, strengthening precautionary motives and reducing consumption relative to the benchmark. The policy difference is therefore non-monotonic in wealth, reflecting the interaction between distorted beliefs and heterogeneous financial positions.

Figures~\ref{fig:cons_path}--\ref{fig:savings_path} translate these policy rules into a dynamic narrative, showing the simulated paths of aggregate consumption and savings.\footnote{The 
	consumption and savings paths are computed as $\EE_0 c^*(w_t,Z_t,\theta_t) $ and $\EE_0 [w_t - c^*(w_t,Z_t,\theta_t)]$, respectively, where $c^*$ is the optimal policy. Expectations are approximated via Monte Carlo simulation for $K=50,000$ sample paths:
	\begin{equation*}
		\EE_0 c^*(w_t,Z_t,\theta_t) 
		\approx \frac{1}{K} \sum_{i=1}^K 
		\EE_{w_{t-1}^i,Z_{t-1}^i,\theta_{t-1}^i} c^*(w_t,Z_t,\theta_t).
	\end{equation*}}
Initially, agents attach positive probability to the less persistent regime. This perceived instability amplifies precautionary behavior, producing a sizable initial consumption gap relative to the full-information economy. Quantitatively, consumption falls sharply in the first period, by 12.4\% and 17.2\% under two alternative prior specifications, while savings rise correspondingly. 

\begin{figure}%[htb!]
	\includegraphics[width=\linewidth]{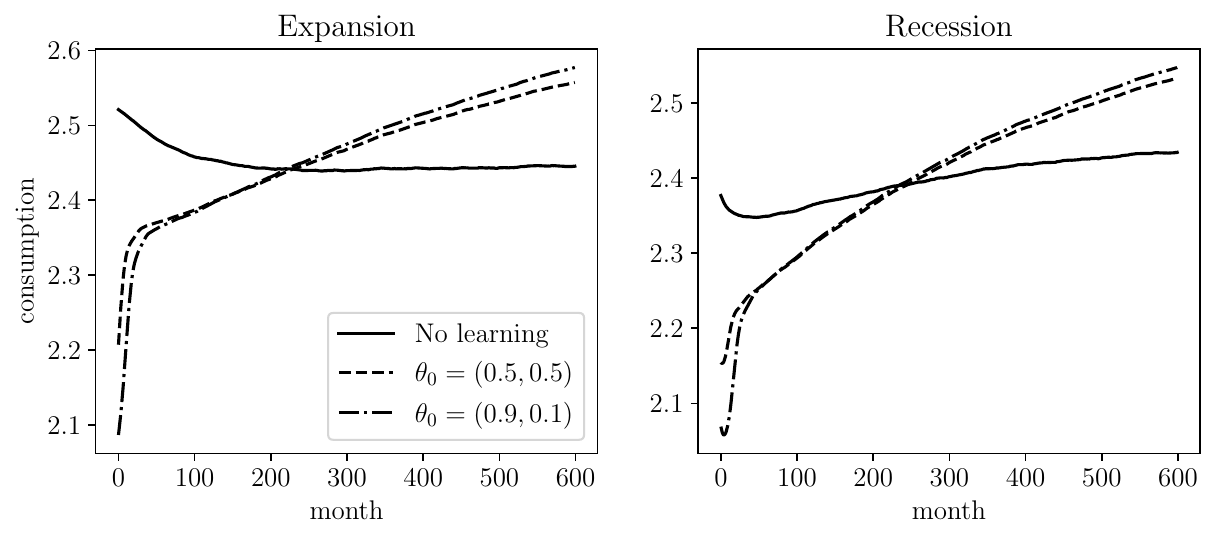}
	\caption{The expected consumption path}\label{fig:cons_path}
\end{figure}

\begin{figure}%[htb!]
	\includegraphics[width=\linewidth]{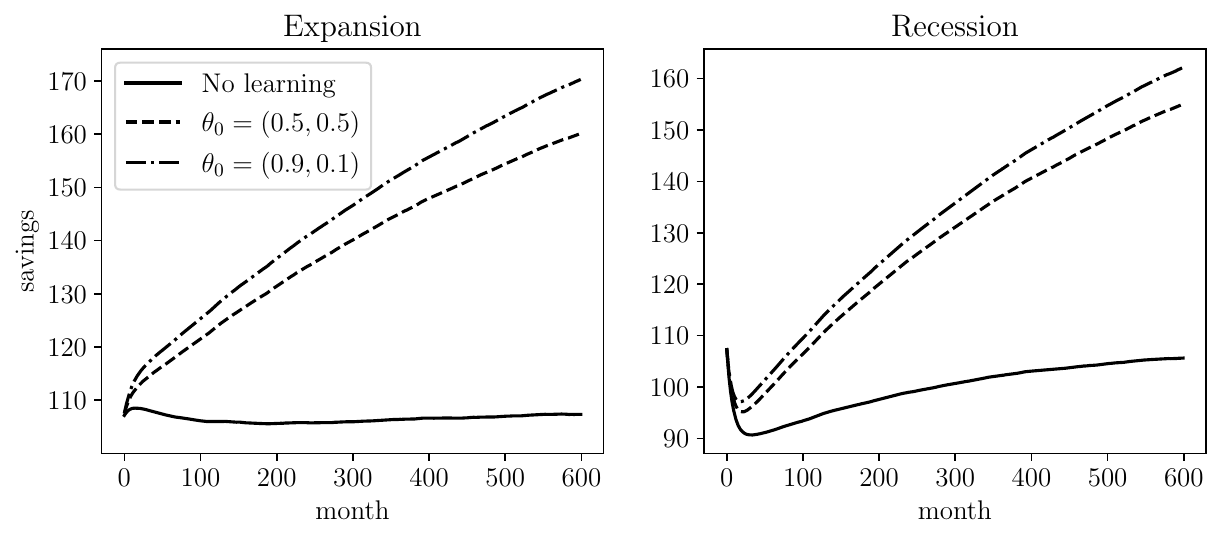}
	\caption{The expected savings path}\label{fig:savings_path}
\end{figure}

As the economy evolves under the true persistent transition probability $P_2$, posterior beliefs gradually concentrate on the high-persistence regime. The learning process therefore reduces perceived regime uncertainty and gradually attenuates precautionary motives. Consequently, the consumption gap narrows over time.

A key dynamic force, however, arises from the wealth accumulated during the early periods. The stronger precautionary saving motive initially generates a permanently higher wealth path. Once beliefs stabilize and precautionary pressure declines, this larger asset buffer becomes the dominant determinant of consumption dynamics. The learning economy eventually overtakes the full-information benchmark in terms of aggregate consumption despite its initially lower path. Savings remain persistently higher because the underlying wealth base is larger. This dynamic reversal illustrates the intertemporal reallocation induced by model uncertainty. Early consumption restraint leads to higher wealth accumulation, which subsequently supports higher long-run consumption.

The path of consumption volatility in Figure~\ref{fig:c_volatility_path} completes the mechanism by highlighting its welfare implications.\footnote{The 
	consumption volatility path is computed as 
	$\sqrt{\EE_0 [c^*(w_t,Z_t,\theta_t)^2] - [\EE_0 c^*(w_t,Z_t,\theta_t)]^2}$.} 
The early periods exhibit high and irregular volatility in the learning economy. This pattern reflects the sensitivity of posterior beliefs to limited initial data. When the observed history is short, small realizations of the aggregate state induce sizable revisions in posterior probabilities, producing abrupt changes in perceived transition dynamics. Consumption therefore responds not only to realized shocks but also to belief revisions, generating pronounced short-run volatility.

\begin{figure}%[htb!]
	\includegraphics[width=\linewidth]{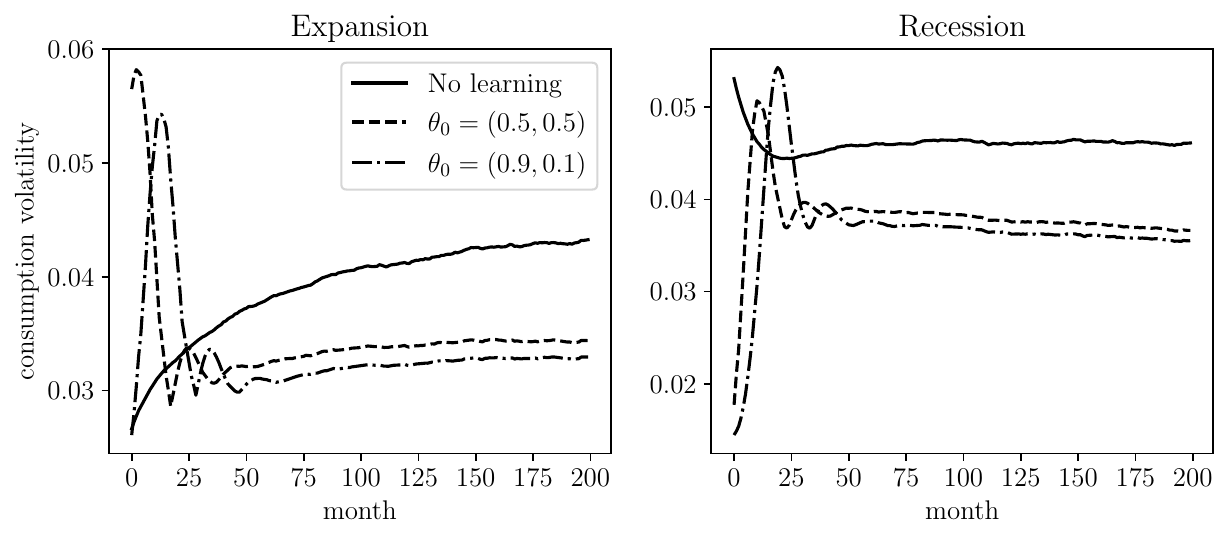}
	\caption{The expected consumption volatility path}\label{fig:c_volatility_path}
\end{figure}

As time progresses, two forces gradually compress volatility. First, posterior beliefs converge toward the true persistent regime, reducing fluctuations in perceived transition risk. Second, and more importantly, the learning economy enters this mature phase with a systematically higher wealth stock. The larger buffer of accumulated wealth dampens the sensitivity of consumption to income shocks, allowing households to absorb adverse realizations with smaller adjustments to current spending. In contrast, the full-information economy accumulates less wealth during the early phase and therefore remains more exposed to income fluctuations. As a result, long-run consumption volatility in the Bayesian learning economy falls below that of the benchmark.

Taken together, the numerical results reveal a coherent dynamic mechanism through which transition uncertainty affects household behavior. In the short run, belief dispersion and posterior updating amplify precautionary motives and generate additional volatility through belief revisions. Over time, however, the same precautionary behavior leads households to accumulate larger wealth buffers. As learning gradually resolves regime uncertainty, the economy transitions into a phase where the higher wealth stock dominates the dynamics. The result is stronger consumption smoothing and lower long-run consumption volatility. Transition uncertainty therefore has fundamentally dynamic consequences. It depresses consumption and increases volatility initially, but ultimately strengthens household balance sheets and improves long-run stability through endogenous wealth accumulation.

\appendix

\section{Proof of Section~\ref{s:or} Results}
\label{s:proof_or}

Let $(\Omega, \fF, \PP)$ be a fixed probability space on which all random 
variables are defined, and let $\EE$ denote expectations under $\PP$. The state 
process $\{Z_t\}$ and the innovation process $\{\epsilon_t\}$ introduced in 
\eqref{eq:law_exog} are defined on this space, with $\pi$ denoting the marginal 
distribution of $\{\epsilon_t\}$.

Expectations conditional on the initial state are denoted by $\EE_{w, z, \theta} [\,\cdot\,]$ when conditioning on $(w_0, Z_0, \theta_0) = (w, z,\theta)$, and by $\EE_{z, \theta}[\,\cdot\,]$ when conditioning on $(Z_0,\theta_0) = (z,\theta)$. Additionally, for any random variable $X$, its next period value is denoted by $\hat X$. By definition, for any random variable $X_t = X(w_t, Z_t, \theta_t, \epsilon_t)$,
\begin{equation*}
	\EE_{w,z,\theta} X = \int X(w,z,\theta,\epsilon) \pi (\diff \epsilon)
	\quad \text{and} \quad 
	\EE_{w,z,\theta} \hat X = 
	\sum_{\hat z\in \ZZ} P_{\theta}(z,\hat z) 
	\int X(\hat w, \hat z, \hat \theta, \hat \epsilon) \pi (\diff \hat \epsilon),
\end{equation*}
where $\hat \theta$ and $\hat w$ are defined according to the law of motion given by \eqref{eq:bayes} and \eqref{eq:w_hat}, respectively. Following this routine, for each $z\in \ZZ$ and $\alpha \in \{0,1\}$, we define 
\begin{equation}\label{eq:Dz}
	D_\alpha(z) := \EE_z \beta R^\alpha 
	= \EE_z \beta(z,\epsilon) R(z,\epsilon)^\alpha.
\end{equation}
Then $D_\alpha$ defined in \eqref{eq:D_mat} can be written equivalently as
\begin{equation*}
	D_\alpha = \diag\{D_\alpha(z_1), \cdots, D_\alpha(z_M)\}.
\end{equation*}
Moreover, for each $\theta \in \TT$, $i \in \{1, \dots, N\}$, and $\alpha\in\{0,1\}$, we define
\begin{equation*}
	K_i^\alpha = P_i D_\alpha, 
	\quad K_\theta^\alpha = P_\theta D_\alpha, 
	\quad \text{and} \quad 
	K_*^\alpha = P^* D_\alpha.
\end{equation*}
Throughout, we assume that Assumptions~\ref{a:utility}--\ref{a:unif_sr} hold. 

\begin{proposition}[Common upper eigenvector]
	\label{pr:unif_sr}
	For each $\alpha \in \{0,1\}$, there exists a strictly positive vector $x_\alpha \in \RR^M$ such that $K_i^\alpha x_\alpha < x_\alpha$ for all $i=1, \dots, N$.
\end{proposition}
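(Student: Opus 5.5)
The plan is to build $x_\alpha$ from a Perron eigenvector of the transposed-order product $D_\alpha P^*$ rather than of $K_*^\alpha = P^* D_\alpha$. The ordering is chosen so that the vector $D_\alpha x_\alpha$, which is what the candidate matrices $P_i$ act on, is increasing in $z$. Stochastic dominance $P_i \preceq P^*$ can then be used to compare every $K_i^\alpha$ with $K_*^\alpha$ at once.

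\emph{Step 1 (Perron eigenvector).} Fix $\alpha\in\{0,1\}$ and set $A := D_\alpha P^*$. The diagonal entries $D_\alpha(z) = \EE_z \beta R^\alpha$ are strictly positive by Assumption~\ref{a:finite_exp}(\ref{i:pos_inc_bR}). Since $P^*$ is irreducible, $A$ is an irreducible nonnegative matrix. It is similar to $P^*D_\alpha$ (via $D_\alpha^{-1} A D_\alpha = P^* D_\alpha$), so $r(A) = r(K_*^\alpha) =: r < 1$ by Assumption~\ref{a:unif_sr}. By the Perron--Frobenius theorem for irreducible matrices, $r>0$ and there is an eigenvector $h$ with $Ah = rh$ that is strictly positive and unique up to scaling.

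\emph{Step 2 ($h$ is increasing).} This is the step I expect to be the main obstacle. Let $\mathcal K$ be the closed convex cone of nonnegative nondecreasing functions on $\ZZ$. Because $P^*$ is monotone, it maps $\mathcal K$ into $\mathcal K$. Multiplying by $D_\alpha$, whose entries are positive and nondecreasing in $z$ since $\beta$ and $R$ are nondecreasing in $z$, also preserves $\mathcal K$: a product of two nonnegative nondecreasing functions is nonnegative and nondecreasing. Hence $A\mathcal K \subset \mathcal K$. To put an eigenvector inside $\mathcal K$, I would first try the Krein--Rutman (or Brouwer) argument on the compact convex set $\{h\in\mathcal K: \sum_z h(z)=1\}$, using the map $h\mapsto Ah/\sum_z (Ah)(z)$. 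This map is well defined because $Ah\neq 0$ for $h\geq 0$, $h\neq 0$: $D_\alpha$ is invertible and $P^*$ is stochastic. The fixed point is a nonnegative eigenvector of $A$. For an irreducible matrix, every nonnegative eigenvector is the Perron vector, so by uniqueness this fixed point is $h$ up to scaling, and therefore $h\in\mathcal K$. An alternative route is the power iteration of $(I+A)$ started at the constant vector $\mathbf 1$; this converges to $h$ by primitivity of $I+A$, and every iterate stays in the closed cone $\mathcal K$.

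\emph{Step 3 (construction and comparison).} Define $x_\alpha := P^* h$. Then $D_\alpha x_\alpha = Ah = r h$, so $x_\alpha = r D_\alpha^{-1} h$ is strictly positive. Moreover $D_\alpha x_\alpha = rh$ is increasing. For each $i$, the dominance $P_i\preceq P^*$ applied to the increasing function $h$ gives
\begin{equation*}
K_i^\alpha x_\alpha = P_i D_\alpha x_\alpha = r P_i h \leq r P^* h = r x_\alpha < x_\alpha,
\end{equation*}
where the last inequality is strict componentwise because $r<1$ and $x_\alpha>0$. This gives the required common vector for all $i=1,\dots,N$, and the argument is identical for $\alpha=0$ and $\alpha=1$.
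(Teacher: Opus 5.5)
Your proof is correct and is essentially the paper's argument. Since $P^*D_\alpha(P^*h) = P^*(Ah) = r\,P^*h$, your $x_\alpha = P^*h$ is exactly the Perron vector of $K_*^\alpha$ that the paper uses. The only difference is that you show $D_\alpha x_\alpha = rh$ is increasing directly, as the Perron vector of $D_\alpha P^*$ (which also preserves the cone of nonnegative nondecreasing functions), whereas the paper first shows $x_\alpha$ is increasing and then multiplies by $D_\alpha$.

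Your Brouwer or $(I+A)$-iteration step is, if anything, more careful than the paper's plain iteration $h_{t+1} = K_*^\alpha h_t/\lambda_\alpha^*$, which needs primitivity rather than mere irreducibility to converge. The one slip is that $Ah \neq 0$ for nonzero $h \geq 0$ follows from irreducibility of $P^*$ (no zero column), not from stochasticity alone.
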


\begin{proof}
	Fix $\alpha \in \{0,1\}$. Since $P^*$ is irreducible by Assumption~\ref{a:unif_sr} and $D_\alpha$ has strictly positive diagonal elements by Assumption~\ref{a:finite_exp}, the matrix $K_*^\alpha = P^* D_\alpha$ is irreducible and nonnegative. Let $\lambda_\alpha^* := r(K_*^\alpha)$. By the Perron–Frobenius theorem, there exists a unique (up to scalar multiples) strictly positive vector $x_\alpha\in \RR^M$ such that $K_*^\alpha x_\alpha = \lambda_\alpha^* x_\alpha$.
	
	Let $h:\ZZ \to \RR$ be any increasing function. Since $D_\alpha(z)$ defined in \eqref{eq:Dz} is increasing in $z$ and strictly positive by Assumption~\ref{a:finite_exp}, $D_\alpha(z) h(z)$ is also increasing in $z$. Since in addition $P^*$ is monotone by Assumption~\ref{a:unif_sr}, the function $K_*^\alpha h$ defined as follows is increasing on $\ZZ$: 
	\begin{equation*}
		(K^\alpha_* h)(z) := \sum_{\hat z\in \ZZ} 
		P^*(z,\hat z) D_\alpha(\hat z) h(\hat z).
	\end{equation*}
	Next, we initialize a sequence of functions $\{h_t\}_{t\geq 0}$ on $\ZZ$ by setting $h_0 \equiv 1$ (the constant function). We then define the sequence recursively by setting 
	\begin{equation*}
		h_{t+1} = \frac{1}{\lambda_\alpha^*} K^\alpha_* h_t.
	\end{equation*}
	Since $h_0$ is increasing, by induction, each $h_t$ is also increasing. Standard results from the theory of positive monotone operators guarantee that the sequence $\{h_t\}$ converges pointwise to a limiting function $h_\infty = \alpha x_\alpha$, where $\alpha >0$ is a constant and $x_\alpha$ is the Perron vector of $K_*^\alpha$. Furthermore, because each $h_t$ is increasing, the pointwise limit preserves monotonicity, so $x_\alpha$ must be increasing on $\ZZ$.
	
	Now consider the fact that $P_i \preceq P^*$ for all $i=1,\dots,N$. For any increasing function $h:\ZZ \to \RR$, it follows that $P_i h \leq P^* h$ pointwise for all $i=1,\dots,N$. Applying this inequality to the increasing vector $D_\alpha x_\alpha$, we obtain
	\begin{equation*}
		K_i^\alpha x_\alpha 
		= P_i D_\alpha x_\alpha 
		\leq P^* D_\alpha x_\alpha
		= K_*^\alpha x_\alpha 
		= \lambda_\alpha^* x_\alpha
		\quad \text{for all $i = 1, \dots, N$. }
	\end{equation*}
	Since $\lambda_\alpha^* <1$ by Assumption~\ref{a:unif_sr} and $x_\alpha \gg 0$, this implies that $K_i^\alpha x_\alpha < x_\alpha$ for all $i=1,\dots,N$, and the stated claim is verified.
\end{proof}

In what follows, for all $t\in \NN$ and $\alpha \in \{0,1\}$, we denote
\begin{equation}\label{eq:f_func}
	f_t^\alpha(z,\theta) 
	:= \EE_{z,\theta} \prod_{i=1}^{t} \beta_i R_i^\alpha 
	:= \EE_{z, \theta} \left[
	\beta_1R_1^\alpha \EE_{Z_1,\theta_1} \left[
	\beta_2 R_2^\alpha \cdots 
	\EE_{Z_{t-1},\theta_{t-1}} \beta_t R_t^\alpha
	\right]
	\right].
\end{equation}
In this function, $\theta_t$ is updated in each step following \eqref{eq:bayes}, the Bayes' rule. 

\begin{proposition}[Eventual contractivity]
	\label{pr:ev_contr}
	There exist an $n\in \NN$ and $\lambda\in(0,1)$ such that $f_t^\alpha(z,\theta) \leq \lambda^t$ for all $\alpha\in \{0,1\}$, $(z, \theta)\in \ZZ \times \TT$, and $t\geq n$. In particular, we have
	\begin{equation}\label{eq:finite_fsum}
		\sum_{t=0}^{\infty} f_t^\alpha (z,\theta) < \infty
		\quad \text{for all } \alpha\in\{0,1\}
		\text{ and } (z, \theta)\in \ZZ \times \TT.
	\end{equation}
\end{proposition}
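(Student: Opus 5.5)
The plan is to use the common upper eigenvector of Proposition~\ref{pr:unif_sr} as a Lyapunov-type weight. The belief $\theta_t$ changes along the path, so $f_t^\alpha$ is not a matrix power. Still, every one-step conditional expectation $\EE_{z,\theta}[\cdot]$ integrates against $P_\theta(z,\cdot)$, which is a convex combination of the rows $P_i(z,\cdot)$. So a bound that holds uniformly over $i$ survives the mixing, whatever the current belief is.

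Fix $\alpha\in\{0,1\}$ and let $x_\alpha\gg 0$ be the vector from Proposition~\ref{pr:unif_sr}. Since there are finitely many $i$ and states, we can set
\begin{equation*}
	\lambda_\alpha := \max_{i\le N}\max_{z\in\ZZ} \frac{(K_i^\alpha x_\alpha)(z)}{x_\alpha(z)} < 1 .
\end{equation*}
Then for every $\theta\in\TT$ we have $K_\theta^\alpha x_\alpha = \sum_i \theta_i K_i^\alpha x_\alpha \le \lambda_\alpha x_\alpha$ pointwise.

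The key inequality is a one-step bound. For any $(z,\theta)$, because $\epsilon$ is {\sc iid} and independent of $Z$,
\begin{equation*}
	\EE_{z,\theta}\, \hat\beta \hat R^\alpha x_\alpha(\hat Z) = \sum_{\hat z} P_\theta(z,\hat z) D_\alpha(\hat z) x_\alpha(\hat z) = (K_\theta^\alpha x_\alpha)(z) \le \lambda_\alpha x_\alpha(z).
\end{equation*}
Next, define $g_t(z,\theta):=\EE_{z,\theta}\bigl[\prod_{i=1}^t \beta_iR_i^\alpha\, x_\alpha(Z_t)\bigr]$, using the nested form in \eqref{eq:f_func} with the weight $x_\alpha(Z_t)$ placed inside the innermost expectation. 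I claim by induction that $g_t\le \lambda_\alpha^t x_\alpha(z)$. The inner expectation $\EE_{Z_{t-1},\theta_{t-1}}[\beta_tR_t^\alpha x_\alpha(Z_t)]$ is at most $\lambda_\alpha x_\alpha(Z_{t-1})$ by the one-step bound, applied at the random but fixed-at-that-stage belief $\theta_{t-1}$. Peeling layers outward then gives the claim. The only thing to check is that the Bayes update enters solely through which $\theta$ indexes $P_\theta$, and the bound is uniform in $\theta$, so the update is harmless.

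To finish, we get $f_t^\alpha(z,\theta)\le g_t(z,\theta)/\min x_\alpha \le C\lambda_\alpha^t$ with $C:=\max x_\alpha/\min x_\alpha$. Take any $\lambda\in(\max_\alpha\lambda_\alpha,1)$ and choose $n$ so that $C(\lambda_\alpha/\lambda)^t\le 1$ for all $t\ge n$ and both $\alpha$. The summability \eqref{eq:finite_fsum} follows as a geometric tail, together with finitely many terms. Each of those terms is finite, because $f_t^\alpha\le C\lambda_\alpha^t$ holds for all $t$; this uses only $D_\alpha(z)<\infty$ from Assumption~\ref{a:finite_exp}, and $f_0^\alpha=1$.

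The main obstacle is the one-step bound holding uniformly over the endogenous belief. Proposition~\ref{pr:unif_sr} resolves it, so what remains is the careful measurability and iterated-conditioning argument in the induction.
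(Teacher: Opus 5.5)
Your proof is correct and follows essentially the same route as the paper. Both use the common upper eigenvector from Proposition~\ref{pr:unif_sr} to get the bound $K_\theta^\alpha x_\alpha \le \lambda_\alpha x_\alpha$ uniformly in $\theta$, iterate it to obtain $f_t^\alpha \le C\lambda_\alpha^t$ with $C=\max x_\alpha/\min x_\alpha$, and absorb $C$ into a slightly larger $\lambda$. Your iterated-conditioning argument along the random posterior path is in fact more careful than the paper's identity $f_t^\alpha(z,\theta)=\me_z^\top K_\theta^\alpha K_{\theta_1}^\alpha\cdots K_{\theta_{t-1}}^\alpha\1$, which treats $\theta_1,\dots,\theta_{t-1}$ as deterministic and so ignores their dependence on the realized states.
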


\begin{proof}
	We first show that, for all $t\in \NN$,
	\begin{equation}\label{eq:exp_mat}
		f_t^\alpha(z,\theta)
		= \me_z^\top (K_\theta^\alpha 
		K_{\theta_1}^\alpha \cdots 
		K_{\theta_{t-1}}^\alpha) \1,
	\end{equation}
	where $\me_z^\top$ is the transpose of the $z$-th simple basis vector, i.e., a row vector with the $z$-th entry equal to $1$ and the other entries equal to $0$. Note that when $t=1$, 
	\begin{equation*}
		f_1^\alpha(z,\theta) 
		= \EE_{z, \theta} \beta_1 R_1^\alpha 
		= \sum_{\hat z\in \ZZ} P_\theta(z,\hat z) D_\alpha(\hat z) 
		= \me_z^\top P_\theta (D_\alpha \1) 
		= \me_z^\top K_\theta^\alpha \1.
	\end{equation*}
	Hence, \eqref{eq:exp_mat} holds for $t=1$. Suppose \eqref{eq:exp_mat} holds for arbitrary $t$. It remains to show that it holds for $t+1$. By the induction hypothesis and the Markov property,
	\begin{align*}
		f_{t+1}^\alpha (z,\theta) 
		&= \EE_{z,\theta} \beta_1 R_1^\alpha f_t^\alpha(Z_1, \theta_1) 
		= \sum_{\hat z\in \ZZ} P_\theta(z,\hat z) 
		D_\alpha(\hat z) f_t^\alpha(\hat z, \theta_1)  \\
		&= \sum_{\hat z\in \ZZ} P_\theta(z,\hat z) 
		D_\alpha(\hat z) \me_{\hat z}^\top 
		(K_{\theta_1}^\alpha \cdots K_{\theta_{t-1}}^\alpha) \1 
		= \me_z^\top K_\theta^\alpha 
		(K_{\theta_1}^\alpha \cdots K_{\theta_{t-1}}^\alpha) \1.
	\end{align*}
	Hence \eqref{eq:exp_mat} holds by induction.
	
	Fix $\alpha \in \{0,1\}$. By Proposition~\ref{pr:unif_sr}, there exists a strictly positive vector $x_\alpha \in \RR^M$ and $\eta \in (0,1)$ with $K_i^\alpha x_\alpha \leq \eta x_\alpha$ for all $i \in \{1, \dots, N\}$. Hence,
	\begin{equation*}
		\max_{i\in \{1,\dots,N\}} \max_{z\in \ZZ}
		\frac{(K_i^\alpha x_\alpha)(z)}{x_\alpha(z)} \leq \eta.
	\end{equation*}
	Then for all $\theta\in\TT$, we have
	\begin{equation}\label{eq:unif_ev}
		K_\theta^\alpha x_\alpha 
		= \sum_{i=1}^N \theta_i K_i^\alpha x_\alpha 
		\leq \sum_{i=1}^N \theta_i (\eta x_\alpha) = \eta x_\alpha.
	\end{equation}
	We define a vector norm on $\RR^M$ by
	\begin{equation*}
		\|v\| := \max_{z\in \ZZ} \frac{|v(z)|}{x_\alpha(z)},
		\qquad v \in \RR^M.
	\end{equation*}
	In particular, for all nonnegative vector $v$ in $\RR^M$, we have $v \leq \|v\| x_\alpha$. Multiplying both sides by $K_\theta^\alpha$ and applying \eqref{eq:unif_ev}, we obtain $K_\theta^\alpha v \leq \|v\| K_\theta^\alpha x_\alpha \leq \eta \|v\| x_\alpha$, hence $\|K_\theta^\alpha v\| \leq \eta \|v\|$. Next, we show that 
	\begin{equation}\label{eq:prodK_ub}
		\|K_\theta^\alpha K_{\theta_1}^\alpha 
		\cdots K_{\theta_{t-1}}^\alpha v \| 
		\leq \eta^t \|v\|
		\quad \text{for all } v \geq 0,\,  
		\theta \in \TT, 
		\text{ and } t \in \NN.
	\end{equation}
	We have shown that it holds for $t=1$. Now suppose \eqref{eq:prodK_ub} holds for arbitrary $t$. It remains to show that it holds for $t+1$. Let $\nu := K_{\theta_1}^\alpha \cdots K_{\theta_t}^\alpha v$. Then by the induction hypothesis, we have
	\begin{equation*}
		\|K_\theta^\alpha K_{\theta_1}^\alpha \cdots K_{\theta_t}^\alpha v \| 
		= \|K_\theta^\alpha \nu\| \leq \eta \|\nu\|
		\leq \eta^{t+1} \|v\|.
	\end{equation*}
	Hence, \eqref{eq:prodK_ub} holds by induction. Combining with \eqref{eq:exp_mat} and applying the dual norm inequality, we obtain
	\begin{equation*}
		f_t^\alpha(z,\theta) 
		\leq |\me_z^\top K_\theta^\alpha K_{\theta_1}^\alpha 
		\cdots K_{\theta_{t-1}}^\alpha \1|
		\leq \|\me_z\|_* 
		\|K_\theta^\alpha K_{\theta_1}^\alpha 
		\cdots K_{\theta_{t-1}}^\alpha \1 \| 
		\leq \eta^t \|e_z\|_* \|\1\|
	\end{equation*}
	for all $(z,\theta)\in \ZZ \times \TT$ and $t\in\NN$, where $\|\cdot\|_*$ is the dual norm of $\|\cdot\|$ satisfying
	\begin{equation*}
		\|\me_z\|_* := \sup_{\|v\|\leq 1} |\me_z^\top v|.
	\end{equation*}
	Since $\ZZ$ is finite-valued, we have 
	\begin{equation*}
		C := \max \left\{
		\max_{z\in \ZZ} \|e_z\|_* \|\1\|, 1
		\right\} < \infty.
	\end{equation*}
	Hence, there exists $n \in \NN$ such that $\lambda:= \eta C^{1/n} \in (0,1)$. Therefore, 
	\begin{equation*}
		f_t^\alpha(z,\theta) \leq (\eta C^{1/t})^t \leq \lambda^t
	\end{equation*} 
	for all $t\geq n$ and $(z,\theta)\in \ZZ \times \TT$. Since $\alpha$ is chosen arbitrarily, the first claim is verified. The second claim, namely, \eqref{eq:finite_fsum}, follows immediately from the first claim.
\end{proof}

\begin{lemma}[Finite expected discounted wealth]
	\label{lm:max_path}
	For the maximal asset path $\{ \tilde{w}_t \}$ defined by 
	\begin{equation}
		\label{eq:max_path}
		\tilde{w}_{t+1} = R_{t+1} \, \tilde{w}_t + Y_{t+1},
	\end{equation}
	where $ (\tilde{w}_0, \tilde{Z}_0,\tilde \theta_0) = (w,z,\theta) \in \SS$ is given, we have
	\begin{equation*}
		F(w, z,\theta) := \sum_{t = 0}^\infty \EE_{w,z,\theta} 
		\prod_{i=0}^t \beta_i \tilde{w}_t < \infty.
	\end{equation*}
\end{lemma}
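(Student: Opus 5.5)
Unroll the recursion \eqref{eq:max_path} to write $\tilde w_t$ explicitly:
\begin{equation*}
	\tilde w_t = \left(\prod_{k=1}^t R_k\right) w + \sum_{j=1}^{t} \left(\prod_{k=j+1}^{t} R_k\right) Y_j .
\end{equation*}
Multiplying by $\prod_{i=0}^t \beta_i$ (with $\beta_0=1$) gives
\begin{equation*}
	\prod_{i=0}^t \beta_i \tilde w_t = w\prod_{i=1}^t \beta_i R_i + \sum_{j=1}^t \left(\prod_{i=1}^{j-1}\beta_i\right)\beta_j Y_j \left(\prod_{k=j+1}^t \beta_k R_k\right).
\end{equation*}
Everything here is nonnegative, so expectations and the interchange of the sums over $t$ and $j$ are justified by Tonelli. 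The first term contributes $w\sum_t f_t^1(z,\theta)$, which is finite by \eqref{eq:finite_fsum} in Proposition~\ref{pr:ev_contr}.

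For the second term, I will condition successively using the Markov structure of $(Z_t,\theta_t)$ and the {\sc iid} innovations, as in the derivation of \eqref{eq:exp_mat}. Conditioning on $(Z_j,\theta_j)$ and using the Markov property, the tail factor has conditional expectation $f^1_{t-j}(Z_j,\theta_j)$. This is bounded by a constant $B_1:=\sup_{z,\theta}\sup_{s\ge 0} f^1_s(z,\theta)<\infty$. That constant is finite: for $s\ge n$ the bound $\lambda^s<1$ applies, and for $s<n$ the matrix representation \eqref{eq:exp_mat} together with $\|K_\theta^\alpha\|$ uniformly bounded over the compact simplex $\TT$ gives a uniform bound.

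The difficulty is that this conditional bound does not immediately yield summability over $t$. One needs $\sum_{t\ge j} f^1_{t-j}(Z_j,\theta_j)$ to be bounded uniformly in $(Z_j,\theta_j)$. That also follows from Proposition~\ref{pr:ev_contr}, since $\sum_s f_s^1 \le n B_1 + \sum_{s\ge n}\lambda^s=:B<\infty$ uniformly.

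After swapping sums, the second term is therefore at most
\begin{equation*}
	B\sum_{j\ge1}\EE_{z,\theta}\left(\prod_{i=1}^{j-1}\beta_i\right)\beta_j Y_j .
\end{equation*}
Conditioning on $(Z_{j-1},\theta_{j-1})$ gives $\EE[\beta_jY_j\mid Z_{j-1},\theta_{j-1}]=\sum_{\hat z}P_{\theta_{j-1}}(Z_{j-1},\hat z)\EE_{\hat z}\beta Y\le \bar y:=\max_{z}\EE_z\beta Y$. This is finite by Assumption~\ref{a:finite_exp} and the finiteness of $\ZZ$. The $j$-th summand is thus bounded by $\bar y\, f^0_{j-1}(z,\theta)$, with $f^0_0:=1$, and summing via \eqref{eq:finite_fsum} for $\alpha=0$ gives a finite total.

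The main obstacle is the careful bookkeeping of the nested conditional expectations. The belief $\theta$ evolves with the path, so the "tail" expectation must be the path-dependent $f^1_{t-j}(Z_j,\theta_j)$, and its bound has to be uniform in $\theta$. That uniformity is exactly what the common eigenvector construction in Propositions~\ref{pr:unif_sr}--\ref{pr:ev_contr} supplies, and I would state it explicitly as the uniform bound $B$ above.
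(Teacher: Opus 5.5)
Your proposal is correct and follows the same route as the paper. You unroll the recursion, swap the sums by monotone convergence, bound the path-dependent tail $\sum_{s} f^1_s(Z_j,\theta_j)$ by a constant uniform in $(z,\theta)$, bound $\EE_{\hat z}\beta Y$ by $\max_z \EE_z \beta Y$, and sum $f^0$ via \eqref{eq:finite_fsum}. Your explicit check that the tail bound is uniform over $\TT$, including the finitely many $s<n$, supplies a step that the paper's constant $C_1$ takes for granted.
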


\begin{proof}
	Iterating backward on \eqref{eq:max_path}, we can show that
	\begin{equation*}
		\tilde{w}_t = 
		\prod_{i=1}^t R_i w + \sum_{j=1}^t Y_j \prod_{i=j+1}^t R_i.
	\end{equation*}
	Taking expectation yields
	\begin{equation*}
		\EE_{w,z,\theta} \prod_{i=0}^t \beta_i \tilde{w}_t 
		= \EE_{z,\theta} \prod_{i=1}^t \beta_i R_i w + 
		\sum_{j=1}^t       
		\EE_{z,\theta}             
		\prod_{i=j+1}^t \beta_i R_i
		\prod_{k=0}^j \beta_k Y_j.   
	\end{equation*}
	Applying the monotone convergence theorem and the Markov property gives
	\begin{align*}
		F(w,z,\theta) &=
		\sum_{t=0}^{\infty} \EE_{z,\theta} \prod_{i=1}^t \beta_i R_i w +
		\sum_{t=0}^{\infty} \sum_{j=1}^t       
		\EE_{z,\theta} \prod_{i=j+1}^t \beta_i R_i 
		\prod_{k=0}^j \beta_k Y_j    \\
		&= \sum_{t=0}^{\infty} \EE_{z,\theta} \prod_{i=1}^t \beta_i R_i w +
		\sum_{j=1}^{\infty} \sum_{i=0}^\infty \EE_{z,\theta} 
		\prod_{k=0}^j \beta_k Y_j
		\prod_{\ell=1}^{i} \beta_{j+\ell} R_{j+\ell}    \\
		&= \sum_{t=0}^{\infty} \EE_{z,\theta} \prod_{i=1}^t \beta_i R_i w +
		\sum_{j=1}^{\infty} \EE_{z,\theta} \prod_{k=0}^j \beta_k \, Y_j \,
		\EE_{Z_j,\theta_j} \sum_{i=0}^\infty \prod_{\ell=1}^{i} \beta_{\ell} R_{\ell}.
	\end{align*}
	By Proposition~\ref{pr:ev_contr} and Assumption~\ref{a:finite_exp}, there exist constants $C_1, C_2, C_3 \in(0,\infty)$ such that
	\begin{equation*}
		\sum_{t=0}^{\infty} f_t^1(z,\theta) \leq C_1,
		\quad \sum_{t=0}^{\infty} f_t^0(z,\theta) \leq C_2,
		\quad \text{and} \quad 
		\max_{z\in \ZZ} \EE_z \beta Y \leq C_3. 
	\end{equation*}
	Hence, we obtain
	\begin{align*}
		F(w,z,\theta) &= \sum_{t=0}^{\infty} f_t^1(z,\theta) w + 
		\sum_{t=1}^{\infty} \EE_{z,\theta} \prod_{i=0}^{t} \beta_i Y_t 
		\sum_{k=0}^\infty f_k^1(Z_t, \theta_t) \\
		&\leq C_1 w + 
		    C_1 \sum_{t=1}^{\infty} \EE_{z,\theta} \prod_{i=0}^{t} \beta_i Y_t 
		= C_1 w + C_1 \sum_{t=1}^{\infty} \EE_{z,\theta} 
		\prod_{i=0}^{t-1} \beta_i \EE_{Z_t} \beta Y \\
		&\leq C_1 w + C_1 C_3 \sum_{t=0}^{\infty} f_t^0(z,\theta) 
		\leq C_1 w + C_1 C_2 C_3 < \infty.
	\end{align*}
	The stated claim is verified. 
\end{proof}

\begin{proof}[Proof of Proposition~\ref{pr:suff_foc}]
	Let $c$ be a policy in $\cC$ satisfying \eqref{eq:foc}. To see that any asset path generated by $c$ satisfies the transversality condition \eqref{eq:tvc}, note that $c\in \cC$ implies existence of a $B \in \RR$ such that 
	\begin{equation}\label{eq:bd_uprime}
		u'(w) \leq (u' \circ c)(w,z,\theta) \leq u'(w) + B
	\end{equation}
	for all $(w,z,\theta) \in \SS$. Therefore,
	\begin{equation}
			\label{eq:ineq_betu'ca}
			\EE_{w,z,\theta} \prod_{i=0}^t \beta_i \, (u' \circ c) (w_t, Z_t,\theta_t) w_t
			\leq \EE_{w, z,\theta} \prod_{i=0}^t \beta_i \, u'(w_t) w_t + 
			B \, \EE_{w, z,\theta} \prod_{i=0}^t \beta_i \, w_t .
		\end{equation}
	Regarding the first term on the right hand side of \eqref{eq:ineq_betu'ca}, 
	fix $A > 0$ and observe that
	\begin{align*}
			u'(w_t) w_t 
			&= u'(w_t) w_t \1 \{w_t \leq A\} + u'(w_t) w_t \1 \{w_t > A\}    \\
			&\leq A u'(w_t) + u'(A) w_t   
			\leq A u'(Y_t) + u'(A) \tilde{w}_t
		\end{align*}
	with probability one, where $\tilde{w}_t$  is the maximal path defined in \eqref{eq:max_path}. We 
	then have
	\begin{equation}
		\label{eq:ineq_betEu'a}
		\EE_{w,z,\theta} \prod_{i=0}^t \beta_i \, u'(w_t) w_t 
		\leq A \EE_{z,\theta} \prod_{i=0}^t \beta_i \, u'(Y_t) 
		+ u'(A) \EE_{w,z,\theta} \prod_{i=0}^t \beta_i \, \tilde{w}_t.
	\end{equation} 
	By Assumption~\ref{a:finite_exp} and the Markov property, there exists $C_0\in(0,\infty)$ such that 
	\begin{equation*}
		A \EE_{z,\theta} \prod_{i=0}^t \beta_i u'(Y_t) 
		= A \EE_{z,\theta} \prod_{i=0}^{t-1} \beta_i \EE_{Z_t} \beta u'(Y) 
		\leq C_0 A f_{t-1}^0(z,\theta),
	\end{equation*}
	and the last expression converges to zero as $t \to \infty$ by Proposition~\ref{pr:ev_contr}. The second term in \eqref{eq:ineq_betEu'a} also converges to zero by Lemma~\ref{lm:max_path}. Hence $\EE_{w,z,\theta} \prod_{i=0}^t \beta_i \, u'(w_t) w_t\to 0$ as
	$t \to \infty$, which, combined with \eqref{eq:ineq_betu'ca} and another
	application of Lemma~\ref{lm:max_path}, gives our desired result.
\end{proof}

\begin{lemma}
	\label{lm:complete}
	$(\cC, \rho)$ is a complete metric space.
\end{lemma}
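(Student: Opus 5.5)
The plan is to move the problem into a space of bounded functions through the map $c \mapsto u'\circ c - u'$ and inherit completeness from there.

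\textbf{Step 1 (embedding).} Let $\mathcal{B}$ be the space of bounded real-valued functions on $\SS$ with the sup norm; it is complete. Define
\begin{equation*}
	\Phi(c) := u'\circ c - u'(w).
\end{equation*}
By \eqref{eq:upc_bd}, $\Phi$ maps $\cC$ into $\mathcal{B}$, and by construction $\rho(c_1,c_2) = \|\Phi(c_1)-\Phi(c_2)\|$.

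\textbf{Step 2 (injectivity and metric axioms).} Since $u'$ is strictly decreasing by Assumption~\ref{a:utility}, it is injective on $[0,\infty)$, where we allow the value $u'(0)=\infty$. Hence $\Phi$ is injective, and $\rho$ is a genuine metric, with finiteness already noted in the text. Some care is needed on $\SS$: when $u'(0)=\infty$, the point $w=0$ is excluded from $\SS$, and $c(w,z,\theta)=0$ with $w>0$ would make $u'\circ c$ infinite, contradicting \eqref{eq:upc_bd}. So $u'\circ c$ is finite-valued in every case.

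\textbf{Step 3 (Cauchy sequences).} Take a $\rho$-Cauchy sequence $\{c_n\}$ in $\cC$. Then $\{u'\circ c_n\}$ is uniformly Cauchy, so it converges uniformly to some function $g$ with $g-u'$ bounded. The candidate limit is
\begin{equation*}
	c := (u')^{-1}\circ g.
\end{equation*}
To see this is well defined, note that $u'(w)\le u'(c_n(w,z,\theta))\le u'(0)$ because $0\le c_n\le w$. These inequalities pass to the limit, giving $u'(w)\le g\le u'(0)$ pointwise. We also need $g$ to lie in the range of $u'$, namely $[u'(w),u'(0)]$ intersected with that range, which is an interval by continuity of $u'$. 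Hence $c$ is well defined with $0\le c\le w$. The case $g=u'(0)<\infty$ simply gives $c=0$.

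\textbf{Step 4 (membership in $\cC$).} Three properties must be checked.
\begin{itemize}
	\item \emph{Monotonicity in $w$}: each $u'\circ c_n$ is decreasing in $w$, so the pointwise limit $g$ is too, and $c$ is therefore increasing in $w$.
	\item \emph{Bound \eqref{eq:upc_bd}}: this holds because $g-u'$ is bounded.
	\item \emph{Continuity}: this is the main obstacle. $g$ is a uniform limit of the continuous functions $u'\circ c_n$, hence continuous wherever it is finite. Composing with the continuous inverse $(u')^{-1}$ on the range of $u'$ gives continuity of $c$. At points where $g=u'(0)<\infty$, continuity still holds because $(u')^{-1}$ extends continuously to $u'(0)$.
\end{itemize}
Finally, $\rho(c_n,c)=\|u'\circ c_n - g\|\to 0$, so $c_n\to c$ in $(\cC,\rho)$, which proves completeness.
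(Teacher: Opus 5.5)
Your proof is correct and follows the route the paper has in mind when it omits the argument as a straightforward extension of Proposition~4.1 of \cite{li2014solving}: identify $c$ with $u'\circ c$, use completeness of the bounded functions under the sup norm after subtracting $u'$, and recover the limit policy by inverting $u'$. Your extra handling of the case $u'(0)<\infty$ is exactly the ``extension'' needed beyond the Inada setting of that reference. There you allow $c=0$ and use the continuous extension of $(u')^{-1}$ at $u'(0)$, which relies on the implicit convention $u'(0)=\lim_{c\downarrow 0}u'(c)$.
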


\begin{proof}
	The proof is a straightforward extension of Proposition~4.1 of \cite{li2014solving} and thus omitted. 
	A full proof is available from the authors upon request.
\end{proof}

\begin{lemma}[Well-defined operator]
	\label{lm:welldef_T}
	For all $c \in \cC$ and $(w,z,\theta) \in \SS$, there exists a unique $\xi \in [0,w]$ that solves \eqref{eq:tio}.
\end{lemma}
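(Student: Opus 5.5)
Fix $c\in\cC$ and $(w,z,\theta)\in\SS$. I will reduce the claim to an intermediate value argument for a monotone function of $\xi$. Define, for $\xi\in[0,w]$,
\begin{equation*}
	\psi(\xi) := \EE_{z,\theta}\, \hat\beta \hat R\, (u'\circ c)\big(\hat R(w-\xi)+\hat Y, \hat Z, \hat\theta\big),
\end{equation*}
where $\hat\theta$ is obtained from $\theta$ by Bayes' rule \eqref{eq:bayes} and so does not depend on $\xi$. Set $g(\xi):=\min\{\max\{\psi(\xi),u'(w)\},u'(0)\}$. A solution of \eqref{eq:tio} is then a zero of $h(\xi):=u'(\xi)-g(\xi)$ on $[0,w]$.

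\textbf{Step 1: $\psi$ is finite, continuous and nondecreasing.} Since $c\in\cC$ is increasing in wealth and $u'$ is decreasing, the map $\xi\mapsto (u'\circ c)(\hat R(w-\xi)+\hat Y,\hat Z,\hat\theta)$ is nondecreasing pathwise. For finiteness I will use the bound \eqref{eq:bd_uprime}, $(u'\circ c)\le u'(\cdot)+B$, together with $\hat w\ge \hat Y$. These give
\begin{equation*}
	\psi(\xi)\le \EE_{z,\theta}\,\hat\beta\hat R\,u'(\hat Y)+B\,\EE_{z,\theta}\,\hat\beta\hat R = \sum_{\hat z}P_\theta(z,\hat z)\big[\EE_{\hat z}\beta R u'(Y)+B D_1(\hat z)\big]<\infty
\end{equation*}
by Assumption~\ref{a:finite_exp}. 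This bound is uniform in $\xi$ and also dominates the integrand, so continuity of $\psi$ on $[0,w)$ follows from dominated convergence and continuity of $c$. At $\xi=w$ (and in particular at $\hat w=\hat Y$) the same domination works, provided $\hat Y>0$ a.s. when $u'(0)=\infty$. That positivity is guaranteed by $\EE_z\beta u'(Y)<\infty$ and $\beta>0$. When $u'(0)<\infty$, continuity of $c$ at $w=0$ on $\SS$ suffices. Since $\min$ and $\max$ preserve continuity and monotonicity, $g$ is continuous and nondecreasing, and takes values in $[u'(w),u'(0)]$.

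\textbf{Step 2: existence and uniqueness.} Because $u'$ is strictly decreasing, $h$ is strictly decreasing and continuous on $(0,w]$, which gives uniqueness immediately. At $\xi=w$ we have $h(w)=u'(w)-g(w)\le 0$ since $g\ge u'(w)$. Near $\xi=0$ there are two cases. If $u'(0)=\infty$, then $u'(\xi)\to\infty$ while $g$ stays bounded, because $\psi$ is bounded by Step 1; so $h>0$ near $0$ and the intermediate value theorem yields a unique root in $(0,w]$. If $u'(0)<\infty$, then $h(0)=u'(0)-g(0)\ge 0$ since $g\le u'(0)$, and the intermediate value theorem on $[0,w]$ gives a root, unique by strict monotonicity. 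The case $w=0$ is trivial, with $\xi=0$.

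\textbf{Main obstacle.} The step requiring most care is the uniform integrable bound in Step 1 at the boundary $\xi=w$, where $\hat w=\hat Y$ may be small. This is exactly where \eqref{eq:upc_bd} and the moment conditions $\EE_z\beta R u'(Y)<\infty$ of Assumption~\ref{a:finite_exp} enter. I will also need to be careful with $u'(0)=\infty$ and extended values, making sure $g$ is finite so that $h>0$ near zero.
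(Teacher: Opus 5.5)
Your proof is correct and follows essentially the same route as the paper's. You get uniqueness from the strictly decreasing $u'$ against a nondecreasing right-hand side, and existence from the intermediate value theorem, with continuity from dominated convergence under the bound $\hat\beta\hat R\,(u'(\hat Y)+B)$ and the endpoints $\xi=w$ and $\xi\downarrow 0$ checked separately for $u'(0)<\infty$ and $u'(0)=\infty$. One attribution point: finiteness of $\EE_{z,\theta}\hat\beta\hat R$ (that is, of $D_1(\hat z)$) is not among the moment conditions of Assumption~\ref{a:finite_exp} but is implicit in Assumption~\ref{a:unif_sr}, which is why the paper cites Assumptions~\ref{a:finite_exp}--\ref{a:unif_sr} there; your explicit treatment of $\hat Y>0$ a.s.\ at $\xi=w$ is a detail the paper leaves implicit.
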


\begin{proof}
	For given $c \in \cC$, we can rewrite \eqref{eq:tio} as 
	\begin{equation}\label{eq:psi_c}
		u'(\xi) 
		= \psi_c(\xi, w,z,\theta) 
		:= \min\{\max \{g_c(\xi, w,z,\theta), u'(w)\}, u'(0)\},
	\end{equation}
	where $g_c$ is a function on 
	\begin{equation}\label{eq:G_set}
		G:= \{(\xi,w,z,\theta)\in [0,\infty) \times \SS: 0 \leq \xi \leq w \}
	\end{equation}
	defined by
	\begin{equation}\label{eq:g_c}
		g_c(\xi,w,z,\theta) 
		:= \EE_{z,\theta} \hat \beta \hat R 
		(u'\circ c)(\hat R(w-\xi) + \hat Y, \hat Z, \hat \theta).
	\end{equation}
	Fix $c \in \cC$ and $(w,z,\theta) \in \SS$. Since $c \in \cC$, the map $\xi \mapsto \psi_c(\xi,w,z,\theta)$ is increasing. Additionally, because $u'$ is strictly decreasing, \eqref{eq:tio} can have at most one solution, and uniqueness is verified.
	
	To establish existence, we apply the intermediate value theorem. It suffices to verify the following three conditions:
	\begin{enumerate}
		\item[(a)] the map $\xi \mapsto \psi_c(\xi, w, z,\theta)$ is continuous on $[0, w]$,
		\item[(b)] there exists a $\xi \in [0,w]$ such that 
		$u'(\xi) \leq \psi_c(\xi, w, z,\theta)$, and
		\item[(c)] there exists a $\xi \in [0,w]$ such that 
		$u'(\xi) \geq \psi_c(\xi, w, z,\theta)$.
	\end{enumerate}
	For part (a), it is sufficient to show that $\xi \mapsto g_c(\xi,w,z,\theta)$ is continuous on $[0,w]$. To this end, fix $\xi \in [0,w]$ and let $\xi_n \to \xi$. Since $c\in \cC$, there exists a constant $B \in \RR_+$ such that 
	\begin{equation*}
		u'(w) \leq (u'\circ c)(w,z,\theta) \leq u'(w) + B
		\quad \text{for all } (w,z,\theta) \in \SS.
	\end{equation*}
	By the monotonicity of $u'$, we have
	\begin{equation}\label{eq:uppbd_ruprmc}
		\hat{\beta} \hat{R} 
		    (u'\circ c) (\hat{R} \left(w - \xi \right) + \hat{Y}, 
		    \hat{Z}, \hat \theta)
		\leq \hat{\beta} \hat{R} 
		    (u'\circ c)(\hat{Y}, \hat{Z}, \hat\theta)
		\leq \hat{\beta} \hat{R} u'(\hat{Y}) 
		    + \hat{\beta} \hat{R} B.
	\end{equation}
	Furthermore, by Assumptions~\ref{a:finite_exp}--\ref{a:unif_sr}, we know that, for all $(z,\theta)\in \ZZ\times\TT$,
	\begin{equation*}
		\EE_{z,\theta} \hat \beta \hat R < \infty
		\quad \text{and} \quad
		\EE_{z,\theta} \hat \beta \hat R u'(\hat Y) < \infty.
	\end{equation*}
	Hence, by the dominated convergence theorem and the continuity of $c$, we 
	can conclude that $g(\xi_n,w,z,\theta) \to g(\xi,w,z,\theta)$. This proves that $\xi \mapsto \psi_c(\xi, w, z,\theta)$ is continuous.
	
	Part~(b) holds trivially by setting $\xi=w$. When $u'(0) < \infty$, part~(c) holds trivially by setting $\xi=0$. When $u'(0) = \infty$, we have $u'(\xi) \to \infty$ as $\xi \to 0$. Since in addition $\xi \mapsto \psi_c(\xi, w, z,\theta)$ is increasing and always finite (since it is continuous as shown in the previous paragraph), part~(c) holds when $\xi$ gets sufficiently close to zero. The proof is now complete.
\end{proof}

\begin{lemma}[Self-mapping]
	\label{lm:self_map}
	We have $Tc \in \cC$ for all $c \in \cC$.
\end{lemma}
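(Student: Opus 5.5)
To prove Lemma~\ref{lm:self_map}, I fix $c\in\cC$ and check the four defining properties of $\cC$ for $Tc$ one at a time: the feasibility bounds, monotonicity in $w$, the marginal utility bound \eqref{eq:upc_bd}, and continuity on $\SS$. Throughout I use the representation $u'(\xi)=\psi_c(\xi,w,z,\theta)$ from \eqref{eq:psi_c}, with $g_c$ defined in \eqref{eq:g_c}. Feasibility, $0\le Tc\le w$, is immediate from Lemma~\ref{lm:welldef_T}.

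\textbf{Monotonicity in $w$.} Suppose $w_1<w_2$ but $\xi_1:=Tc(w_1,z,\theta)>\xi_2:=Tc(w_2,z,\theta)$.
\begin{itemize}
\item Then $w_1-\xi_1<w_2-\xi_2$.
\item Since $c$ is increasing in $w$ and $u'$ is decreasing, $g_c(\xi_1,w_1,z,\theta)\ge g_c(\xi_2,w_2,z,\theta)$.
\item Also $u'(w_1)\ge u'(w_2)$, so $\psi_c(\xi_1,w_1)\ge\psi_c(\xi_2,w_2)$.
\item But $\psi_c(\xi_1,w_1)=u'(\xi_1)<u'(\xi_2)=\psi_c(\xi_2,w_2)$, a contradiction.
\end{itemize}

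\textbf{Marginal utility bound.} From $\xi\le w$ and the max with $u'(w)$, we get $u'(Tc)\ge u'(w)$. For the upper bound:
\begin{itemize}
\item If $\xi<w$, then $u'(\xi)\le \max\{g_c,u'(w)\}$.
\item By \eqref{eq:uppbd_ruprmc}, $g_c\le \EE_{z,\theta}\hat\beta\hat R\,u'(\hat Y)+B\,\EE_{z,\theta}\hat\beta\hat R$.
\item This quantity is bounded uniformly over $(z,\theta)$. Indeed, $\EE_{z,\theta}\hat X=\sum_{\hat z}P_\theta(z,\hat z)\EE_{\hat z}X$, which is at most $\max_{\hat z}\EE_{\hat z}X$. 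Both $\EE_{\hat z}\beta R u'(Y)$ and $\EE_{\hat z}\beta R$ are finite by Assumption~\ref{a:finite_exp} and Assumption~\ref{a:unif_sr}, and $\ZZ$ is finite.
\item Hence $u'(Tc)-u'(w)\le \max_{\hat z}\EE_{\hat z}\beta R u'(Y)+B\max_{\hat z}\EE_{\hat z}\beta R<\infty$.
\end{itemize}
This bound is uniform in $\theta$ because the mixture weights sum to one.

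\textbf{Continuity.} This is the main obstacle, since $\theta$ is a continuous state that enters both $P_\theta$ and the Bayes update $\hat\theta$.
\begin{itemize}
\item First I show $g_c$ is jointly continuous on $G$ in \eqref{eq:G_set}. Write $g_c=\sum_{\hat z}P_\theta(z,\hat z)\int\hat\beta\hat R\,(u'\circ c)(\hat R(w-\xi)+\hat Y,\hat z,\hat\theta(\theta,z,\hat z))\,\pi(d\hat\epsilon)$.
\item $P_\theta$ is linear in $\theta$.
\item The update $\hat\theta$ in \eqref{eq:bayes} is continuous in $\theta$ wherever $P_\theta(z,\hat z)>0$. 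Where $P_\theta(z,\hat z)=0$, the corresponding term carries weight zero and the integrand is bounded by the integrable envelope $\hat\beta\hat R(u'(\hat Y)+B)$. So that term tends to zero along sequences $\theta_n\to\theta$, even if $\hat\theta$ is discontinuous there.
\item With continuity of $c$ and dominated convergence using the same envelope, $g_c$ is continuous. In the case $u'(0)=\infty$ with $w$ near zero, I restrict to $\hat Y>0$ a.s., so the arguments of $c$ stay in $\SS$.
\item Then $\psi_c$ is continuous. Take $(w_n,z,\theta_n)\to(w,z,\theta)$ and set $\xi_n=Tc(w_n,z,\theta_n)$, which lies in $[0,\sup_n w_n]$. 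Along any convergent subsequence $\xi_n\to\xi'$, pass to the limit in $u'(\xi_n)=\psi_c(\xi_n,w_n,z,\theta_n)$. This works directly when $\xi'>0$, or when $u'(0)<\infty$ using continuity of $u'$ on $[0,\infty)$.
\item If $u'(0)=\infty$ and $\xi'=0$, then the left side blows up while the right side stays bounded, which rules this case out.
\item Uniqueness in Lemma~\ref{lm:welldef_T} then gives $\xi'=Tc(w,z,\theta)$, so the whole sequence converges.
\end{itemize}
Together the four properties give $Tc\in\cC$.
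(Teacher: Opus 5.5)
Your proof is correct and follows the same route as the paper. You check the same four properties, use the same contradiction argument via $\psi_c$ for monotonicity, and bound the marginal utility with the same envelope $\hat\beta\hat R(u'(\hat Y)+B)$. For continuity you use the same two stages: joint continuity of $g_c$ on $G$, then continuity of the implicitly defined root.

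The differences are only in how you carry out the continuity step. You use dominated convergence and a direct subsequence-plus-uniqueness argument, where the paper uses a Fatou bound for varying measures and cites a parametric continuity theorem. You also handle explicitly two points the paper's proof leaves implicit: the $\theta$-dependence of the Bayes update, including states with $P_\theta(z,\hat z)=0$, and the uniformity of the bound over $(z,\theta)$.
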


\begin{proof}
	Fix $c \in \cC$ and let $\psi_c$ and $g_c$ be defined as in \eqref{eq:psi_c} and \eqref{eq:g_c}, respectively.
	
	\textbf{Step~1.} We show that $Tc$ is continuous. To apply a standard 
	fixed point parametric continuity result, such as Theorem~B.1.4 of 
	\cite{stachurski2009economic}, we first demonstrate that $\psi_c$ is jointly continuous on the set $G$ defined in \eqref{eq:G_set}. This will hold if $g_c$ is jointly continuous on $G$. For any $\{ (\xi_n, w_n, z_n,\theta_n) \}$ and $(\xi, w, z,\theta)$ in $G$ with $(\xi_n, w_n, z_n,\theta_n) \to (\xi, w, z,\theta)$, we need to show that $g_c(\xi_n, w_n, z_n,\theta_n) \to g_c(\xi, w, z,\theta)$. To this end, let
	\begin{align*}
		&h_1 (\xi, w, \hat{Z}, \hat\theta, \hat{\epsilon}), \,
		h_2 (\xi, w, \hat{Z}, \hat\theta, \hat{\epsilon})  \\
		&:= \hat{\beta} \hat{R} 
		(u'(\hat Y) + B) \pm 
		\hat{\beta} \hat{R} 
		(u'\circ c)(\hat{R} \left( w - \xi \right) + \hat{Y}, \hat{Z},\hat \theta),
	\end{align*}
	where $\hat{\beta} := \beta (\hat{Z}, \hat{\epsilon})$, 
	$\hat{R} := R (\hat{Z}, \hat{\epsilon})$, and 
	$\hat{Y} := Y (\hat{Z}, \hat{\epsilon})$ as defined in 
	\eqref{eq:law_exog}. Then $h_1$ and $h_2$ are continuous in 
	$(\xi, w, \hat{Z})$ by the continuity of $c$  and nonnegative by 
	\eqref{eq:uppbd_ruprmc}. By the Fatou's lemma and Theorem~1.1 of \cite{feinberg2014fatou}, we have
	\begin{align*}
		\int \sum_{\hat z \in \ZZ}
		h_i ( \xi, w, \hat{z}, \hat\theta, \hat{\epsilon})
		P_\theta(z,\hat{z}) \pi(\diff \hat{\epsilon}) 
		&\leq \int \liminf_{n \to \infty}
		\sum_{\hat z \in \ZZ}
		h_i ( \xi_n, w_n, \hat{z}, \hat\theta, \hat{\epsilon})
		P_{\theta_n}(z_n, \hat{z})
		\pi(\diff \hat{\epsilon})  \\
		& \leq \liminf_{n \to \infty}
		\int 
		\sum_{\hat z \in \ZZ}
		h_i ( \xi_n, w_n, \hat{z}, \hat\theta, \hat{\epsilon})
		P_{\theta_n}( z_n, \hat z)
		\pi (\diff \hat{\epsilon}),
	\end{align*}
	which implies 
	\begin{align*}
		&\liminf_{n \to \infty} \left[
		\pm \EE_{z_n,\theta_n} \hat{\beta} \hat{R} 
		    (u'\circ c)(\hat{R} (w_n - \xi_n) + \hat{Y}, \hat{Z},\hat \theta)
		\right]  \\
		&\geq \left[ 
		\pm \EE_{z,\theta} \hat{\beta} \hat{R}
		    (u'\circ c)(\hat{R} (w - \xi) + \hat{Y}, \hat{Z},\hat \theta)
		\right].
	\end{align*}
	This shows that $g_c$ is continuous, since the inequality above is equivalent to 
	\begin{equation*}
		\liminf_{n \to \infty} g_c(\xi_n, w_n, z_n,\theta_n)   
		\geq g_c(\xi, w, z,\theta) \geq   
		\limsup_{n \to \infty} g_c(\xi_n, w_n, z_n,\theta_n).
	\end{equation*}
	Hence, $\psi_c$ is continuous on $G$, as required. Moreover, since $\xi 
	\mapsto \psi_c(\xi, w, z,\theta)$ takes values in the closed interval 
	\begin{equation*}
		I(w,z,\theta) := \begin{cases}
			\left[
				u'(w), 
				u'(w) + \EE_{z,\theta} \hat{\beta} \hat{R} (u'(\hat{Y}) + B)
			\right], & \text{if } u'(0) = \infty,  \\
			\left[
			    u'(w), u'(0)
			\right], & \text{otherwise.}
		\end{cases}
	\end{equation*}
	In both cases, the correspondence $(w, z) \mapsto I(w,z)$ is nonempty, compact-valued, and continuous, Theorem~B.1.4 of \cite{stachurski2009economic} implies that $Tc$ is continuous on $\SS$.
	
	\textbf{Step 2.} We show that $Tc$ is increasing in $w$. Suppose that for 
	some $(z,\theta) \in \ZZ\times \TT$ and $w_1, w_2 \in (0, \infty)$ with $w_1 < w_2$, we have 
	$\xi_1 := Tc (w_1,z,\theta) > Tc (w_2,z,\theta) =: \xi_2$. Since $c$ is increasing in $w$ 
	by assumption, $\psi_c$ is increasing in $\xi$ and decreasing in $w$. Thus, 
	we have $u'(\xi_1) < u'(\xi_2) = \psi_c(\xi_2, w_2, z,\theta) \leq \psi_c(\xi_1, w_1, z,\theta) = u'(\xi_1)$, which is a contradiction.
	
	\textbf{Step 3.} By Lemma~\ref{lm:welldef_T}, we have $Tc(w,z,\theta) \in [0,w]$ for all $(w,z,\theta) \in \SS$.
	
	\textbf{Step 4.} Through some simple algebra, we can show that
	\begin{align*}
		\left| u'[Tc(w,z,\theta)] - u'(w) \right| 
		&\leq \min \left\{ 
		    \EE_{z,\theta} \hat{\beta} \hat{R} 
		    (u'\circ c) (\hat{R}(w - Tc(w,z,\theta)) + \hat{Y}, 
		    \hat{Z}, \hat \theta), u'(0)
		\right\} \\
		&\leq \min\left\{
		    \EE_{z,\theta} \hat{\beta} \hat{R} (u'(\hat{Y}) + B), u'(0)
		\right\}
	\end{align*}
	for all $(w,z,\theta) \in \SS$. The last term is finite by 
	Assumptions~\ref{a:finite_exp}--\ref{a:unif_sr}.
\end{proof}

To prove Theorem~\ref{t:opt}, let $\hH$ denote the set of all continuous functions $h:\SS \to \RR_+$ such that each $h$ is decreasing in its first argument and $w \mapsto h(w,z,\theta)-u'(w)$ is bounded and nonnegative. For any $h \in \hH$, we define $\tilde{T} h(w,z,\theta)$ as the value $\kappa$ that solves
\begin{equation}\label{eq:kappa}
	\kappa = \min \left\{
		\max \left\{ 
		    \EE_{z,\theta} \hat{\beta} \hat{R} \,
		    h (\hat{R} \,[w - (u')^{-1}(\kappa)] + \hat{Y}, 
		    \hat{Z}, \hat\theta), u'(w) 
		\right\}, u'(0)
	\right\},
\end{equation}
Moreover, we consider $H: \cC \to \hH$ defined by 
$Hc(w,z,\theta) := (u'\circ c)(w,z,\theta)$. The next lemma implies that $\tilde{T}$ is a well-defined self-map on $\hH$, as well as topologically conjugate to $T$.

\begin{lemma}[Topological conjugacy]
	\label{lm:conjug}
	The operator $\tilde{T} \colon \hH \to \hH$ and satisfies 
	$\tilde{T} H = H T$ on $\cC$.
\end{lemma}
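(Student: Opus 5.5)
The plan is to derive everything from the bijection $H\colon \cC\to\hH$, $Hc = u'\circ c$, and the definition of $T$ in \eqref{eq:tio}. The argument has three steps: show $H$ is a bijection, show $\tilde T H = HT$ on $\cC$, and then conclude that $\tilde T$ maps $\hH$ into itself.

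\textbf{Step 1: $H$ is a bijection from $\cC$ onto $\hH$.} Take $c\in\cC$. Then $u'\circ c$ is continuous and takes values in $\RR_+\cup\{\infty\}$; it is finite on $\SS$ because of the bound \eqref{eq:upc_bd} together with $u'(w)<\infty$ for $w>0$. It is decreasing in $w$, since $c$ is increasing and $u'$ is strictly decreasing. Because $c\le w$, we have $u'(c)-u'(w)\ge 0$, and this difference is bounded by \eqref{eq:upc_bd}. Hence $Hc\in\hH$.

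For the converse, take $h\in\hH$ and set $c := (u')^{-1}\circ h$, with the convention $(u')^{-1}(\kappa) = 0$ when $\kappa\ge u'(0)$. One point needs care here: an $h\in\hH$ could exceed $u'(0)$. I would either treat the range restriction as implicit in $\hH$ or note that it is automatic for elements of the form $Hc$. With that settled, $c$ is continuous and increasing in $w$. From $h\ge u'(w)$ we get $c\le w$, and boundedness of $h-u'$ is exactly \eqref{eq:upc_bd}. Injectivity of $H$ follows because $u'$ is strictly monotone.

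\textbf{Step 2: the conjugacy $\tilde T H = HT$.} Fix $c\in\cC$ and $(w,z,\theta)$, and let $\xi = Tc(w,z,\theta)$; this exists and is unique by Lemma~\ref{lm:welldef_T}. Put $\kappa := u'(\xi)$, so that $\xi = (u')^{-1}(\kappa)$. Substituting into \eqref{eq:tio} shows that $\kappa$ solves \eqref{eq:kappa} with $h = Hc$.

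For uniqueness, suppose $\kappa'$ also solves \eqref{eq:kappa}. Then $\kappa'$ lies between $u'(w)$ and $u'(0)$, so $\xi' := (u')^{-1}(\kappa')$ is well defined and belongs to $[0,w]$. This $\xi'$ satisfies \eqref{eq:tio}, and Lemma~\ref{lm:welldef_T} then forces $\xi'=\xi$, hence $\kappa'=\kappa$. Therefore $\tilde T(Hc) = u'\circ Tc = H(Tc)$.

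\textbf{Step 3: $\tilde T$ is a well-defined self-map on $\hH$.} Any $h\in\hH$ can be written as $h = Hc$ by Step 1. Step 2 gives $\tilde T h = H(Tc)$, and Lemma~\ref{lm:self_map} gives $Tc\in\cC$, so $\tilde T h\in H(\cC)=\hH$. This also yields existence and uniqueness of the solution $\kappa$ for every $h$.

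The main obstacle is the surjectivity argument in Step 1, specifically the boundary cases. One is when $u'(0)<\infty$ and $h$ might exceed $u'(0)$. The other is $w=0$ when $u'(0)<\infty$, where $u'(w)=u'(0)$ and we need $c(0)=0$. I would resolve both through the min with $u'(0)$ and the convention for $(u')^{-1}$ stated above.
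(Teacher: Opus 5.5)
Your route is the paper's: check $\tilde T H = HT$ pointwise by substituting $\kappa = u'(Tc(w,z,\theta))$ into \eqref{eq:kappa}, then get the self-map property from $\tilde T = HTH^{-1}$ and Lemma~\ref{lm:self_map}. You are more careful than the paper, which simply asserts that $H$ is a bijection and never checks that \eqref{eq:kappa} has a unique solution. Your Step~2 and the case $u'(0)=\infty$ of Step~1 are fine.

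However, the boundary problem you flag is a genuine gap, and the paper's one-line argument has it too. Your convention does not close it. Suppose $u'(0)<\infty$. For any $h\in\hH$ and $\gamma>0$, the function $h+\gamma$ belongs to $\hH$, yet $h(0,z,\theta)+\gamma \geq u'(0)+\gamma > u'(0)$. So $h+\gamma$ is not of the form $u'\circ c$, and $H$ is not onto $\hH$. With $(u')^{-1}(\kappa)=0$ for $\kappa\ge u'(0)$ you only get $H\big((u')^{-1}\circ h\big)=\min\{h,u'(0)\}\neq h$. Hence the first sentence of your Step~3 (``any $h\in\hH$ can be written as $h=Hc$'') fails for such $h$, and so does $\tilde T = HTH^{-1}$ on all of $\hH$. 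Shrinking $\hH$ to $\{h\le u'(0)\}$ would save the lemma in isolation, but it is not a harmless repair. Lemma~\ref{lm:discounting} and the Blackwell-type argument in the proof of Theorem~\ref{t:opt} apply $\tilde T$ to exactly these functions $h+\gamma$.

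The fix is to handle $h\in\hH\setminus H(\cC)$ directly rather than through $H^{-1}$. Any solution $\kappa$ of \eqref{eq:kappa} lies in $[u'(w),u'(0)]$, so $\xi:=(u')^{-1}(\kappa)\in[0,w]$ is well defined. In terms of $\xi$, \eqref{eq:kappa} becomes \eqref{eq:tio} with $u'\circ c$ replaced by $h$. The proofs of Lemmas~\ref{lm:welldef_T} and~\ref{lm:self_map} use only three properties of the function inside the expectation: it is continuous, it is decreasing in its first argument, and it lies between $u'(w)$ and $u'(w)+B$. Every $h\in\hH$ has these properties. Rerunning those proofs with $h$ in place of $u'\circ c$ gives a unique solution $\kappa$. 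It also shows that $\tilde Th$ is continuous, decreasing in $w$, and satisfies $0\le \tilde Th(w,z,\theta)-u'(w)\le \max_{\hat z\in\ZZ}\EE_{\hat z}\beta R\,(u'(Y)+B)<\infty$, so $\tilde Th\in\hH$. Together with your Step~2 on $H(\cC)$, this proves the lemma for $\hH$ as it is actually defined and used later.
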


\begin{proof}
	Pick any $c \in \cC$ and $(w,z,\theta) \in \SS$. Let $\xi := Tc(w,z,\theta)$. By definition, $\xi$ solves
	\begin{equation}
		\label{eq:Tc_eq}
		u'(\xi) = \min\left\{
			\max \left\{ 
				\EE_{z,\theta} \hat{\beta} \hat{R}
				(u'\circ c)(\hat{R} \left(w - \xi \right) + \hat{Y}, 
				\hat{Z}, \hat\theta), u'(w) 
			\right\}, u'(0)
		\right\}.
	\end{equation}
	We need to show that $HTc$ and $\tilde{T} Hc$ evaluate to the same number 
	at $(w,z,\theta)$. In other words, we need to verify that $u'(\xi)$ is the 
	solution to
	\begin{equation*}
		\kappa = \min \left\{
			\max \left\{ 
				\EE_{z,\theta} \hat \beta \hat R 
				(u'\circ c)(\hat{R} \, [w - (u')^{-1} (\kappa)] + \hat Y, 
				\hat{Z}, \hat\theta), u'(w) 
			\right\}, u'(0)
		\right\}.
	\end{equation*}
	But this follows immediately from \eqref{eq:Tc_eq}. Hence, we have shown that
	$\tilde{T} H = H T$ on $\cC$. Since $H \colon \cC \to \hH$ is a bijection,
	we have $\tilde{T} = HT H^{-1}$. Moreover, Proposition~\ref{lm:self_map} 
	ensures that $T \colon \cC \to \cC$, and hence $\tilde{T} \colon \hH \to \hH$. 
	This completes the proof.
\end{proof}

\begin{lemma}[Order preserving]
	\label{lm:order_preserving}
	For all $h_1, h_2\in \hH$ with $h_1\leq h_2$, we have
	$\tilde T h_1 \leq \tilde T h_2$.
\end{lemma}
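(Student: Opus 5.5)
Fix $h_1,h_2\in\hH$ with $h_1\leq h_2$ and a state $(w,z,\theta)\in\SS$. Let $\kappa_j:=\tilde T h_j(w,z,\theta)$ for $j=1,2$. We argue by contradiction: suppose $\kappa_1>\kappa_2$.

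The key structural fact is how the right-hand side of \eqref{eq:kappa} depends on $\kappa$ and on $h$. Define
\begin{equation*}
	\Phi_h(\kappa) := \min\left\{\max\left\{\EE_{z,\theta}\hat\beta\hat R\, h\big(\hat R[w-(u')^{-1}(\kappa)]+\hat Y,\hat Z,\hat\theta\big),\,u'(w)\right\},\,u'(0)\right\}.
\end{equation*}
Two monotonicity properties hold.
\begin{enumerate}
	\item[(i)] $\Phi_h$ is nonincreasing in $\kappa$. Indeed, $(u')^{-1}$ is decreasing because $u''<0$, so a larger $\kappa$ gives larger savings $w-(u')^{-1}(\kappa)$ and hence larger next-period wealth, since $\hat R>0$. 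Because $h$ is decreasing in its first argument, the expectation falls, and the $\max$ and $\min$ operations preserve this monotonicity.
	\item[(ii)] $\Phi_h$ is pointwise increasing in $h$. This follows because $\hat\beta\hat R\geq 0$ and the expectation, $\max$ and $\min$ are all monotone operations.
\end{enumerate}

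Chaining these facts with the fixed-point equations gives
\begin{equation*}
	\kappa_1=\Phi_{h_1}(\kappa_1)\leq \Phi_{h_1}(\kappa_2)\leq \Phi_{h_2}(\kappa_2)=\kappa_2,
\end{equation*}
where the first inequality uses (i) together with $\kappa_1>\kappa_2$, and the second uses (ii). This contradicts $\kappa_1>\kappa_2$, so $\kappa_1\leq\kappa_2$. Since the state $(w,z,\theta)$ was arbitrary, $\tilde T h_1\leq\tilde T h_2$ pointwise.

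The only delicate point is that $\kappa_1,\kappa_2$ must be well-defined solutions lying in the domain where $(u')^{-1}$ makes sense, i.e.\ in $[u'(w),u'(0)]$, with $(u')^{-1}(\kappa)\in[0,w]$. Existence and uniqueness follow from Lemma~\ref{lm:conjug}, since $\tilde T$ maps $\hH$ into $\hH$: writing $h_j=Hc_j$ with $c_j=(u')^{-1}\circ h_j\in\cC$, Lemma~\ref{lm:welldef_T} gives the unique solution $\kappa_j=u'(Tc_j(w,z,\theta))$. Finiteness of the expectations, needed so that the inequalities are not between infinite quantities, follows from the bound \eqref{eq:uppbd_ruprmc} together with Assumption~\ref{a:finite_exp}.

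When $u'(0)=\infty$, the outer $\min$ is vacuous and every step above goes through unchanged.
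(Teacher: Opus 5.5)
Your proof is correct and is essentially the paper's argument: assume $\kappa_1>\kappa_2$ and chain $\kappa_1=\Phi_{h_1}(\kappa_1)\le\Phi_{h_1}(\kappa_2)\le\Phi_{h_2}(\kappa_2)=\kappa_2$. The first step uses that $(u')^{-1}$ is decreasing and $h_1$ is decreasing in wealth, and the second uses $h_1\le h_2$. Your extra paragraph on well-definedness goes beyond what the paper writes; note that $(u')^{-1}\circ h_j\in\cC$ tacitly requires $h_j\le u'(0)$ when $u'(0)<\infty$, which is the same implicit restriction behind the paper's claim that $H$ is a bijection onto $\hH$.
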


\begin{proof}
	Let $h_1,h_2 \in \hH$ with $h_1\leq h_2$. Suppose to the contrary that there exists $(w,z,\theta) \in \SS$ such that $\kappa_1 := \tilde Th_1(w,z,\theta) > Th_2(w,z,\theta) =:\kappa_2$. Then by the monotonicity of elements in $\hH$, we have
	\begin{align*}
		\kappa_1 &= \min\left\{
			\max \left\{
			    \EE_{z,\theta} \hat \beta \hat R h_1\left( 
			        \hat R [w - (u')^{-1}(\kappa_1)] + \hat Y, 
			        \hat Z, \hat \theta 
			    \right), u'(w)
			\right\}, u'(0)
		\right\}  \\
		&\leq \min\left\{
			\max \left\{
			    \EE_{z,\theta} \hat \beta \hat R h_1\left( 
			        \hat R [w - (u')^{-1}(\kappa_2)] + \hat Y, 
			        \hat Z, \hat \theta 
			    \right), u'(w)
			\right\}, u'(0)
		\right\}  \\
		&\leq \min\left\{
			\max \left\{
			    \EE_{z,\theta} \hat \beta \hat R h_2\left( 
			        \hat R [w - (u')^{-1}(\kappa_2)] + \hat Y, 
			        \hat Z, \hat \theta 
			    \right), u'(w)
			\right\}, u'(0)
		\right\} = \kappa_2.
	\end{align*}
	This is a contradiction. Hence the stated claim holds.
\end{proof}

\begin{lemma}[Discounting]\label{lm:discounting}
	Recall $f^\alpha_t (z,\theta)$ defined in \eqref{eq:f_func}.
	For all $h\in \hH$, $\gamma \in \RR_+$, and $k\in \NN$, we have
	\begin{equation}\label{eq:discounting}
		\tilde T^k (h+\gamma)(w,z,\theta) \leq \tilde T^k h(w,z,\theta) + \gamma f_k^1(z,\theta).
	\end{equation}
\end{lemma}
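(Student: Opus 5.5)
The plan is induction on $k$, but on a statement that is stronger than \eqref{eq:discounting}: the bound is proved for a constant shift $\gamma$ and, more generally, for shifts that are functions of $(z,\theta)$. This is needed because after one step the constant $\gamma$ becomes a function $\gamma f_1^1(z,\theta)$ of the exogenous state and the belief.

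\textbf{Core one-step claim.} For any $h\in\hH$ and any bounded nonnegative function $g$ of $(z,\theta)$,
\begin{equation*}
	\tilde T(h+g)(w,z,\theta)\le \tilde T h(w,z,\theta)+\EE_{z,\theta}\hat\beta\hat R\, g(\hat Z,\hat\theta).
\end{equation*}
Here $h+g$ still lies in $\hH$, since $g$ is bounded, nonnegative, and free of $w$.

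To prove it, let $\kappa_1:=\tilde T(h+g)(w,z,\theta)$ and $\kappa_2:=\tilde Th(w,z,\theta)$. By Lemma~\ref{lm:order_preserving}, $\kappa_1\ge\kappa_2$, so $(u')^{-1}(\kappa_1)\le(u')^{-1}(\kappa_2)$. The next-period wealth $\hat R[w-(u')^{-1}(\kappa_1)]+\hat Y$ is therefore at least as large as the one computed with $\kappa_2$. Because $h$ is decreasing in wealth,
\begin{equation*}
	\EE_{z,\theta}\hat\beta\hat R\,(h+g)\bigl(\hat R[w-(u')^{-1}(\kappa_1)]+\hat Y,\hat Z,\hat\theta\bigr)
	\le \EE_{z,\theta}\hat\beta\hat R\, h\bigl(\hat R[w-(u')^{-1}(\kappa_2)]+\hat Y,\hat Z,\hat\theta\bigr)+\EE_{z,\theta}\hat\beta\hat R\, g(\hat Z,\hat\theta).
\end{equation*}
The maps $x\mapsto\min\{\max\{x,a\},b\}$ are $1$-Lipschitz and nondecreasing, so $\kappa_1\le\kappa_2+\EE_{z,\theta}\hat\beta\hat R\,g(\hat Z,\hat\theta)$.

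\textbf{Induction.} Write the recursion for $f$ as $f_{j+1}^1(z,\theta)=\EE_{z,\theta}\hat\beta\hat R\, f_j^1(\hat Z,\hat\theta)$, with $f_0^1\equiv1$; this is how $f_{t+1}^\alpha$ was unfolded in the proof of Proposition~\ref{pr:ev_contr}. The claim to prove by induction on $m$ is that, for every $h\in\hH$ and every $m\le k$,
\begin{equation*}
	\tilde T^{m}(h+\gamma f^1_{j})\le \tilde T^m h+\gamma f^1_{j+m}.
\end{equation*}
\begin{itemize}
	\item Base case $m=1$: apply the one-step claim with $g=\gamma f_j^1$.
	\item Inductive step: first use the one-step claim to get $\tilde T(h+\gamma f_j^1)\le \tilde Th+\gamma f^1_{j+1}$. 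Then use order preservation of $\tilde T^{m-1}$, which follows from iterating Lemma~\ref{lm:order_preserving}, so that $\tilde T^{m}(h+\gamma f_j^1)\le\tilde T^{m-1}(\tilde Th+\gamma f^1_{j+1})$. Finally apply the inductive hypothesis to the right-hand side with $h$ replaced by $\tilde Th\in\hH$ (Lemma~\ref{lm:conjug}).
\end{itemize}
Taking $j=0$ and $m=k$ gives \eqref{eq:discounting}.

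\textbf{Admissibility of the shifts.} Each shift must be bounded and nonnegative, so that $h+\gamma f_j^1\in\hH$. Boundedness follows from Proposition~\ref{pr:ev_contr}, which gives $f_t^1\le C\eta^t$ uniformly in $(z,\theta)$. Continuity in $(z,\theta)$ is also required, because $\hH$ consists of continuous functions. I would obtain it from the representation \eqref{eq:exp_mat}: the Bayes map \eqref{eq:bayes} is continuous wherever the denominator is positive, and realizations with $P_\theta(z,\hat z)=0$ have probability zero under $P_\theta$, so they do not enter the expectation.

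\textbf{Main obstacle.} I expect this continuity-in-$\theta$ step to be the only delicate point. If it fails, the fallback is to enlarge $\hH$ so that it admits bounded measurable perturbations in $(z,\theta)$, and to check that $\tilde T$ and Lemma~\ref{lm:order_preserving} remain valid on the enlarged class. Those arguments used only monotonicity in $w$ and measurability, so this extension should go through.
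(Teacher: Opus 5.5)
Your proof is correct, but the induction is organized differently from the paper's. Your one-step estimate is the paper's $k=1$ argument with $\gamma$ replaced by a function $g(z,\theta)$.

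\textbf{How the paper does the inductive step.} It keeps the constant shift and peels off the \emph{last} application of $\tilde T$. Write $\psi(w,z,\theta;h,g)$ for the right-hand side of \eqref{eq:kappa} with $\kappa$ replaced by $g(w,z,\theta)$. The paper then:
\begin{enumerate}
\item writes $\tilde T^{k+1}(h+\gamma)=\psi(\cdot\,;\tilde T^k(h+\gamma),\tilde T^{k+1}(h+\gamma))$;
\item inserts the inductive bound $\tilde T^k(h+\gamma)\le\tilde T^kh+\gamma f^1_k$ into the first slot;
\item extracts $\gamma\,\EE_{z,\theta}\hat\beta\hat R\, f^1_k(\hat Z,\hat\theta)=\gamma f^1_{k+1}(z,\theta)$ by the $1$-Lipschitz property;
\item replaces the second slot by the smaller $\tilde T^{k+1}h$, which can only increase $\psi$.
\end{enumerate}
Since $\psi$ is defined for arbitrary functions, the state-dependent bound $\tilde T^kh+\gamma f^1_k$ is never fed into $\tilde T$. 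So nothing about the regularity of $f^1_k$ in $\theta$ is needed.

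\textbf{What your route gains and costs.} You peel off the \emph{first} application instead. This yields the more general estimate $\tilde T^m(h+\gamma f^1_j)\le\tilde T^mh+\gamma f^1_{j+m}$. Your hypothesis must be quantified over all $j\ge 0$, since the step invokes it at $j+1$. The price is that you must show $\tilde Th+\gamma f^1_{j+1}\in\hH$ before applying $\tilde T$ and Lemma~\ref{lm:order_preserving} to it.

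\textbf{How to do the continuity check.} The check is true, but do not base it on \eqref{eq:exp_mat}. That representation treats the posteriors $\theta_1,\theta_2,\dots$ as if they did not depend on the realized transitions. The clean route is that Bayes' rule telescopes: the subjective probability of a path $(z_0,z_1,\dots,z_t)$ with $z_0=z$ equals $\sum_i\theta_i\prod_{s=0}^{t-1}P_i(z_s,z_{s+1})$. Hence $f^1_t(z,\theta)=\sum_i\theta_i\,[(P_iD_1)^t\1](z)$, which is affine in $\theta$, so continuous. For each fixed $t$ it is bounded, with no appeal to Proposition~\ref{pr:ev_contr}. Your direct inductive sketch also works once you add one point: terms with $P_{\theta_n}(z,\hat z)\to 0$ vanish because $f^1_{j-1}$ is bounded.
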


\begin{proof}
	For given functions $g,h:\SS \to \RR$, we define
	\begin{align*}
		&\psi(w,z,\theta; h, g) \\
		&:= \min \left\{
		\max\left\{
		\EE_{z,\theta} \hat \beta \hat R
		h(\hat R[w-(u')^{-1}(g(w,z,\theta))] + \hat Y, \hat Z,\hat \theta),
		u'(w)
		\right\}, u'(0)
		\right\}.
	\end{align*}
	Then $\psi$ is increasing in $h$ and decreasing in $g$. Fix $h\in \hH$ and $\gamma\in \RR_+$ and let $h_\gamma(w,z,\theta) := h(w,z,\theta) + \gamma$. By the definition of $\tilde T$, we have 
	\begin{align*}
		\tilde{T}h_\gamma (w,z,\theta) 
		&= \psi(w,z,\theta; h_\gamma, \tilde T h_\gamma) 
		\leq \psi(w,z,\theta; h, \tilde T h_\gamma) + \gamma \EE_{z,\theta}\beta_1 R_1  \\
		&\leq \psi(w,z,\theta; h, \tilde T h) + \gamma \EE_{z,\theta}\beta_1 R_1.
	\end{align*}
	Here, the first inequality is elementary, while the second inequality follows from Lemma~\ref{lm:order_preserving} and the fact that $h \leq h_\gamma$. Hence \eqref{eq:discounting} holds for $k=1$. Suppose it holds for arbitrary $k$. It remains to check that it holds for $k+1$. By the induction hypothesis, the monotonicity of $\tilde T$, and the Markov property, 
	\begin{align*}
		\tilde T^{k+1} h_\gamma(w,z,\theta) 
		&= \psi(w,z,\theta; \tilde T^k h_\gamma, \tilde T^{k+1} h_\gamma) 
		\leq \psi(w,z,\theta; \tilde T^k h + \gamma f_k^1, \tilde T^{k+1} h_\gamma) \\
		&\leq \psi(w,z,\theta; \tilde T^k h, \tilde T^{k+1} h_\gamma) 
		+ \gamma \EE_{z,\theta} \beta_1 R_1 
		\EE_{Z_1,\theta_1} f_k^1(Z_1, \theta_1)\\
		&\leq \psi(w,z,\theta; \tilde T^k h, \tilde T^{k+1} h) 
		+ \gamma \EE_{z,\theta} \beta_1 R_1 \cdots \beta_{k+1} R_{k+1} \\
		&= \tilde T^{k+1} h(w,z,\theta) + \gamma f_{k+1}^1(z,\theta).
	\end{align*}
	Hence \eqref{eq:discounting} is verified by induction.
\end{proof}

\begin{proof}[Proof of Theorem \ref{t:opt}]
	For $h_1,h_2\in \hH$, we define 
	\begin{equation*}
		d(h_1,h_2) 
		:= \sup_{(w,z,\theta) \in \SS}
		\left|h_1(w,z,\theta) - h_2(w,z,\theta)\right|.
	\end{equation*}
	Based on Proposition~\ref{pr:ev_contr}, Lemmas~\ref{lm:order_preserving}--\ref{lm:discounting}, and the sufficient conditions for contraction mapping proposed by \cite{blackwell1965discounted}, there exist an $n\in \NN$ and $\lambda \in (0,1)$ such that $\tilde T^n$ is a contraction mapping on $(\hH, d)$. In particular, $d (\tilde T^n h_1, \tilde T^n h_2) \leq \lambda d(h_1, h_2)$ for all $h_1,h_2\in \hH$, and there exists a 
	unique fixed point $h^*\in \hH$ of $\tilde{T}$. From the topological 
	conjugacy result in Lemma~\ref{lm:conjug}, we have $\tilde{T} = H T 
	H^{-1}$, so there exists a unique fixed point $c^*\in \cC$ of $T$, and 
	claim~(1) is verified. Claim~(2) follows immediately from claim~(1) and Proposition~\ref{pr:suff_foc}.
	
	To see that claim~(3) holds, based on the Banach contraction mapping theorem, it suffices to show that $\rho(T^n c, T^n d) \leq \lambda \rho(c,d)$ for all $c,d\in \cC$ for some $\lambda\in(0,1)$. To this end, pick any $c, d\in \cC$. Based on the definition of $\rho$, the topological conjugacy result in Lemma~\ref{lm:conjug}, and the contraction mapping property of $\tilde T^n$, we have
	\begin{equation*}
		\rho(T^nc, T^n d) 
		= d (HT^n c, HT^n d)
		= d (\tilde T^n Hc, \tilde T^n Hd)
		\leq \lambda d(Hc, Hd)
		= \lambda \rho(c,d).
	\end{equation*}
	Hence, claim~(3) is verified. The proof is now complete.
\end{proof}

\section{Proof of Section~\ref{s:ifp_properties} Results}

Through some simple algebra, we can verify the following fundamental result, which is frequently used in subsequent proofs.

\begin{lemma}\label{lm:basic}
	Let $a, b_1, b_2$, and $b_3$ be real numbers satisfying $a = \min\{\max\{b_1,b_2\}, b_3\}$. The following statements are true:
	\begin{enumerate}
		\item If $a > b_2$, then $a = \min\{b_1, b_3\}$.
		\item If $a < b_3$, then $a = \max\{b_1, b_2\}$.
		\item If $b_2 < a < b_3$, then $a = b_1$.
	\end{enumerate}
\end{lemma}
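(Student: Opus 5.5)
The statement is elementary, so I will argue by cases on the position of $\max\{b_1,b_2\}$ relative to $b_3$, using only the identity $a=\min\{\max\{b_1,b_2\},b_3\}$ and the fact that $\min\{x,y\}$ equals one of $x,y$ and is no larger than either. I will write $m:=\max\{b_1,b_2\}$, so that $a=\min\{m,b_3\}$ and hence $a\in\{m,b_3\}$, $a\le m$ and $a\le b_3$.

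For part (1), suppose $a>b_2$. If $b_1\le b_2$, then $m=b_2$ and $a=\min\{b_2,b_3\}\le b_2$, which contradicts $a>b_2$. Hence $b_1>b_2$, so $m=b_1$ and $a=\min\{b_1,b_3\}$.

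For part (2), suppose $a<b_3$. Since $a=\min\{m,b_3\}$ and $a\neq b_3$, the minimum must be attained at $m$. Therefore $a=m=\max\{b_1,b_2\}$.

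Part (3) combines the two. From $a<b_3$ and part (2) we get $a=\max\{b_1,b_2\}$. Since $a>b_2$, this maximum is not $b_2$, so $a=b_1$. Alternatively, part (1) gives $a=\min\{b_1,b_3\}$, and $a<b_3$ forces $a=b_1$.

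There is no real obstacle here. The only care needed is in part (1), where ruling out $b_1\le b_2$ is what lets us drop the $\max$. The lemma is used later when the first order condition \eqref{eq:foc} holds with strict inequalities relative to the corner values $u'(w)$ and $u'(0)$. It lets us replace the $\min$–$\max$ expression by the interior Euler equation $u'(c)=\EE_{z,\theta}\hat\beta\hat R(u'\circ c)(\hat w,\hat Z,\hat\theta)$.
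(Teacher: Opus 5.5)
Your case analysis is correct, and it fills in exactly the ``simple algebra'' that the paper invokes without writing out. The one step needing care, ruling out $b_1 \le b_2$ in part (1), is handled properly.
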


\begin{proof}[Proof of Proposition~\ref{pr:monotonea}]
	We define
	\begin{equation}\label{eq:cC_0}
		\cC_0 = \left\{
		c \in \cC \colon 
		w \mapsto w - c(w,z,\theta) 
		\text{ is increasing for all } 
		(z,\theta) \in \ZZ \times \Theta
		\right\}.
	\end{equation}
	Since $T$ is an eventual contraction on the complete metric space $(\cC,\rho)$ by Theorem~\ref{t:opt}, to prove the stated claim, it suffices to show that $\cC_0$ is a closed subset of $\cC$ and that $Tc \in \cC_0$ for all $c \in \cC_0$.
	
	To see that $\cC_0$ is closed, let $\{c_n\}$ be a sequence in $\cC_0$ and suppose $c \in \cC$ satisfies $\rho(c_n, c) \to 0$. We claim that $c \in 
	\cC_0$. To see this, for each $n$, the map $w \mapsto w - c_n(w,z,\theta)$ is increasing for all $z$, and the convergence $\rho(c_n,c) \to 0$ implies pointwise convergence $c_n (w,z,\theta) \to c(w,z,\theta)$ for all $(w,z,\theta) \in \SS$. Thus the monotonicity property is preserved in the limit, so $c \in \cC_0$.
	
	Fix $c \in \cC_0$. We now show that $\xi := Tc \in \cC_0$. By 
	Lemma~\ref{lm:self_map}, $\xi \in \cC$, hence it suffices to show 
	that $w \mapsto w - \xi(w,z,\theta)$ is increasing. Suppose not. Then there exist $(z,\theta) \in \ZZ\times \Theta$ and $w_1, w_2 \in [0, \infty)$ such that $w_1 < w_2$ and $w_1 - \xi(w_1, z,\theta) > w_2 - \xi (w_2, z,\theta)$. Since $w_1 - \xi (w_1, z,\theta) \geq 0$, $w_2 - \xi (w_2, z,\theta) \geq 0$ and $\xi(w_1,z,\theta) \leq \xi (w_2, z,\theta)$ by Lemma~\ref{lm:self_map}, we must have $\xi(w_1, z,\theta) < w_1$ and  $\xi(w_1, z,\theta) < \xi(w_2, z,\theta)$. However, by the definition of $T$, the concavity of $u$, and Lemma~\ref{lm:basic}, this implies
	\begin{align*}
		(u'\circ \xi) (w_1, z,\theta) 
		&= \min \left\{
		    \EE_{z,\theta} \hat{\beta} \hat{R} 
		    (u'\circ c)(\hat{R} [w_1 - \xi(w_1,z,\theta)] + \hat{Y}, 
		    \hat{Z},\hat{\theta}), u'(0)
		\right\} \\
		&\leq \min\left\{
		    \EE_{z,\theta} \hat{\beta} \hat{R} 
		    (u'\circ c)(\hat{R} [w_2 - \xi(w_2,z,\theta)] + \hat{Y}, 
		    \hat{Z},\hat \theta), u'(0)
		\right\} \\
		&\leq (u'\circ \xi) (w_2, z,\theta),
	\end{align*}
	which gives $\xi(w_1, z,\theta) \geq \xi(w_2, z,\theta)$, yielding a contradiction, hence $w \mapsto w - \xi (w,z,\theta)$ is increasing and $T$ is a self-map on $\cC_0$. The proof is now complete.
\end{proof}

\begin{proof}[Proof of Proposition~\ref{pr:monotoneY}]
	Let $T_j$ be the time iteration operator for the income process $j$
	as defined in Lemma~\ref{lm:self_map}. It suffices to show $T_1c
	\leq T_2c$ for all $c \in \cC$. To see this, by the monotonicity of $T_j$, 
	we have $T_jc_1 \leq T_jc_2$ whenever $c_1 \leq c_2$. Thus, if $T_1c \leq 
	T_2c$ for all $c \in \cC$, then for any $c_1,c_2\in \cC$ with $c_1\leq 
	c_2$, $T_1c_1 \leq T_1c_2 \leq T_2c_2$. Iterating from any $c\in \cC$ and 
	using Theorem~\ref{t:opt}, we obtain 
	$c_1^* = \lim_{n \to \infty}(T_1)^nc \leq \lim_{n \to \infty}(T_2)^nc 
	= c_2^*$, which proves the claim once $T_1c \leq T_2c$ is established.
	
	To show that $T_1c \leq T_2c$ for any $c\in \cC$, take any $(w,z,\theta)\in \SS$ and define $\xi_j=(T_jc)(w,z,\theta)$. To show $\xi_1 \leq \xi_2$, suppose on the contrary that $\xi_1 > \xi_2$. Then $\xi_1>0$. Since $c$ is increasing in $w$, it follows from the definition of the time iteration operator in \eqref{eq:tio}, $Y_1 \leq Y_2$, and $u''<0$ that
	\begin{align*}
		u'(\xi_2)>u'(\xi_1)
		&=\max \{\EE_{z,\theta} \hat{\beta} \hat{R}
		    (u'\circ c)(\hat{R}(w - \xi_1) + \hat{Y}_1, \hat{Z}, \hat \theta), 
		u'(w) \} \\
		&\geq \max \{ \EE_{z,\theta} \hat{\beta} \hat{R} 
		    (u'\circ c)(\hat{R}(w - \xi_2) + \hat{Y}_2, \hat{Z}, \hat \theta), 
		u'(w) \} 
		\geq u'(\xi_2),
	\end{align*}
	which is a contradiction. This completes the proof.
\end{proof}

\begin{proof}[Proof of Proposition~\ref{pr:binding}]
	Recall that, for all $c \in \cC$, $\xi(w,z,\theta) := Tc(w,z,\theta)$ solves
	\begin{align}
		\label{eq:T_opr_general}
		&(u'\circ \xi)(w,z,\theta) \nonumber \\
		&= \min \left\{
		\max \left\{ 
		    \EE_{z,\theta} \hat{\beta} \hat{R} 
		    (u'\circ c) (\hat{R} [w - \xi(w,z,\theta)] + \hat{Y}, \hat{Z},\hat \theta), u'(w) 
		\right\}, u'(0)
		\right\}.
	\end{align}
	For each $(z,\theta) \in \ZZ\times \Theta$ and $c \in \cC$, define
	\begin{equation}\label{eq:a_bar}
		\bar{w}_c (z,\theta) := (u')^{-1} \left[ 
			\min\left\{
			    \EE_{z,\theta} \hat{\beta} \hat{R} 
			    (u'\circ c)(\hat{Y}, \hat{Z},\hat \theta) 
			    , u'(0) 
			\right\}
		\right]
	\end{equation}
	and $\bar{w}(z,\theta) := \bar{w}_{c^*} (z,\theta)$. 
	
	Let $w \leq \bar{w}_c (z,\theta)$. We claim that $\xi(w,z,\theta) = w$. Suppose to the contrary that $\xi(w,z,\theta) < w$. Then by Lemma~\ref{lm:basic}, \eqref{eq:T_opr_general}, and the monotonicity of $c\in \cC$, we have
	\begin{align*}
		u'(w) &< (u'\circ \xi)(w,z,\theta) 
		= \min \left\{ 
		    \EE_{z,\theta} \hat{\beta} \hat{R} 
		    (u'\circ c)(\hat{R} \left[w - \xi(w,z,\theta) \right] + \hat{Y}, \hat{Z},\hat \theta),   
		    u'(0)
		\right\}\\ 
		&\leq \min \left\{
		    \EE_{z,\theta} \hat{\beta} \hat{R} 
		    (u'\circ c)(\hat{Y}, \hat{Z}, \hat\theta),
		    u'(0)
		\right\}
		= u'[\bar{w}_c (z,\theta)].
	\end{align*}
	From this we get $w > \bar{w}_c (z,\theta)$, which is a contradiction. Hence, $\xi(w,z,\theta) = w$.
	
	On the other hand, if $\xi(w,z,\theta) = w$, then by \eqref{eq:T_opr_general}, we have
	\begin{align*}
		u'(w) &= (u'\circ \xi)(w,z,\theta) 
		= \min \left\{
		    \max\left\{
		        \EE_{z,\theta} \hat \beta \hat R 
		        (u'\circ c) (\hat Y, \hat Z,\hat \theta),
		        u'(w)
		    \right\}, u'(0)
		\right\}  \\
		&\geq \min \left\{
		    \EE_{z,\theta} \hat \beta \hat R 
		    (u'\circ c) (\hat Y, \hat Z,\hat \theta), u'(0)
		\right\} 
		= u'[\bar w_c(z,\theta)].
	\end{align*}
	Hence, $w \leq \bar{w}_c (z,\theta)$. The first claim is verified. The second claim follows immediately from the first claim and the fact that $c^*$ is the unique fixed point of $T$ in $\cC$.
\end{proof}

\begin{proof}[Proof of Proposition~\ref{pr:lb}]
	Let $\bar s$ satisfy the conditions of the proposition and define
	\begin{equation*}
		\cC_1 := \{c \in \cC: c(w,z,\theta) \geq (1-\bar s) w
		\quad \text{for all } (w,z,\theta) \in \SS\}.
	\end{equation*}
	Since $T$ is an eventual contraction mapping on $\cC$ by Theorem~\ref{t:opt}, it suffices to show that $\cC_1$ is a closed subset of $\cC$ and that $T:\cC_1 \to \cC_1$.
	
	To see that $\cC_1$ is closed, let $\{c_n\}$ in $\cC_1$ and $c\in \cC$ such that $\rho(c_n,c) \to 0$. We need to verify $c\in \cC_1$. This obviously holds since $c_n(w,z,\theta) \geq (1-\bar s) w$ for all $(w,z,\theta)\in \SS$ and $\rho(c_n,c) \to 0$ implies $c_n(w,z,\theta) \to c(w,z,\theta)$ for all $(w,z,\theta)\in \SS$.
	
	To see that $T:\cC_1 \to \cC_1$, fix $c\in\cC_1$. We have $Tc \in \cC$ since $T$ is a self-map on $\cC$. It remains to show that $\xi := Tc$ satisfies $\xi (w,z,\theta) \geq (1 - \bar s) w$ for all $(w,z,\theta) \in \SS$. Suppose $\xi(w,z,\theta) < (1 - \bar s) w$ for some $(w,z,\theta) \in \SS$. Then by Lemma~\ref{lm:basic}, we have
	\begin{align*}
		u'((1 - \bar s) w) 
		&< (u' \circ \xi)(w, z,\theta)  \\
		&= \min\left\{
		\EE_{z,\theta} \hat{\beta} \hat{R} 
		\left( u' \circ c \right) 
		(\hat{R} \left[w - \xi(w,z,\theta) \right] + \hat{Y}, 
		\hat{Z}, \hat \theta), u'(0)
		\right\}  \\
		&\leq \min\left\{
		    \EE_{z,\theta} \hat{\beta} \hat{R}    
		    u'((1 - \bar s) \hat{R} \left[w - \xi(w,z,\theta) \right] 
		    + (1 - \bar s) \hat{Y}), u'(0)
		\right\}  \\
		&\leq \EE_{z,\theta} \hat{\beta} \hat{R} 
		u' [(1 - \bar s) \hat{R} \bar s w + (1 - \bar s) \hat{Y} ]    
		\leq \EE_{z,\theta} \hat{\beta} \hat{R} 
		u'[\bar s \hat R (1 - \bar s) w],
	\end{align*}
	which contradicts \eqref{eq:sbar} since $(1 - \bar s) w > 0$ and $(z,\theta) \in \ZZ\times \TT$. As a result, $\xi(w,z,\theta) \geq (1 - \bar s) w$ for all $(w,z,\theta) \in \SS$ and we conclude that $Tc \in \cC_1$.
\end{proof}

\begin{proof}[Proof of Proposition~\ref{pr:concavity}]
	Recall $\cC_0$ defined in \eqref{eq:cC_0}. Define
	\begin{equation*}
			\cC_2 := \left\{ c \in \cC_0 \colon 
			    w \mapsto c(w,z,\theta) \text{ is concave for all }
			    (z,\theta) \in \ZZ \times \Theta
			\right\}.
		\end{equation*}
	Since $T$ is an eventual contraction on $\cC$ by Theorem~\ref{t:opt}, and $\cC_0$ is a closed subset of $\cC$ and $T:\cC_0 \to \cC_0$ as was shown in the proof of Proposition~\ref{pr:monotonea}, to show that $c^*(w,z,\theta)$ is concave in $w$, it suffices to verify that $\cC_2$ is a closed subset of $\cC_0$ and $T:\cC_2 \to \cC_2$. The second claim follows immediately once the first claim is verified, since the concavity of $c^*$ in $w$ implies that $w \mapsto c^*(w,z,\theta)/w$ is a decreasing function, which has a natural lower bound zero.
	
	Clearly, $\cC_2$ is a closed subset of $\cC_0$ since limits of concave functions are concave. To see that $T:\cC_2 \to \cC_2$, fix $c \in \cC_2$. Since $T c \in \cC_0$, it remains to show that $w \mapsto \xi(w, z,\theta) := Tc (w,z,\theta)$ is concave for all $(z,\theta) \in \ZZ \times \Theta$. Given $(z,\theta)$, by Proposition~\ref{pr:binding}, we have $\xi(w,z,\theta) = w$ for $w \leq \bar{w}_c(z,\theta)$ and that $\xi (w,z,\theta) < w$ for $w > \bar{w}_c(z,\theta)$.  Since in addition $w \mapsto \xi(w,z,\theta)$ is continuous and increasing, to show the concavity of $\xi$ with respect to $w$, it suffices to show that $w\mapsto \xi (w,z,\theta)$ is concave on $(\bar{w}_c(z,\theta), \infty)$.
	
	Suppose there exist some $(z,\theta) \in \ZZ \times \Theta$, $\alpha \in [0,1]$, and $w_1, w_2 \in (\bar{w}_c (z,\theta), \infty)$ such that, for $w:=(1-\alpha) w_1 + \alpha w_2$, we have
	\begin{equation}
			\label{eq:as_ctdt}
			\xi \left(w, z,\theta \right)
			< (1 - \alpha) \xi(w_1, z,\theta) + \alpha \xi(w_2, z,\theta).
		\end{equation}
	With slight abuse of notation, let $\hat w := \hat{R} \left[w - \xi(w, z,\theta) \right] + \hat{Y}$. Applying \eqref{eq:as_ctdt} gives 
	\begin{equation*}
			\hat w \geq (\1-\alpha) \hat w_1 + \alpha \hat w_2.
		\end{equation*}
	Then by Proposition~\ref{pr:binding} and noting that consumption is interior, we obtain 
	\begin{align*}
			(u' \circ \xi) \left(w, z,\theta \right)  
			&= \EE_{z,\theta} \hat{\beta} \hat{R} 
			    (u' \circ c) (\hat w, \hat{Z}, \hat\theta)    
			\leq \EE_{z,\theta} \hat{\beta} \hat{R} (u' \circ c) 
			\left[
			    (1-\alpha) \hat w_1 + \alpha \hat w_2, \hat{Z}, \hat\theta
			\right].
		\end{align*}
	Substituting the CRRA utility function and using the concavity of $c$ yield
	\begin{align*}
		\xi \left(w, z,\theta \right)
		&\geq (u')^{-1} \left\{
		   \EE_{z,\theta} \hat{\beta} \hat{R} (u' \circ c) 
		   \left[
		       (1-\alpha) \hat w_1 + \alpha \hat w_2, \hat{Z}, \hat\theta
		   \right]
		\right\}  \\
		&= \left\{ 
			\EE_{z,\theta} \hat{\beta} \hat{R} 
			c \left[
				(1 - \alpha) \hat w_1 + \alpha \hat w_2, 
				\hat{Z}, \hat\theta 
			\right]^{-\gamma} 
		\right\}^{-1/\gamma}  \\
		&\geq \left\{ 
			\EE_{z,\theta} \hat{\beta} \hat{R} 
			\left[
				(1 - \alpha) c(\hat w_1,\hat Z, \hat\theta) 
				+ \alpha c (\hat w_2,\hat Z, \hat\theta)
			\right]^{-\gamma} 
		\right\}^{-1/\gamma} \\
		&= \left\{ 
			\EE_{z,\theta} 
			\left[
				(1 - \alpha) h(\hat w_1, \hat \nu) 
				+ \alpha h(\hat w_2, \hat \nu) 
			\right]^{-\gamma} 
		\right\}^{-1/\gamma},
	\end{align*}
	where for $\hat w \in \{\hat w_1, \hat w_2\}$ and $\hat \nu:=(\hat Z,\hat \theta, \hat\beta, \hat R)$, we denote
	\begin{equation*}
		h(\hat w, \hat \nu) := (\hat \beta \hat R)^{-1/\gamma} 
		c(\hat w, \hat Z,\hat \theta).
	\end{equation*}
	Then $\EE_{z,\theta} \left[h(\hat w_i, \hat \nu) \right]^{-\gamma} = \xi(w_i,z,\theta)^{-1/\gamma}$ for $i \in \{1,2\}$ based on the Euler equation. Applying the generalized Minkowski inequality (see, e.g., Theorem~198 of \cite{hardy1952inequalities}, page~146) then gives
	\begin{align*}
		\xi (w, z,\theta) 
		&\geq \left\{ 
		   \EE_{z,\theta} \left[
						(1 - \alpha) h(\hat w_1, \hat \nu) 
					\right]^{-\gamma} 
				\right\}^{-1/\gamma} + 
				\left\{ 
		   \EE_{z,\theta} \left[
		       \alpha h(\hat w_2, \hat \nu) 
		   \right]^{-\gamma} 
		\right\}^{-1/\gamma} \\
		&= (1 - \alpha)  \left\{ 
			\EE_{z,\theta} \left[h(\hat w_1, \hat \nu) \right]^{-\gamma} 
		\right\}^{-1/\gamma} + 
		\alpha \left\{ 
			\EE_{z,\theta} \left[h(\hat w_2, \hat \nu) \right]^{-\gamma} 
		\right\}^{-1/\gamma} \\
		&= (1-\alpha) \xi(w_1,z,\theta) + \alpha \xi(w_2,z,\theta).
	\end{align*}
	which contradicts \eqref{eq:as_ctdt}. Hence, $ w\mapsto \xi(w,z,\theta)$ is concave and $T : \cC_2 \to \cC_2$. The proof is now complete.
\end{proof}

\bibliographystyle{ecta}
\bibliography{os}

\end{document}